\newcommand{\ys}[1]{\textcolor{red}{#1}}
\mathchardef\ordinarycolon\mathcode`\:
\newtheorem{repinner}{Theorem}
\newenvironment{reptheorem}[1]{%
  \begin{repinner}
}{\end{repinner}}
\theoremstyle{plain}
\newtheorem{thm}{Theorem}
\newtheorem{corol}{Corollary}
\newtheorem{lem}{Lemma}
\newtheorem{propos}{Proposition}
\theoremstyle{definition}
\theoremstyle{remark}
\newtheorem{rmk}{Remark}
\def\<{\langle}
\def\x{\boldsymbol{x}}
\def\>{\rangle}
\def\<{\langle}
\renewcommand{\ket}[1]{|#1\rangle}               
\renewcommand{\bra}[1]{\langle #1|}              
\DeclareRobustCommand{\rvdots}{%
  \vbox{
    \baselineskip4\p@\lineskiplimit\z@
    \kern-\p@
    \hbox{.}\hbox{.}\hbox{.}
  }}
\DeclareFontFamily{U}{wncy}{}
    \DeclareFontShape{U}{wncy}{m}{n}{<->wncyr10}{}
    \DeclareSymbolFont{mcy}{U}{wncy}{m}{n}
    \DeclareMathSymbol{\Sh}{\mathord}{mcy}{"58} 
\begin{document}
\allowdisplaybreaks

\title{Optimal Coherent Quantum Phase Estimation via Tapering}

\affiliation{Department of Computer Science, Cornell University, Ithaca, New York, 14850, USA.}
\affiliation{Joint Center for Quantum Information and Computer Science,
NIST/University of Maryland, College Park, MD, USA.}
\affiliation{Department of Computer Science, University of Maryland, College Park, MD, USA.}
\affiliation{Information Sciences, Los Alamos National Laboratory, Los Alamos, NM, USA.}

\author{Dhrumil Patel}
\thanks{The first two authors contributed equally to this work.}
\affiliation{Department of Computer Science, Cornell University, Ithaca, New York, 14850, USA.}
\affiliation{Information Sciences, Los Alamos National Laboratory, Los Alamos, NM, USA.}

\author{Shi Jie Samuel Tan} 
\thanks{The first two authors contributed equally to this work.}
\affiliation{Joint Center for Quantum Information and Computer Science,
NIST/University of Maryland, College Park, MD, USA.}
\affiliation{Department of Computer Science, University of Maryland, College Park, MD, USA.}
\affiliation{Information Sciences, Los Alamos National Laboratory, Los Alamos, NM, USA.}

\author{Yi\u{g}it Suba\c{s}\i} 
\affiliation{Information Sciences, Los Alamos National Laboratory, Los Alamos, NM, USA.}

\author{Andrew T. Sornborger} 
\affiliation{Information Sciences, Los Alamos National Laboratory, Los Alamos, NM, USA.}

\date{\today}

\begin{abstract}
\noindent
Quantum phase estimation is one of the fundamental primitives that underpins many quantum algorithms, including Shor's algorithm for efficiently factoring large numbers. Due to its significance as a subroutine, in this work, we consider the coherent version of the phase estimation problem, where given an arbitrary input state and black-box access to unitaries $U$ and controlled-$U$, the goal is to estimate the phases of $U$ in superposition. Most existing phase estimation algorithms involve intermediary measurements that disrupt coherence. Only a couple of algorithms, including the standard quantum phase estimation algorithm, consider this coherent setting. However, the standard algorithm only succeeds with a constant probability. To boost this success probability, one can employ the coherent median technique, resulting in an algorithm with asymptotically optimal query complexity (the total number of calls to $U$ and controlled-$U$). However, this coherent median technique requires a large number of ancilla qubits and a computationally expensive quantum sorting network.

To address this, in this work, we propose an improved version of the standard algorithm called the tapered quantum phase estimation (tQPE) algorithm, which leverages tapering (or window) functions commonly used in classical signal processing. Our algorithm achieves the asymptotically optimal query complexity without requiring the expensive coherent median technique to boost success probability. Moreover, we find the absolutely optimal taper — not only in the asymptotic scaling but in terms of exact performance. We provide an efficiently preparable ancilla state based on an approximation of the optimal taper, which incurs at most a factor-of-two increase in the probability of error, thereby maintaining near-optimal performance in practice. 
In the appendices, we give an explicit construction of the taper state preparation circuit. Finally, we derive an error bound for coherent QPE when the phase estimate is used as a control and subsequently uncomputed. 
\end{abstract}

\maketitle

\tableofcontents

\section{Introduction}
Quantum phase estimation (QPE) has been central to the field of quantum computing since its introduction~\cite{shor1994algorithms, kitaev1995quantum}.
It has been used in Shor's algorithm for efficiently factoring large numbers~\cite{shor1994algorithms}, in the Harrow–Hassidim–Lloyd (HHL) algorithm to solve a system of linear equations~\cite{harrow2009quantum}, for quantum amplitude estimation~\cite{brassard2002quantum}, for quantum principal component analysis~\cite{lloyd2014quantum}, for fast Quantum Merlin-Arthur (QMA) amplification~\cite{nagaj2009fast}, and as a subroutine in many other applications ~\cite{wiebe2012quantum,lloyd2013quantum,chowdhury2018improved,wright2021automatic,engel2021linear,an2021quantum, rall2020quantum}.

At its core, the goal of QPE is to estimate the phase of an eigenvalue of a given unitary. Let $U$ be a unitary acting on a $d$-dimensional Hilbert space, $\mathcal{H}$, and suppose that $e^{2\pi i\theta}$ is one of the eigenvalues of $U$, where $\theta$ lies in the range $[0, 1)$.
Then, the QPE problem is to output an estimate $\tilde{\theta}$ that is $\delta$-close to $\theta$ with a probability at least $1-\epsilon$ for some $\delta, \epsilon >0$.

To solve this problem, it is common to assume that we are given black-box access to $U$ and its controlled version (controlled-$U$), as well as sample access to the eigenvector $|\psi_{\theta}\rangle$ of $U$ corresponding to the eigenvalue $e^{2\pi i \theta}$. For our purposes, we define the cost of an algorithm in terms of its query complexity, which is the number of times $U$ and controlled-$U$ are applied. An upper bound on the gate complexity can be obtained from the query complexity for a given circuit implementation of~$U$ by simple multiplication.

Probably the simplest approach to the QPE problem is the Hadamard test. It estimates the real and imaginary parts of the overlap $\langle \psi_{\theta}|U|\psi_{\theta}\rangle$, giving $e^{2\pi i \theta}$. This can then be used to estimate $\theta$. 
However, this approach requires $O\!\left(\delta^{-2}\log(\sfrac{1}{\epsilon})\right)$ queries to achieve a desired precision, $\delta$.
A quadratic improvement in precision, i.e., $O\!\left(\delta^{-1}\log(\sfrac{1}{\epsilon})\right)$, can be achieved using the well-known algorithm proposed by Kitaev in 1995~\cite{kitaev1995quantum}.
A crucial insight for this quadratic speedup was to extract the phase value bit-by-bit using controlled-$U^{2^j}$, rather than simply using controlled-$U$ in each iteration like in the Hadamard test. Interestingly, the authors of~\cite{mande2023tight} recently showed that any algorithm solving the QPE problem requires $\Omega\!\left(\delta^{-1} \log (\sfrac{1}{\epsilon})\right)$ queries, making Kitaev's algorithm optimal in that sense.

In many practical applications of QPE, it is not realistic to assume that we have access to the exact eigenvector, $|\psi_\theta\rangle$. Instead, we have access to an arbitrary quantum state, $|\psi\rangle$, that can be written as a superposition of eigenvectors $\left \{|\psi_{\theta_r}\rangle\right\}_{r=0}^{d-1}$ of $U$:
\begin{align}
    |\psi\rangle =  \sum_r c_r |\psi_{\theta_r}\rangle.
\end{align}
Suppose that the eigenvalues corresponding to these eigenvectors are $\left \{e^{2\pi i\theta_r}\right\}_{r=0}^{d-1}$. A more practical goal is then to estimate phases coherently in superposition, i.e., prepare a state
\begin{align}
    \sum_r c_r \left(\ket{\omega_{r}^{\delta}} + \ket{\omega_{r}^{\delta \perp}}\right)  \ket{\psi_{\theta_r}},\label{eq:coh-state}
\end{align}
where
\begin{align}
\ket{\omega_{r}^{\delta}} = \sum_{s:|\theta_r - \tilde{\theta}_s|\le \delta}d_{r,s}\ket{\tilde{\theta}_{s}}
\end{align}
is the part of the ancilla state that encodes a $\delta$-approximation to $\theta_r$ in the computational basis. We want $\|\ket{\omega_{r}^{\delta}}\| \ge \sqrt{1-\epsilon}$ which implies that a $\delta$-approximation can be read out with probability $\ge 1-\epsilon$. 
In practice, the error probability will depend on the true phase: $\|\ket{\omega_{r}^{\delta}}\| = \sqrt{1-\epsilon(\theta)}$, and usually we demand $\max_\theta \epsilon(\theta) \le \epsilon$. 
This coherent setting is more useful because most algorithms employ QPE as a subroutine, rather than as a standalone algorithm.

Both the Hadamard test and Kitaev's algorithm fall under the category of `iterative' QPE algorithms. An iterative algorithm consists of multiple successive computations, each consisting of simple quantum circuits. Each iteration, however, consists of measurements at the end of the circuit evaluation followed by some classical post-processing. Applying these algorithms to an arbitrary state aside from an eigenvector of $U$ causes decoherence due to these inherent measurements in each iteration. Ultimately, this leads to the preparation of an incoherent state as opposed to the pure coherent state, given by~\eqref{eq:coh-state}.

To the best of our knowledge, only two algorithms exist in the literature that can perform phase estimation coherently~\cite{cleve1998quantum, Rall2021}, with the standard QPE algorithm (also known as the textbook QPE algorithm) being the most well-known~\cite{cleve1998quantum}. This algorithm incorporates the (inverse) quantum Fourier transform (QFT) as a subroutine and uses $O(\delta^{-1})$ queries to prepare the desired coherent state with probability at least $\sfrac{8}{\pi^2}$ in a single run. However, this constant success probability may not be sufficient for quantum algorithms that use QPE as a subroutine. Therefore, boosting this success probability to the desired $1 - \epsilon$ becomes extremely crucial. To achieve this, the standard QPE algorithm is executed $O(\log(\sfrac{1}{\epsilon}))$ times in parallel, and the median of the outputs is computed coherently~\cite{nagaj2009fast}. Therefore, the overall algorithm, i.e., the standard QPE combined with the coherent median computation for boosting success probability, has query complexity of $O(\delta^{-1}\log(\sfrac{1}{\epsilon}))$. This query complexity is optimal, matching the corresponding lower bound $\Omega(\delta^{-1}\log(\sfrac{1}{\epsilon}))$. However, the coherent median step involves using a large number of ancilla qubits and a quantum sorting network~\cite{hoyer2002quantum, klauck2003quantum, beals2013efficient}, which is computationally expensive.

An alternative approach to boost the success probability, without using a sorting network, involves using $m=O(\log(\sfrac{1}{\epsilon}))$ additional ancilla qubits all prepared in the uniform superposition state, but this increases the overall query complexity to  $O\left( \delta^{-1} \epsilon^{-1}\right)$~\cite{cleve1998quantum}. This is because the query complexity of the standard QPE algorithm grows exponentially with the number of total ancilla qubits $p=\ell+m$, i.e., $O\left( \delta^{-1} 2^{m}\right)$. Here $\ell = O(\log(\sfrac{1}{\delta}) )$ is the minimum number of ancilla qubits required to achieve precision~$\delta$.  Substituting $m=O(\log(\sfrac{1}{\epsilon}))$ into this yields the stated complexity. It is worth noting that this complexity is exponentially worse in $\epsilon$ compared to using the coherent median approach. 

Therefore, in this work, we investigate the following question and answer it positively: can we improve the standard QPE algorithm such that it maintains the optimal query complexity of $O(\delta^{-1}\log(\sfrac{1}{\epsilon}))$ without employing the median approach and the associated computationally-expensive quantum sorting network?

\subsection{Contributions}
We propose an improved version of the standard QPE algorithm, which we call the \textit{tapered QPE (tQPE) algorithm}. The rationale behind this name will become apparent later in the discussion. We show that the tQPE algorithm uses exponentially fewer additional qubits, $m$, to achieve a success probability arbitrarily close to one. This improvement directly leads to an exponentially smaller query complexity in terms of~$\epsilon$.

To be more precise, we demonstrate that our tQPE algorithm requires only $m= \lceil\log_2 \log (1/\epsilon) \rceil$ additional qubits to achieve a success probability at least $1-\epsilon$ (see Sec.~\ref{sec:average-case}).
We obtain this improvement by framing the problem as an optimization over the choice of the ancilla qubit state that maximizes the success probability. This results in a more effective choice for the initial state of the ancilla register. In the case of the standard QPE algorithm, the ancilla register is initialized to the uniform superposition state. 
However, we propose to replace this state with the state corresponding to the most-frequency-concentrated discrete prolate spheroidal sequence (DPSS), which is a widely used \textit{window/tapering} function in the field of classical signal processing. This motivates the name of our algorithm.
We will provide a brief note on DPSS and on window/tapering functions in general later in Sec.~\ref{sec:brief-note-on-dpss}.

The key idea behind this substitution is that DPSS maximizes signal concentration within a given spectral band. This is particularly important for QPE, as it helps in maximizing the probability of obtaining phase estimates that are $\delta$-close to the true phase~$\theta$.
To this end, the query complexity of our algorithm is then $O\!\left(\delta^{-1} \log(\sfrac{1}{\epsilon}) \right)$, which scales exponentially better in terms of $\epsilon$ than that of the standard QPE algorithm, which, as mentioned before, is $O\!\left ( \delta^{-1}\epsilon^{-1}\right)$. Furthermore, the query complexity of our algorithm saturates the lower bound $\Omega\!\left(\delta^{-1} \log(\sfrac{1}{\epsilon})\right)$, which means that it is optimal. Having said that, we would also like to emphasize one of the implications of our result: tQPE can be used directly for fast QMA amplification instead of running the standard QPE algorithm $O(\log (\sfrac{1}{\epsilon}))$ times for computing the median. This suggests an alternative approach for fast QMA amplification that employs tapering functions.

A natural question following the aforementioned result is whether one can initialize the ancilla register to this DPSS state efficiently. We answer this question by providing explicit algorithms for doing so, which we describe in detail in Sec.~\ref{sec:taperprep} and App.~\ref{app:TaperPreparation}. We subsequently show that the gate complexity of this algorithm is comparable to that of the standard QPE algorithm  for initializing the uniform superposition state, up to $\log\log$ factors. 

The tQPE algorithm, like the standard QPE algorithm, involves applying the inverse QFT to the ancilla qubits. However, since tQPE uses exponentially fewer additional qubits, the inverse-QFT circuit is also smaller compared to that used in the standard algorithm. This can be seen as follows. In~\cite{Hales}, the authors showed that the gate complexity of the inverse QFT acting on $p$ qubits is $O(p\log p)$. Therefore, the gate complexity of the inverse QFT in the case of the standard algorithm is \[O\!\left(\left(\log\frac{1}{\delta} + \log\frac{1}{\epsilon}\right) \log\left(\log\frac{1}{\delta} + \log\frac{1}{\epsilon}\right) \right)\] 
because the number of ancilla qubits required for this algorithm is $p = O\!\left(\log(\sfrac{1}{\delta}) + \log(\sfrac{1}{\epsilon})\right)$.
On the other hand, the gate complexity of this transform in the case of tQPE is \[O\!\left(\left(\log\frac{1}{\delta} + \log\log\frac{1}{\epsilon}\right) \log\left(\log\frac{1}{\delta} + \log\log\frac{1}{\epsilon}\right) \right)\] 
because the ancilla qubits required for this algorithm is $p = O\!\left(\log(\sfrac{1}{\delta}) + \log \log(\sfrac{1}{\epsilon})\right)$, a significant improvement.

We also study a special scenario where one is not allowed to use any additional qubits, i.e., $m=0$ (see Sec.~\ref{sec:special_case_Delta}) and one wishes to maximize the success probability of outputting one of the two nearest estimates when $\theta$ lies exactly in between two phase estimates. The optimal input states for the ancilla qubits for this scenario define a two-dimensional subspace which includes a particular kind of sinusoidal sequence, defined in~\eqref{eqn:sine_taper}. Additionally, we perform numerics, plotting the success probability as a function of the distance between the true value and the closest phase estimate (see Sec.~\ref{sec:numerics}). We carry out these numerics for three different input states of the ancilla register: 1) the uniform superposition state, 2) the DPSS state, and 3) the above sinusoidal state. This analysis demonstrates that DPSS performs well over the entire range as compared to the uniform superposition state.

In Sec.~\ref{sec:uncomputing} we address the issue of coherent uncomputation of the phase estimate. In many applications such as HHL algorithm~\cite{harrow2009quantum} and Quantum Metropolis Sampling~\cite{temme2011quantum}, QPE is used to implement a unitary that is controlled by the phase estimate which is subsequently uncomputed. In Theorem~\ref{thm:uncomputing} (proven in App.~\ref{app:uncomputing}) we give an error bound on such uses of QPE that applies to any algorithm for coherent QPE. We anticipate that this bound will be useful for analyzing such algorithms as it accounts for multiple errors that in previous analyses had to be treated individually.

In Sec.~\ref{sec:randomization} we show that by randomizing over the input phases as in~\cite{Lu2023unbiased}, we can reduce any instance of the QPE problem to an average instance for which DPSS taper is optimal. This proves that the DPSS taper achieves the optimal performance among all tapers not only asymptotically but in an absolute sense. In Sec.~\ref{sec:taperprep} and App.~\ref{app:TaperPreparation}, we give an explicit description of how approximate taper states with near-optimal performance can be prepared efficiently on a quantum computer. We find that for most applications, $m\le 4$ additional ancilla qubits might be sufficient.

\subsection{Overview}
Let $p$ be the number of ancilla qubits in the QPE circuit, initialized to the all-zeros state. For simplicity, we assume that we have sample access to one of the eigenvectors, $|\psi_{\theta}\rangle$, of $U$. In tQPE, as shown in Fig.~\ref{fig:circuitdiagram}, there are three main steps: 
\begin{enumerate}
    \item Initialize the ancilla qubits to a state $|\phi\rangle$. For now, this state serves as a placeholder and will be replaced later with a specific state depending on the problem at hand.
    \item Apply unitaries controlled-$U^{2^j}$ to the joint state, $|\phi\rangle |\psi_{\theta}\rangle$.
    \item Apply inverse QFT to the ancilla qubits.
\end{enumerate}
\begin{figure}
\resizebox{0.85\linewidth}{!}{
\Qcircuit @C=1em @R=.7em {
\mbox{}&&\mbox{\hspace{-0.45cm}\eqref{eq:taper-tQPE}}&&&&&&&&\mbox{\hspace{0.5cm}\eqref{eq:taper-tQPE-c-unitary}}& \hspace{1.2cm} \mbox{\eqref{eq:taper-tQPE-final-state}}&\\
 & \multigate{4}{U_{\text{PREP}}} & \qw & \qw & \qw & \qw & \dots & & \qw &\ctrl{5} & \qw  & \multigate{4}{\text{QFT}^\dagger} & \qw \\
& \ghost{U_{\text{PREP}}} \ar@[blue]@{--}[]+<2.3em,2.0em>;[d]+<2.3em,-4.5em> & \qw & \qw & \qw & \qw & \dots & & \ctrl{4} & \qw & \qw\ar@[blue]@{--}[]+<0.7em,2.0em>;[d]+<0.7em,-4.5em> & \ghost{QFT}\ar@[blue]@{--}[]+<2.2em,2.0em>;[d]+<2.2em,-4.5em> & \qw  \\
& \ghost{U_{\text{PREP}}} & \qw & \rvdots & \rvdots &  & \dots & & &  &  & \ghost{QFT} & \rvdots \\
& \ghost{U_{\text{PREP}}} & \qw & \qw & \ctrl{2} & \qw & \dots & & \qw & \qw & \qw & \ghost{QFT} & \qw \\
& \ghost{U_{\text{PREP}}} & \qw & \ctrl{1} & \qw & \qw & \dots & & \qw & \qw & \qw & \ghost{QFT} & \qw \\
\lstick{\ket{\psi}} & \qw & \qw & \gate{U^{2^0}} & \gate{U^{2^1}} & \qw & \dots &  & \gate{U^{2^{p-2}}} & \gate{U^{2^{p-1}}} & \qw & \qw & \qw 
\inputgroupv{2}{6}{0.6em}{4em}{\ket{0}^{\otimes p}}
}
}
\caption{Tapered QPE quantum circuit. The system is initialized in the state $\ket{0}^{\otimes p} \ket{\psi}$, where $\ket{\psi}$ is an arbitrary input state. $U_{\text{PREP}}$ is the unitary that prepares the taper state $\ket{\phi}$ (see App.~\ref{app:TaperPreparation} for $U_{\text{PREP}}$ that prepares optimal tapers approximately). Dashed blue lines labeled by equation numbers denote the corresponding state at each step.}
\label{fig:circuitdiagram}
\end{figure}

Notably, the standard QPE algorithm is a special case of tQPE, where the ancilla qubits are initialized to the uniform superposition state, i.e.,
\begin{align}
    |\phi\rangle = \frac{1}{2^{p/2}}\sum_i |i\rangle.
\end{align}

The application of the above three steps leads to the following transformation:
\begin{align}
    |0^{\otimes p}\rangle |\psi_{\theta}\rangle \xrightarrow{\text{tQPE}} \sum_{k=0}^{2^p-1}\hat{\phi}\!\left(\theta - \frac{k}{2^p}\right)\ket{k}\ket{\psi_{\theta}},\label{eq:overview-final-state}
\end{align}
where $\hat{\phi}(\cdot)$ is the discrete-time Fourier transform, defined explicitly in~\eqref{eq:FTofTaper}, of $|\phi\rangle$. Fig.~\ref{fig:taper} shows how the probabilities with which phase estimates are output depend on the true phase $\theta$. 

As we can see from the above transformation, the final state of the ancilla register is a superposition of all possible values of phase estimates, $\sfrac{k}{2^p}$. The success probability of the algorithm is given by
\begin{align}
    \sum_{k:\left| \theta - \sfrac{k}{2^p}\right| \leq \delta} \left|\hat{\phi}\!\left(\theta - \frac{k}{2^p}\right)\right|^{2} = 1-\epsilon(\theta)\; ,\label{eq:overview-sp1}
\end{align}
which in general depends on $\theta$ as we will see later. 
Intuitively, we want to choose a state, $|\phi\rangle$, of the ancilla register whose support in the frequency domain is concentrated within the frequency band $[-\delta, \delta]$ and is at least $1 - \epsilon$, where, as mentioned before, $\epsilon$ is a uniform upper bound on the error probability across all $\theta$, i.e.,  $\max_\theta \epsilon(\theta) \le \epsilon$.
Fig.~\ref{fig:taper} shows an example of a taper whose support is concentrated around zero in its frequency domain. 
\\
\begin{figure}
\centering
\hspace{-1.57em}\includegraphics[width=0.505\textwidth]{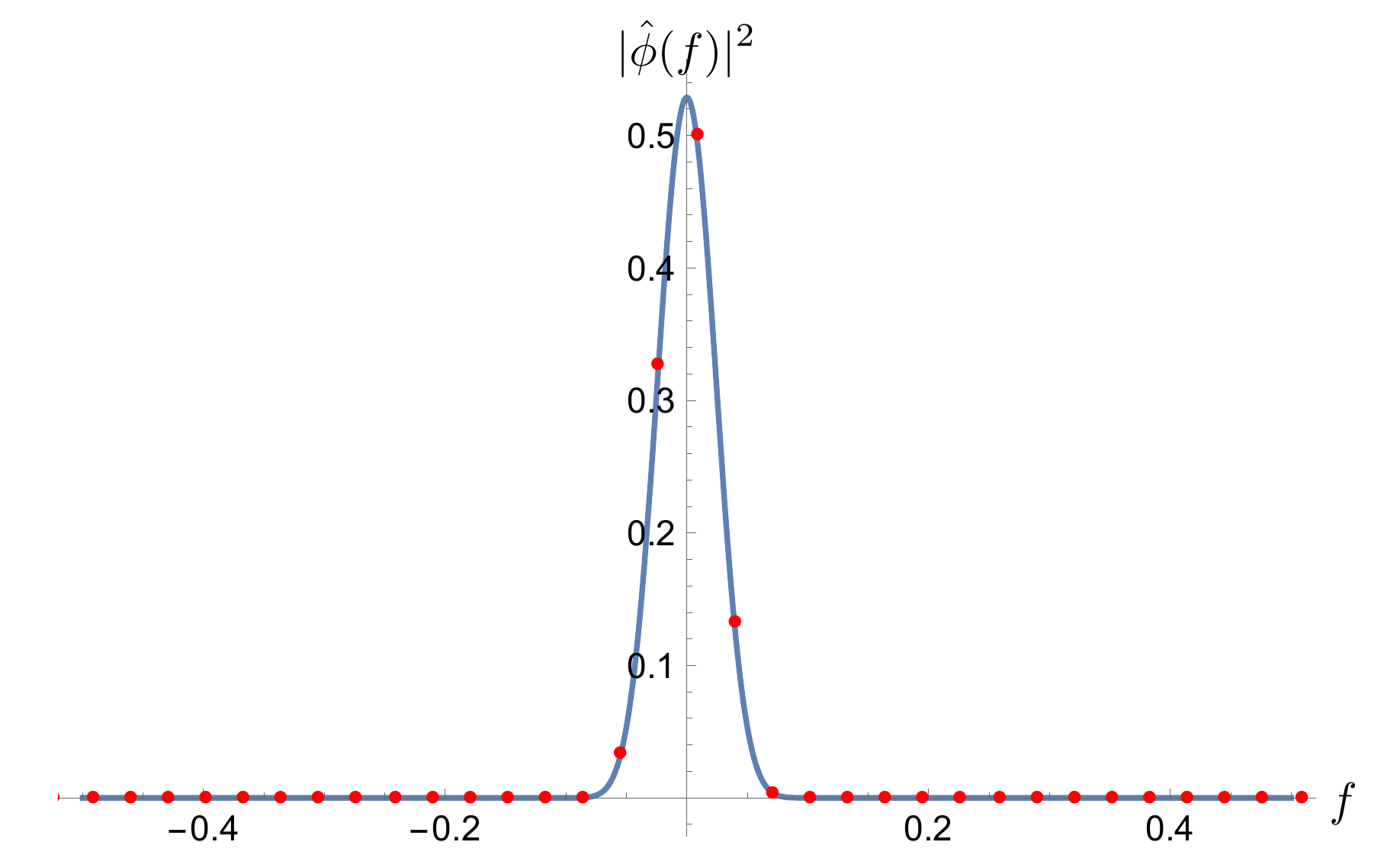}
    \caption{For the DPSS taper with $p = 5$, we plot the absolute value squared of the taper in the frequency domain (blue curve). The function evaluated at the discrete values $\theta-k/2^p$ (red dots) correspond to the squared amplitudes of the final ancilla state in~\eqref{eq:overview-final-state} for some choice of $\theta$. } 
    \label{fig:taper}
\end{figure}

We study error in two different settings: average- and worst-case. In the average-case error setting, we formulate an optimization problem that aims to find ancilla states, $|\phi\rangle$, that minimize the average error of the tQPE algorithm when the phases are sampled uniformly at random. 
We show that solving these optimization problems is equivalent to solving frequency concentration problems arising in classical signal processing, and the most-frequency-concentrated DPSS naturally emerges as the optimal solution in the average-case setting. We then show how worst-case instances of the QPE problem can be reduced to average-case instances by randomization of the phases, thus proving DPSS taper is optimal in the worst case as well. 

Before we discuss related works, we supply a summary table of parameters and their roles to help guide readers who are less familiar with QPE to navigate the dense notation. Readers can find an overview of our notation in Table~\ref{tab:notation}.

\begin{table}[t]
\centering
\renewcommand{\arraystretch}{1.15}
\setlength{\tabcolsep}{4pt}
\newcommand{\notedesc}[1]{\parbox[t]{0.50\columnwidth}{\raggedright #1}}
\begin{tabular}{lll}
\toprule
\textbf{Symbol} & \textbf{Domain} & \textbf{Interpretation} \\
\midrule
$\mathcal{H}$ & -
 & \notedesc{$d$-dimensional Hilbert space}\\
$U$ & $\mathsf{U}(\mathcal{H})$
 & \notedesc{Black-box unitary}\\
$\theta$ & $[0,1)$ & \notedesc{Eigenphase of $U$}\\
$|\psi_\theta\rangle$ & $\mathcal{H}$ & \notedesc{Eigenstate of $U$ with phase $\theta$} \\
$|\psi\rangle$ & $\mathcal{H}$ & \notedesc{Arbitrary input $\sum_r c_r |\psi_{\theta_r}\rangle$} \\
$\tilde{\theta}$ & $[0,1)$ & \notedesc{Phase estimate} \\
$\delta$ & $(0,1/2]$ & \notedesc{Target precision: success if $|\tilde\theta\!-\!\theta|\!\le\!\delta$} \\
$\epsilon(\theta)$ & $[0,1]$ & \notedesc{Error probability for true phase $\theta$}\\
$\epsilon$ & $(0,1]$ & \notedesc{Uniform upper bound on the error probability}\\
$\overline{\epsilon}^\text{opt}$ & $(0,1]$ & \notedesc{Optimal average error probability for the tQPE algorithm, attained by the DPSS taper} \\
$\overline{\epsilon}^\phi$ & $(0,1]$ & \notedesc{Average error probability for tQPE algorithm when $|\phi\rangle$ is used as a taper}\\
$\overline{\epsilon}$ & $(0,1]$ & \notedesc{Upper bound on the average error probability} \\
$\ell$ & $\mathbb{N}$ & \notedesc{Minimum number of ancilla qubits required to achieve precision $\delta$}\\
$m$ & $\mathbb{N}$ & \notedesc{Number of additional ancilla qubits used to boost success probability} \\
$p$ & $\mathbb{N}$ & \notedesc{Total number of ancilla qubits: $\ell+m$} \\
$N$ & $\mathbb{N}$ & \notedesc{Number of phase grid points: $2^{\ell+m}$} \\
$\mathcal{H}_{\text{t}}$ & - & \notedesc{$N$-dimensional Hilbert space corresponding to the ancilla/taper register}\\
$\Delta$ & $(-\tfrac{1}{2N},\tfrac{1}{2N}]$ & \notedesc{Offset between the true phase and the nearest phase estimate on the grid} \\
$K$ & $\{0,\dots,\frac{N}{2}-1\}$ & \notedesc{Used to define number of acceptable phase estimates: $2K{+}1$} \\
$W$ & $(0,0.5)$ & \notedesc{Half-bandwidth of the DPSS taper in the frequency domain: $W\!=\!\delta\!-\!1/\!(2N)\!$} \\
$\lambda_{\max}$ & $[0,1]$ & \notedesc{Maximum eigenvalue of DPSS kernel}\\
\bottomrule
\end{tabular}
\caption{Summary of notation.}
\label{tab:notation}
\end{table}

\subsection{Related work}

QPE is a well-researched problem and hitherto many quantum algorithms have been proposed to solve it under different settings. However, most of the quantum algorithms in the literature~\cite{kitaev1995quantum, svore2013faster, Ni2023, Lin2022, O’Brien2019, wan2022randomized, ChapeauBlondeau2020, Ding2023, gebhart2021bayesian, mande2023tight} are not suitable for the coherent setting that we are considering in this paper. This point is strongly emphasized in~\cite{Rall2021}. As such, only two algorithms, such as those presented in~\cite{cleve1998quantum} and~\cite{Rall2021}, provide coherent phase estimation algorithms and are thus suitable for use as subroutines in larger algorithms. The approach in~\cite{Rall2021} is complementary to ours and achieves similar query complexity. Their algorithm employs block-encoding techniques to obtain the phase estimate bit-by-bit and employs techniques from the quantum singular value transform (QSVT)~\cite{gilyen2019quantum, martyn2021grand} to boost the success probability.

Tapers are not new in the study of quantum algorithms. In 1999, Bu{\v{z}}ek \emph{et al.} used a sinusoidal taper for the construction of optimal quantum clocks~\cite{buvzek1999optimal}. The authors of~\cite{berry2000optimal} found that a particular kind of sinusoidal taper is optimal as an $N$-photon two-mode input state for obtaining an estimate
of the phase difference between two arms of an interferometer. The effect of a cosine taper on the quantum phase estimation has been studied in~\cite{rendon2022effects}, and the authors showed a cubic improvement of $m$ in terms of error probability $\epsilon$, i.e., $m=O(\log(1/\epsilon^{1/3}))$ as opposed to $m=O(\log(1/\epsilon))$ for the standard QPE algorithm. Additionally, the cosine taper has been employed in QPE-based algorithms, including the HHL algorithm for solving systems of linear equations~\cite{harrow2009quantum}. Bump functions have found application in spectral estimates based on time series analysis~\cite{somma2019quantum}. Tapers have also been explored within quantum spectral filtering methods aimed at efficient state initialization~\cite{fillion2017efficient}. Furthermore, approximate Gaussian tapers have been utilized in quantum algorithms for spectral density estimation~\cite{roggero2020spectral}, thermal state preparation~\cite{Moussa2019, chen2023quantum, chen2023efficient}, and ground-state energy estimation~\cite{Rendon2023}. 
The Kaiser taper~\cite{campbell1948fourier, kaiser1966digital, karnik2021improved, mcardle2022quantum} has been considered in QPE~\cite{berry2022quantifying} and achieves nearly asymptotically optimal scaling. We will further comment on Kaiser tapers and their relation to the DPSS taper in Sec.~\ref{sec:discussion}. Note that none of the tapers discussed above are optimally frequency-concentrated, resulting in side-lobes that warrant further reduction.

Previous works have used randomness to improve the performance of QPE. In 2023, Lu and Lin applied  phase randomization to achieve unbiased phase estimation~\cite{Lu2023unbiased}. By applying random single-qubit rotations to the ancilla register, a random offset to $\theta$ is applied which removes the bias that plagues QPE. We note that we essentially utilize their technique in our work for our worst-case to average-case reduction to argue for the optimality of our QPE approach. Linden and De Wolf also proved that randomization of the inputs to QFT can help achieve good worst-case performance for phase-estimation~\cite{linden2022average}. A different work by Wan, \emph{et al.} showed that randomizing how the unitary powers are sampled can help convert algorithmic errors into stochastic errors, leading to a doubly random approach which once again facilitates the reduction of the worst-case instances to the average-case instance~\cite{wan2022randomized}.

\section{A brief note on Discrete Prolate Spheroidal Sequences}\label{sec:brief-note-on-dpss}
In the field of classical signal processing and statistics, DPSS, sinusoidal sequences, and other such sequences are referred to as window functions, tapering functions, or simply tapers. 
These functions are widely used to analyze and modify the frequency spectrum of a given signal.
It is important to note that the uniform superposition state used in the standard QPE algorithm is also a type of window function known as a rectangular window. This function is also sometimes referred to as the {\it tophat} taper, and we will use this name throughout our paper to refer to this taper.

In a series of seminal papers in the field of signal analysis~\cite{slepian1961prolate, landau1961prolate, landau1962prolate, slepian1964prolate, slepian1978prolate}, Slepian, Pollack, and Landau studied the extent to which a time-limited signal can be band-limited. In other words, they investigated how much the Fourier transform of a signal can be concentrated in a small interval in the frequency domain (i.e., band-limited), given that the signal is only non-zero in a finite interval in the time domain (i.e., time-limited). 
The discrete-time case was studied in the fifth paper of this series of papers~\cite{slepian1978prolate}. In this paper, Slepian introduced DPSSs (also called Slepian sequences) as eigenvectors of a kernel arising from a particular frequency concentration algorithm and demonstrated that there exists a DPSS that is both time-limited and maximally band-limited. 

For completeness, we provide a detailed derivation of this in App.~\ref{app:classicaldpss}. In brief, Slepian constructed and optimized a cost function to find a discrete sequence, whose Fourier transform is maximally concentrated in a given bandwidth. The optimization results in a linear kernel, whose leading eigenvector solves the problem. Furthermore, the authors of~\cite{percival1993spectral} showed how to efficiently compute the entries of this eigenvector numerically.

Due to the ability of DPSSs to concentrate the support of the Fourier-transformed function in a small interval in the frequency domain, while also maintaining finite support in the time domain, it is well-suited for a wide variety of applications in signal processing, including signal filtering, and high-resolution spectral and harmonic analysis methods.

Classical signal analysis methods based on Slepian \emph{et al.}'s analysis have resulted in significant advances in spectral and harmonic analysis~\cite{slepian1978prolate, thomson1982spectrum, park1987multitaper, mitra1999analysis, haykin2009spectrum, sornborger2012multivariate}. Maximizing signal concentration within a given spectral band is not only desirable for classical signal analysis but is also of special significance for QPE as we will see in this paper. 
Intuitively, the band-limiting property of a taper is important especially for QPE in the sense that it helps in increasing the probability of outputting phase estimates that are $\delta$-close to the true phase~$\theta$.

\section{Tapered Quantum Phase Estimation}
\label{sec:tQPE}

For simplicity, we focus on a particular eigenvector $|\psi_{\theta}\rangle$ of $U$ with eigenvalue $e^{2\pi i\theta}$ and lay out the details of our tQPE algorithm. This approach can be readily extended to a superposition of eigenvectors as well. To begin with, as shown in Fig.~\ref{fig:circuitdiagram}, we prepare an ancilla register in a state $\ket{\phi}$ defined as 
\begin{equation}
    \ket{\phi} \coloneqq \sum_{n = 0}^{N-1}\phi[n]\ket{n},\label{eq:taper-tQPE}
\end{equation}
where $\phi[n]\in \mathbb{C}$ for all $n \in \{0, \ldots, N-1\}$. Also, $N \coloneqq 2^p$, where $p$ is the total number of qubits in the ancilla register. Throughout this paper, we refer to the state of the ancilla register, $\ket{\phi}$, as a taper, and we use the terms ``taper register" and ``ancilla register" interchangeably. For the time being, please note that $\ket{\phi}$ is an arbitrary state, with a particular choice specified later. We then apply the conditional unitary
\begin{equation}
    \label{eq:selectU}
    \sum_{n=0}^{N-1}\ketbra{n}{n}\otimes U^n
\end{equation}
to $\ket{\phi}\ket{\psi_{\theta}}$, resulting in the state
\begin{equation}
    \sum_{n=0}^{N-1}\phi[n]e^{2\pi i \theta n}\ket{n}\ket{\psi_{\theta}}. \label{eq:taper-tQPE-c-unitary}
\end{equation}

\noindent
Subsequently, we apply the inverse QFT (QFT$^{-1}=$ QFT$^{\dagger}$) on the ancilla register. This transforms the basis $\left\{\ket{n}\right\}_{n=0}^{N-1}$ as
\begin{equation}
    \ket{n} \to \frac{1}{\sqrt{N}}\sum_{k=0}^{N-1}e^{-2\pi i nk/N}\ket{k},
\end{equation}
giving us the following final state:
\begin{equation}
    \sum_{k = 0}^{N-1}\left(\frac{1}{\sqrt{N}}\sum_{n=0}^{N-1}\phi[n]e^{2\pi i n\left(\theta - k/N\right)}\right)\ket{k}\ket{\psi_{\theta}}.\label{eq:dft_intro}
\end{equation}

Note that the expression above inside the parentheses is the discrete-time Fourier transform, defined as
\begin{equation}
    \hat{\phi}\!\left(f\right) \coloneqq \frac{1}{\sqrt{N}}\sum_{n=0}^{N-1}\phi[n]e^{2\pi i nf}, \label{eq:FTofTaper}
\end{equation}
of the time-limited signal, $\phi[n]$, evaluated at the frequency, $\theta - k/N$. Using this, the final state of the algorithm, given by \eqref{eq:dft_intro}, can be expressed more concisely as: 
\begin{equation}
    \sum_{k=0}^{N-1}\hat{\phi}\!\left(\theta - k/N\right)\ket{k}\ket{\psi_{\theta}}, \label{eq:taper-tQPE-final-state}
\end{equation}
where we can think of $\left \{\hat{\phi}\!\left(\theta - k/N\right)\right \}_{k}$ as the probability amplitudes of the discrete frequencies $\left \{ k/N\right \}_{k}$ indexed by $k$. In other words, if the first register corresponding to the taper is measured, the tQPE algorithm outputs a phase estimate, $k/N$, with probability $\left|\hat{\phi}\left(\theta - k/N\right)\right|^2$ \ys{(see Fig.~\ref{fig:taper})}.

In App.~\ref{app:Convolution}, we show that the coefficient in (\ref{eq:taper-tQPE-final-state}) may be understood as the convolution
\begin{equation}
 \hat{\phi}\!\left(\theta - f\right) = \left( \hat{\phi} * \delta  \right)(\theta-f)
\end{equation}
evaluated at $f = k/N$, where $*$ denotes convolution and $\delta(\cdot)$ is the Dirac delta function (not to be confused with the precision parameter $\delta$ used elsewhere).
From the above equality, observe that the taper is centered about, and independent of, the frequency $\theta$. Conceptually, and important for our discussion below, this distinguishes the taper as a variable function that can take various forms depending on the particular optimization problem of interest.

With that in place, recall that the goal of tQPE is to prepare the final state, given by~\eqref{eq:taper-tQPE-final-state}, such that the following holds:
\begin{align}
    \label{eq:epstheta}
    \epsilon(\theta):=\sum_{k:\left| \theta - k/N\right| >\delta} \left|\hat{\phi}\left(\theta - k/N\right)\right|^{2} \leq \epsilon.
\end{align}
To simplify the analysis, we choose $\delta = 2^{-\ell -1}$ for some positive integer $\ell$ without loss of generality. 
If $\theta$ can be expressed exactly on $\ell$ bits, then there exists a $k \in \{0, 1, \ldots, N-1\}$ such that $\theta = k/N$. 
Otherwise, there is always a phase estimate $k'/N$ such that $\left|\theta - k'/N\right| \le 2^{-p-1}$, and a $\theta$ that saturates this bound (i.e., it lies exactly between two possible phase estimates). 
This constraint naturally arises because the resolution of the ancilla register is $2^{-p}$. Therefore, we only consider the case where $\delta \geq 2^{-p-1}$ which implies that $p \geq \ell$. Thus, we set $p = \ell + m$ for some $m \geq 0$ where $m$ can be interpreted as the number of additional qubits used to boost the success probability of estimating phases that are $\delta$-close to~$\theta$.

As we can see, the standard QPE algorithm is a special case of the tQPE algorithm, where the taper being used is the tophat taper ($\phi[n] = 1/\sqrt{N}; \forall $n$ $). 
For $m = 0$ (i.e., no additional qubits used to boost the success probability), the tQPE algorithm using the tophat taper outputs a $\delta$-close phase estimate, i.e., one of the two grid points closest to the true phase, with probability at least $8/\pi^2 \approx 0.81$.
As previously stated in the Introduction section, in order to increase the success probability further, a natural approach is to increase $m$. 
By doing so, the number of $\delta$-close phase estimates increases, which in turn increases the total probability of outputting such estimates. 
Specifically, the authors of~\cite{cleve1998quantum} showed that it is sufficient to choose $m = \lceil \log_2\left(1/(2\epsilon) + 1/2\right)\rceil$ to boost the success probability to at least $1 - \epsilon$ when considering the tophat taper.

Although increasing $m$ can boost the success probability, it also increases the computational cost of the algorithm, since the number of queries to $U$ is proportional to $2^{p} = 2^{\ell+m}$ (see  Fig.~\ref{fig:circuitdiagram}).
It also increases the size of the inverse-QFT circuit because, as mentioned before, the size of this circuit is of the order $O(p\log p)$.
Thus, it is of utmost importance to choose the taper, $\ket{\phi}$, optimally so as to minimize $m$ for a given~$\epsilon$.

\section{Optimal Tapers}
\label{sec:optimaltapers}

In what follows, we derive optimal tapers for two settings: first, when the relative position of the true phase with respect to the grid of phase estimates is known; and second, for the average-case scenario, which we define later. We then analyze the worst-case error behavior of these tapers.

\subsection{The Optimization Problem}

We now state the optimization problem more formally:
\begin{align}
    \max_{|\phi\rangle \in \mathcal{H}_{\text{t}}} \quad&  \sum_{k:\,\left|\theta - \frac{k}{N}\right|\leq \delta}\left|\hat{\phi}\left(\theta - \frac{k}{N}\right)\right|^2 \notag\\
    \text{subject to} \quad &  \sum_{n=0}^{N-1}\left|\phi[n]\right|^2 = 1.\label{eq:ideal-opt-prob}
\end{align}
Here, $\mathcal{H}_{\text{t}}$ denotes the $N$-dimensional Hilbert space corresponding to the taper register. The objective function above represents the success probability of the tQPE algorithm, i.e., probability of obtaining a phase estimate that is $\delta$-close to the exact value $\theta$, given $|\phi\rangle$ as the taper state. The constraint arises from the fact that $|\phi\rangle$ is a quantum state, and therefore, it must be normalized. 

Due to our choice of $\delta=2^{-\ell-1}$ and $p=\ell+m$, there are generically $2^m$ estimates that are $\delta$-close to $\theta$. In the special cases when the phase is on a grid point or exactly between two grid points, there are $2^m+1$ and $2^m + \delta_{m0}$ estimates that are $\delta$-close to $\theta$, respectively. Here, $\delta_{ij}$ denotes the Kronecker delta, which equals 1 when $i=j$ and 0 otherwise, and, again, should not be confused with the precision parameter $\delta$ used elsewhere.

Depending on the exact value of $\theta$, the index $k$ in the objective function of \eqref{eq:ideal-opt-prob} can range over different values. This significantly complicates the optimization problem. 
In order to simplify the above expression further, we define the following:
\begin{equation}
    \Delta \coloneqq \theta - \frac{k^*}{N},\label{eq:def-Delta}
\end{equation}
where $k^*/N$ is the phase estimate that is closest to $\theta$ on the grid. In other words, $\Delta$ is the difference between the best possible phase estimate afforded by the $p$-qubit taper and the true value, $\theta$. We can now rewrite $\hat{\phi}(\theta-k/N)$ as $\hat{\phi}(\Delta+j/N)$ since the sum is over a dummy index.

We observe that phase estimates corresponding to $|k|\le K$ with $K= 2^{m-1}-1$ are always $\delta$-close for $m\ge 2$ and $K=0$ for $m=0$. This generically leaves out one phase estimate that is furthest away from the true value. In the special cases when the phase is on a grid point or exactly between two grid points, the miscount is  $2- 2\delta_{m0}$ and $1$, respectively.  
Finally, all tapers of practical interest that we know of have Fourier transforms that are symmetric, have a peak at 0, and decay rapidly, meaning their value at $\Delta \pm(K+1)/N$ will be extremely small for large $K$ or equivalently small $\epsilon$. In light of this, we ignore the contribution of those points to the success probability in the optimization problem. 

With these considerations, we restate the optimization problem~\eqref{eq:ideal-opt-prob} as:
\begin{align}
    \max_{|\phi\rangle \in \mathcal{H}_{\text{t}}} \quad&  \sum_{j=-K}^K\left|\hat{\phi}\left(\Delta + \frac{j}{N}\right)\right|^2 \notag\\
    \text{subject to} \quad &  \sum_{n=0}^{N-1}\left|\phi[n]\right|^2 = 1.\label{eq:org-opt-prob}
\end{align}
To convert the above constrained optimization problem to an unconstrained one, we use the Lagrangian formulation:
\begin{multline}
      \mathcal{L}\left(|\phi\rangle, \lambda\right)
    =  \sum_{j = -K}^{K}\left|\hat{\phi}\left(\Delta +  \frac{j}{N}\right)\right|^2\\
    - \lambda\left(\sum_{n=0}^{N-1}\left|\phi[n]\right|^2 - 1\right).\label{eq:mainLagrange}
\end{multline}
where $\mathcal{L}$ is the Lagrangian and $\lambda \in \mathbb{R}$ is a Lagrange multiplier.

\subsection{Ideal Case Optimal Tapers}

Finding the optimal taper is equivalent to finding the stationary point of $\mathcal{L}$ that maximizes the objective function of \eqref{eq:org-opt-prob}. The stationary points of $\mathcal{L}$ can be found by setting all the partial derivatives of $\mathcal{L}$ to zero. Doing so, we get the following two conditions (see App.~\ref{app:QPDPSSDerivation} for a detailed derivation):
\begin{gather}
    \frac{1}{N}\sum_{n=0}^{N-1} e^{2\pi i \Delta(n - m)}\left(\frac{\sin\left(\pi(m - n)(2K+1)/N\right)}{\sin\left(\pi(m-n)/N\right)}\right)\phi[n]\nonumber \\
    \hspace{2.3cm}=\lambda\phi[m];\quad  \forall m \in\{0, \ldots, N-1\}, \label{eq:eigenvalue_full} \;\\
    \sum_{n=0}^{N-1}\left|\phi[n]\right|^2 = 1.
\end{gather}
The first equation is an eigenvalue equation, while the second equation is the normalization constraint. Note that all the eigenpairs $(|\phi\rangle, \lambda)$ of the above eigenvalue equation satisfy both the aforementioned conditions, and therefore, these eigenpairs are the stationary points of $\mathcal{L}$. Now, by substituting all these stationary points $(|\phi\rangle, \lambda)$ into the objective function of \eqref{eq:org-opt-prob}, we obtain the following:
\begin{align}
    \sum_{j = -K}^{K}\left| \hat{\phi}\left(\Delta + \frac{j}{N}\right)\right|^2 =  \lambda.
\end{align}
This implies that the stationary point  $(|\phi\rangle, \lambda)$ that maximizes the objective function is the eigenvector $|\phi\rangle$ with maximum eigenvalue, $\lambda$. Since the objective function is the probability of outputting one of the $2K+1$ phase estimates $\delta$-close to $\theta$, using the eigenvector of~\eqref{eq:eigenvalue_full} with the maximum eigenvalue as our taper will maximize this probability.

In App.~\ref{app:relationship}, we establish an explicit connection between the eigenvectors of   \eqref{eq:eigenvalue_full} and the periodic
discrete prolate spheroidal sequences (P-DPSS)~\cite{zhu2017eigenvalue}. We call the former quantum periodic discrete prolate spheroidal sequences (QP-DPSS), and it is easy to see that they depend on the value of $\Delta$. Then, combining results from App.~\ref{app:relationship} with known results for the P-DPSS in~\cite{xu1984periodic}, we observe that the eigenvector of \eqref{eq:eigenvalue_full} with the largest eigenvalue has an eigenvalue of $1$, regardless of the value of $\Delta$. In other words, there exists a taper for which the tQPE algorithm outputs one of the $\delta$-close $2K+1$ phase estimates with probability 1. 

It should not be surprising that the above observation holds true. To understand this intuitively, we can break it down into two cases. In the first case, we consider $\Delta = 0$, that is $\theta$ lies exactly on one of the grid points (see \eqref{eq:def-Delta}). In this case, the standard QPE algorithm using the tophat taper always returns $\theta$ with probability 1 because now $\theta$ itself is one of the possible phase estimates. 
On the other hand, if we consider the case where $0 < \Delta \leq \frac{1}{2N}$ or $-\frac{1}{2N} \leq \Delta < 0$, then
this case can also be converted into the $\Delta=0$ case by shifting the grid of possible phase estimates by $\Delta$. To accomplish this, we simply apply a unitary operator parameterized by $\Delta$ to the tophat taper. This operation shifts all the $2^p$ grid points by $\Delta$, such that $\theta$ now lies exactly on a grid point. After that, we output $\theta$ with probability 1 because it is now a possible phase estimate.

For example, for $K=0$ one can achieve the success probability 1 with the optimal taper given by
\begin{align}
    \phi^\Delta[n] = \frac{e^{2\pi i \Delta n}}{\sqrt{N}} \; .\label{eq:K0Delta}
\end{align}
As a check, we note that for $\Delta=0$, we recover the tophat taper, which is known to have success probability 1 when the true phase is on the grid of estimates. In Fig.~\ref{fig:sinevscosine} we plot the success probability of this taper for $\Delta=\pm1/2N$. The above procedure is described in greater detail in App.~\ref{app:relationship}.

\subsubsection{Special Case: $\Delta=\pm 1/2N$}
\label{sec:special_case_Delta}
In this case, there are pairs of grid points that are equidistant to the exact solution. So, it makes sense to consider the probability of outputting an even number of estimates as opposed to an odd number. More specifically, we will focus on the case where we only consider the nearest two estimates. Eq.~\eqref{eq:K0Delta} shows tapers that return the closest estimate with probability one for arbitrary $\Delta$. In the special case $\Delta=\pm 1/2N$, there are actually two such tapers: one shown in Eq.~\eqref{eq:K0Delta} and the other obtained by letting $\Delta \rightarrow - \Delta$. One of these tapers outputs the larger and the other the smaller of the two closest estimates with unit probability. Any linear combination of these tapers also succeeds with unit probability. One might be interested in combinations that output each closest estimate with equal probability. One such pair of tapers is
\begin{align}
    \phi^\text{sin}[n] &= \frac{\sin(\pi n/N)}{\sqrt{N/2}} \; ,\label{eqn:sine_taper} \\
    \phi^\text{cos}[n] &= \frac{\cos(\pi n/N)}{\sqrt{N/2}} \; .
\end{align}
Although these two tapers perform the same on $\Delta=\pm 1/2N$, their performance on other $\Delta$ is very different. In Fig.~\ref{fig:sinevscosine} we show that the sine taper is superior to the cosine taper because it performs better at all other values of $\Delta$. 
Note that $\phi^{\pm1/2N}$ has similar properties to the tophat taper by construction but the sine taper behaves qualitatively differently. This is because in constructing the sine taper we used the additional degree of freedom afforded by taking a linear combination of $\phi^{\pm1/2N}$ to improve the performance of the taper over all $\Delta$. 
In fact, the sine (cosine) taper can be obtained by minimizing (maximizing) the average-case error defined in the next section among all linear combinations between the two QP-DPSS tapers above.
\begin{figure*}[htbp]
\centering
\subfloat{%
  \hspace{4.5cm}\includegraphics[width=0.75\textwidth]{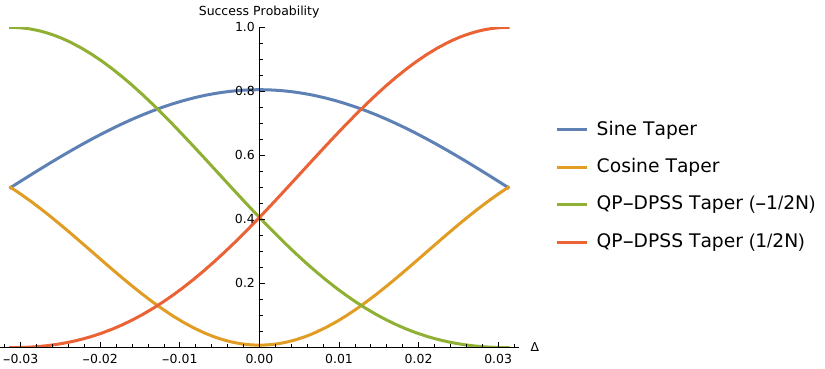}
  }
    \caption{We plot the probability of the $\phi^{-1/2N}$ (green), $\phi^{1/2N}$ (red), $\phi^\text{sin}$ (blue), and $\phi^\text{cos}$ (orange) tapers to output the closest phase estimate as a function of $\Delta$, for $N=2^5$. Both cosine and sine tapers achieve 0.5 at $\Delta=\pm1/2N$. This means with unit probability one of the two closest estimates will be returned. }
    \label{fig:sinevscosine}
\end{figure*}

\subsection{Average-case Optimal Tapers}\label{sec:average-case}

From the development above, it is important to note that in general the value of $\Delta$ is not known \textit{a priori}. Moreover, since $\Delta$ depends on the phase $\theta$, it can differ for distinct phases, particularly when the input state is a superposition of eigenvectors of $U$. Since we are interested in the coherent QPE problem, the same taper must work well for all values of $\Delta$ in order to be useful. Therefore, in this subsection, we focus on finding the optimal taper that works best on average. 

The optimization problem corresponding to this is
\begin{align}
    1 - \overline{\epsilon}^\text{opt} \coloneqq \max_{|\phi\rangle \in \mathcal{H}_{\text{t}}} \quad&  \mathbb{E}_{\Delta \sim \mathcal{D}}\left[ \sum_{j=-K}^K\left|\hat{\phi}\left(\Delta + \frac{j}{N}\right)\right|^2\right] \notag\\
    \text{subject to} \quad &  \sum_{n=0}^{N-1}\left|\phi[n]\right|^2 = 1,\label{eq:org-opt-prob-avg}
\end{align}
where $\overline{\epsilon}^\text{opt}$ denotes the optimal average error probability. In addition, let us denote the average success probability for any given taper $|\phi\rangle$, that is, the objective function above, as follows:
\begin{equation}
    1 - \overline{\epsilon}^{\phi} \coloneqq \mathbb{E}_{\Delta \sim \mathcal{D}}\left[ \sum_{j=-K}^K\left|\hat{\phi}\left(\Delta + \frac{j}{N}\right)\right|^2\right]. 
\end{equation}
The Lagrangian, $\overline{\mathcal{L}}$, for the modified optimization problem in~\eqref{eq:org-opt-prob-avg} is then
\begin{multline}
\overline{\mathcal{L}}\left(|\phi\rangle, \lambda\right)
    =  \mathbb{E}_{\Delta \sim \mathcal{D}}\left[\sum_{j = -K}^{K}\left|\hat{\phi}\left(\Delta + \frac{j}{N}\right)\right|^2\right]\\
    - \lambda\left(\sum_{n=0}^{N-1}\left|\phi[n]\right|^2 - 1\right),\label{eq:avg-case-mainLagrange}
\end{multline}
where $\mathcal{D}$ is the uniform distribution defined over the interval $\left[-\sfrac{1}{2N}, \sfrac{1}{2N}\right]$. 
Then solving for the above Lagrangian, we get the following eigenvalue equation (see App.~\ref{app:averagecase} for a detailed derivation):
\begin{multline}
    \frac{1}{N}\sum_{n=0}^{N-1} \frac{\sin\left(\pi(m - n)(2K+1)/N\right)}{\pi(m-n)}\phi[n]\\
    =\lambda\phi[m];\quad  \forall m \in\{0, \ldots, N-1\}. \label{eq:avg-case-eigenvalue_full} \;
\end{multline}

We observe that this eigenvalue equation appears commonly in the field of classical signal processing and that DPSSs are eigenvectors of this equation. 
Now, with a similar reasoning as we used previously for the ideal case, the eigenvector $|\phi\rangle$ that maximizes the objective function,  $\overline{\mathcal{L}}$ (i.e., average success probability),  is the one with maximum eigenvalue, denoted by $\lambda_{\max}$. Hence, the optimal eigenvector is the DPSS associated with $\lambda_{\max}$, and the optimal value itself equals this maximum eigenvalue:
\begin{equation}
1 - \overline{\epsilon}^\text{opt} =  1 - \overline{\epsilon}^\text{DPSS} = \lambda_{\max}.    
\end{equation}
Thus, using the DPSS taper achieves the optimal average success probability, equal to $\lambda_{\max}$. Also, note that when we mention DPSS from here on, we refer to the DPSS with $\lambda_{\max}$. 
For completeness, we provide a brief overview of DPSSs from the classical signal analysis point of view in App.~\ref{app:classicaldpss}. 

\begin{rmk}An important thing to note is that the derivation of DPSSs from the classical signal processing point of view (provided in App.~\ref{app:classicaldpss}) relies heavily on the fact that the spectrum of frequencies is continuous. In contrast, we have a discrete frequency spectrum in the case of QPE. However, by taking the average of $\Delta$ over the uniform distribution, we effectively transform the problem to a continuous one.  
Thus, the average-case optimization problem has the same form as the classical continuous-frequency case, and as a result, the DPSS taper turns out to be the optimal taper.
\end{rmk}

We now state an important theorem from the classical signal processing literature~\cite{karnik2021improved} that provides (to date) the most stringent non-asymptotic bounds on the eigenvalues of the DPSS kernel in \eqref{eq:avg-case-eigenvalue_full}. Note that this kernel is also known in the literature as the DPSS kernel. 
The key idea is to use this theorem to derive a non-asymptotic bound on the number of additional ancilla qubits, i.e., $m$, needed when employing the DPSS taper to achieve the average success probability $1 - \overline\epsilon^\text{DPSS}$ of at least $1-\overline{\epsilon}$, for some $\overline{\epsilon} > 0$. Since $1 - \overline{\epsilon}^\text{opt} =  1 - \overline{\epsilon}^\text{DPSS}$, we also directly have
\begin{equation}
    1 - \overline{\epsilon}^\text{opt} \geq 1 - \overline{\epsilon}.
\end{equation}

\begin{thm}[{\cite[Corollary 1]{karnik2021improved}}]\label{th:Theorem1}
    For any $N,K \in \mathbb{N}$ such that $K \in \left\{0, \ldots, N/2 - 1\right\}$, the maximum eigenvalue $\lambda_{\max}$ of the DPSS kernel in \eqref{eq:avg-case-eigenvalue_full} satisfies
    \begin{align}\label{eq:max-DPSS-eigenvalue-lower-bound}
        \lambda_{\max} \geq 1 - \min&\left\{8\exp\left[-\frac{2K-1}{\frac{2}{\pi^2}\log(4N)}\right]\right.,\nonumber\\
        &\left.10\exp\left[-\frac{2K-6}{\frac{2}{\pi^2}\log\left(100K+75\right)}\right]\right\}.
    \end{align}
\end{thm}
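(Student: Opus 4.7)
The plan is to exploit the variational characterization of $\lambda_{\max}$: it equals the maximum of the Rayleigh quotient $\langle \phi, T \phi\rangle$ over unit vectors $\phi$, where $T$ is the $N\times N$ Hermitian Toeplitz kernel appearing in \eqref{eq:avg-case-eigenvalue_full}. Crucially, $T$ factors as $T = P_N B_W P_N$, where $P_N$ is the projection onto sequences supported on $\{0,\dots,N-1\}$ and $B_W$ is the ideal band-limiting projector onto frequencies in $[-W/2, W/2]$ with $W = (2K+1)/N$. In particular, $\lambda_{\max}(T) = \|B_W P_N\|^2$, so lower-bounding $\lambda_{\max}$ reduces to exhibiting a unit-norm, time-limited trial sequence whose discrete-time Fourier transform is heavily concentrated in $[-W/2,W/2]$.

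The first step is to construct such a trial sequence explicitly. A clean choice is a normalized, sampled version of a smooth bump whose continuous Fourier transform decays rapidly outside a band slightly narrower than $W$; equivalently, one can take a Kaiser-type window or a sampled continuous prolate spheroidal function tuned to the time-bandwidth product $2K+1$. Plugging $\phi$ into the Rayleigh quotient and using Parseval, one obtains
\[
1 - \frac{\langle \phi, T\phi\rangle}{\|\phi\|^2} \;=\; \frac{\int_{|f|>W/2}|\hat{\phi}(f)|^2\,df}{\int_{-1/2}^{1/2}|\hat{\phi}(f)|^2\,df}\,,
\]
and the problem reduces to bounding the out-of-band leakage on the right.

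The main technical step, and the one that yields the specific rate $\pi^{2}(2K-1)/(2\log(4N))$, is a quantitative Chebyshev-polynomial argument. One approximates the characteristic function of the passband on a discrete grid by a polynomial $p$ of degree $d$ with $p\!\equiv\!1$ on the passband and $|p|$ tightly controlled outside it; because $T$ is Toeplitz, $p(T)$ can be controlled by bounds on bivariate trigonometric polynomials, which turn the polynomial approximation error into an eigenvalue bound. Optimizing $d$ against $N$ yields the first branch of the minimum in \eqref{eq:max-DPSS-eigenvalue-lower-bound} with the $\log(4N)$ denominator, while a different optimization that uses only the effective support of $T$ (of size $\sim 2K+1$) in place of $N$ yields the second branch with $\log(100K+75)$. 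The universal factor $\pi^{2}/2$ emerges as the sharp Bernstein/Chebyshev growth rate outside $[-1,1]$, and the prefactors $8$ and $10$ come from tracking the normalization and boundary contributions in each trial-vector construction.

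The hard part will be producing the sharp constants rather than merely an asymptotic rate: a soft Landau--Pollak double-orthogonality argument already shows $\lambda_{\max}\to 1$ exponentially in $K$, but pinning down $\pi^{2}/2$ and the logarithmic denominators requires care in the Chebyshev approximation (avoiding slack in the Markov-type inequalities used) and in the normalization of the trial vector. I would expect the final step—taking the minimum of the two branches, which formalizes the case split between ``$K$ moderate compared to $\log N$'' and ``$K$ small but $N$ very large''—to fall out once both bounds are in hand, essentially by choosing whichever polynomial optimization is tighter in the given regime.
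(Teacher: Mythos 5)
The paper does not prove this statement at all: Thm.~\ref{th:Theorem1} is imported verbatim as Corollary~1 of Ref.~\cite{karnik2021improved}, so there is no in-paper proof to compare against. Judged on its own terms, your sketch has a genuine gap. The variational setup ($T=P_N B_W P_N$, lower-bounding $\lambda_{\max}$ via a Rayleigh quotient) is sound in principle, and a time-limited Kaiser- or Gaussian-type trial vector would indeed give \emph{some} bound of the form $\lambda_{\max}\ge 1-Ce^{-cK}$. But every step that would produce the actual content of the theorem --- the constants $8$ and $10$, the denominators $\tfrac{2}{\pi^2}\log(4N)$ and $\tfrac{2}{\pi^2}\log(100K+75)$, and the shifts $2K-1$ and $2K-6$ --- is asserted rather than executed: no trial vector is written down, no leakage integral is bounded, and the claim that $\pi^2/2$ ``emerges as the sharp Bernstein/Chebyshev growth rate'' is not an argument. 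Controlling the truncation error of the window and its out-of-band leakage simultaneously with these constants is precisely the difficulty, and you acknowledge it is unresolved.

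Moreover, the actual proof in \cite{karnik2021improved} is structurally different from what you propose. It does not construct a trial vector at all. It bounds the number of eigenvalues of the prolate matrix lying in the transition region $(\epsilon,1-\epsilon)$ by evaluating traces of Chebyshev polynomials applied to the operator $\mathcal{T}_N\mathcal{B}_W\mathcal{T}_N$ itself (this is where $\tfrac{2}{\pi^2}$ and the two logarithmic scales, one in $N$ and one in the effective bandwidth $\sim K$, come from), and then combines that count with the exact trace identity $\sum_k\lambda_k = 2NW = 2K+1$ together with $\lambda_k\le 1$ to force the top eigenvalue above $1-\epsilon$. Your sketch never invokes the trace identity, which is the step that converts ``few eigenvalues sit in the transition band'' into a statement about $\lambda_{\max}$; without it, or without a fully worked trial-vector computation, the bound as stated is not established.
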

As the maximum eigenvalue of the DPSS kernel in \eqref{eq:avg-case-eigenvalue_full} is equal to the DPSS average success probability, the bound on the former from the theorem given above naturally bounds the latter. Therefore, one way to infer the above theorem statement is as follows: when the DPSS taper is used, the tQPE algorithm will output one of the $2K+1$ phase estimates closest to $\theta$ with an average success probability lower-bounded by the quantity on the right-hand side of \eqref{eq:max-DPSS-eigenvalue-lower-bound}.

Applying Thm.~\ref{th:Theorem1} for our case, i.e., for the tQPE algorithm that uses the DPSS taper, we obtain the following result that holds for all $N, K \in \mathbb{N}$ which we term as the non-asymptotic regime. 
\begin{thm}[Non-asymptotic]\label{Corollary1}
   Let $\delta \in (0, 1]$. To guarantee that the tQPE algorithm outputs a phase estimate that lies within a $\delta$ distance of the true phase, with an average success probability of at least $1 - \overline{\epsilon}$, it suffices to use
\begin{equation}\label{eq:non-asymptotic-scaling}
m = \left\lceil \log_2\left(\left\lceil 175\left(\log\left(10/\overline{\epsilon}\right) + 1\right)^2\right\rceil + 1\right)\right\rceil + 1 \; .
\end{equation}
additional ancilla qubits.
\end{thm}
\begin{proof}
    We employ the tQPE algorithm that uses the DPSS taper and outputs one of the $\delta$-close $p$-bit phase estimates. Here $p = \ell + m$ and $\ell$ is the minimum positive integer such that $\delta \geq 1/2^{\ell+1}$. This implies that there are $2^m$ grid points (phase estimates) that are within $\delta$ distance to the true phase. 
    
    In Thm.~\ref{th:Theorem1}, we set $N = 2^p$ and $K = 2^{m-1} - 1$ so that we use the bound from Thm.~\ref{th:Theorem1} to lower-bound the DPSS average success probability for outputting one of the $2K+1$ $p$-bit phase estimates that are within $\delta$ distance to the true phase. Since we want
    \begin{equation}
        1 - \overline\epsilon^\text{DPSS} = \lambda_{\max} \geq 1 - \overline{\epsilon},
    \end{equation} 
    we have the following directly from~\eqref{eq:max-DPSS-eigenvalue-lower-bound}:
    \begin{align}\label{eq:error}
        \overline{\epsilon} \geq \min&\left\{8\exp\left[-\frac{2K-1}{\frac{2}{\pi^2}\log(4N)}\right],\nonumber\right. \\
        &\left.10\exp\left[-\frac{2K-6}{\frac{2}{\pi^2}\log(100K+75)}\right]\right\}.
    \end{align}
    Without loss of generality, we choose to work with the second argument of the $\min$ function in the inequality above, i.e., 
    \begin{equation}
  \overline{\epsilon} \geq 10\exp\left[-\frac{ 2K - 6}{\frac{2}{\pi^2}\log(100K + 75)}\right] \; .\label{eq:before-cor-log-flip}
\end{equation}

Taking logarithms and flipping the sign of both sides, we find
\begin{equation}
  \log(10/\overline{\epsilon}) \leq \frac{ 2K - 6}{\frac{2}{\pi^2}\log(100K + 75)}.\label{eq:eps}
\end{equation}

By assuming that $K \geq 1$, we first provide a lower bound for the right-hand side of the above inequality in the following way:
\begin{align}
    \frac{ 2K - 6}{\frac{2}{\pi^2}\log(100K + 75)} &\geq \frac{2K - 6}{\frac{2}{\pi^2}\log(175K)} \\
    &\geq \frac{K-3}{\log\left(175K\right)} \\
    &= \frac{1}{175}\left(\frac{175K}{\log\left(175K\right)}\right) - \frac{3}{\log(175K)} \\
    &\geq \frac{1}{175}\left(\frac{175K}{\log\left(175K + 1\right)}\right) - 1 \\
    &\geq \frac{\sqrt{175K}}{175}- 1 \\
    &= \sqrt{\frac{K}{175}} - 1 \; ,
\end{align}
where the second-to-last inequality comes from the following standard logarithmic inequality:
\begin{equation}
    \frac{x}{\log(x+1)} \geq \sqrt{x+1} \text{ for $x \geq -1$.}
\end{equation}

Now, by enforcing the inequality, 
\begin{equation}
    \log\left(10/\overline{\epsilon}\right) \leq \sqrt{\frac{K}{175}} -1,
\end{equation}
we obtain $K \geq 175\left(\log\left(10/\overline{\epsilon}\right) + 1\right)^2$. 
In other words, as long as $K$ is at least $\lceil 175\left(\log\left(10/\overline{\epsilon}\right) + 1\right)^2\rceil$, we are guaranteed that  \eqref{eq:before-cor-log-flip} holds, and the error is bounded by $\overline{\epsilon}$. 
Recall that $2^m$ of the $p$-bit phase estimates are $\delta$-close. 
Setting $K = 2^{m-1}-1$, we only need 
\begin{equation}
    m =  \left\lceil \log_2\left(\left\lceil 175\left(\log\left(10/\overline{\epsilon}\right) + 1\right)^2\right\rceil + 1\right)\right\rceil + 1
\end{equation}
additional qubits to ensure that our tQPE algorithm has average success probability at least $1-\overline{\epsilon}$. This concludes the proof. 
\end{proof}

\begin{rmk}
Although the analytical proof for Thm.~\ref{Corollary1} implies that $K$ needs to be of order $\log^2(1/\overline{\epsilon})$ to achieve an average success probability of at least $1-\overline{\epsilon}$, for all practical purposes $K$ only needs to be of order $\log(1/\overline{\epsilon})$. For example, for all $\overline{\epsilon}\ge10^{-81}$, we have that $K\le 192$, and as a result, we can replace $K$ inside the logarithm in~\eqref{eq:eps} with 192 and find that $K=\left\lceil \log(10/\overline{\epsilon})\right\rceil +3$ is sufficient.
\end{rmk}

Since Thm.~\ref{th:Theorem1} holds for any $N \in \mathbb{N}$, our result described in Thm.~\ref{Corollary1} also holds for any $N$ including the asymptotic regime where $N\to \infty$ and $\overline{\epsilon} \to 0$. Consequently, any future improvement or tightening of the bound in Thm.~\ref{th:Theorem1} will directly yield a correspondingly sharper and better bound in Thm.~\ref{Corollary1}. If instead we focus on the asymptotic regime, stronger bounds for $m$ can be obtained. More precisely, in this setting, we fix a target precision $\delta \in (0, 1/8]$ and study the limit where the DPSS average success probability approaches 1, i.e., $1 - \overline{\epsilon} \to 1$ (or equivalently, $\overline{\epsilon} \to 0$). The following theorem states our result in this asymptotic case. 

\begin{thm}[Asymptotic]\label{thm:asymptotic-K-scaling} 
    Let $\delta \in (0, 1/8]$. To guarantee that the tQPE algorithm outputs a phase estimate that lies within a $\delta$ distance of the true phase, with an average success probability of at least $1 - \overline{\epsilon}$, it suffices to use
    \begin{equation}
        m \sim \left \lceil\log_2\log\frac{1}{\overline{\epsilon}}\right \rceil
    \end{equation}
additional ancilla qubits.
\end{thm}
\begin{proof}
Here as well we use the tQPE algorithm with the DPSS taper, and the notations are similar to those used in the proof of Thm.~\ref{Corollary1}.

For this proof, we use the result of~\cite{slepian1978prolate} that gave asymptotic expressions for the eigenvalues of the DPSS kernel. In~\cite{slepian1978prolate}, the author denotes the half-bandwidth of the DPSS taper in the frequency domain by $W$ and so to keep the notation consistent with this paper that corresponds to $W = \delta - 1/(2N)$. 
From~\cite{slepian1978prolate}, for a fixed $W$, we have
\begin{equation}
    1 - \lambda_{\max} \sim C(W) N^{1/2} e^{-\gamma(W)N},
\end{equation}
where
\begin{equation}
    C(W) = \pi^{1/2} 2^{9/4} \,\alpha(W)^{1/4} (2 - \alpha(W))^{-1/2}.
\end{equation}
Here $\alpha(W)$ and $\gamma(W)$ are given by
\begin{align}
    \alpha &= 1 - \cos 2\pi W , \\
    \gamma &= \log\!\left(1 + \frac{2\sqrt{\alpha}}{\sqrt{2} - \sqrt{\alpha}}\right).
\end{align}

By the definition of $\sim$, for every $\eta\in(0,1)$ there exists a positive number $N_0$ such that for all $N \geq N_0$, we have
\begin{equation}
    1 - \lambda_{\max} \leq  (1+\eta)C(W) N^{1/2} e^{-\gamma(W)N}.\label{eq:finite-W-proof-lambda0}
\end{equation}

Now since $W = (2^{m}-1)/(2N)$ and $N = 2^{\ell+m}$, we have $NW = (2^{m}-1)/2 =: x$. Plugging $N = x/W$ into~\eqref{eq:finite-W-proof-lambda0}, we get
\begin{equation}
    1 - \lambda_{\max} \leq A(W, \eta) x^{1/2} e^{-\tau(W)x},
\end{equation}
where, again, for brevity, we introduce the following two quantities:
\begin{align}
    A(W, \eta) \coloneqq \frac{(1+\eta)C(W)}{\sqrt{W}}, \quad
\tau(W)\coloneqq \frac{\gamma(W)}{W}
\end{align}

We want that $1 - \lambda_{\max} \leq \overline{\epsilon}$. This implies that we want to find an $x$ such that the following holds:
\begin{equation}
    A(W, \eta) x^{1/2} e^{-\tau(W)x} \leq \overline{\epsilon}\label{eq:finite-W-proof-cond-x-e}.
\end{equation}
Let us pick the following $x$:
\begin{equation}
    x = \frac{1}{\tau(W)}\left( \log(\sfrac{A(W, \eta)}{\overline{\epsilon}}) + \frac{1}{2}\log\log(\sfrac{A(W, \eta)}{\overline{\epsilon}}) \right)\label{eq:finite-W-proof-x-value}.
\end{equation}

Now let us check if this value of $x$ satisfies the condition~\eqref{eq:finite-W-proof-cond-x-e}. We begin by plugging this value in~\eqref{eq:finite-W-proof-cond-x-e}:
\begin{equation}
    A(W, \eta) x^{1/2} e^{-\tau(W)x}
    =\overline{\epsilon}  \sqrt{\frac{x}{\log(\sfrac{A(W, \eta)}{\overline{\epsilon}})}}
\end{equation}
Now using the inequality
\begin{equation}
    \sqrt{\frac{x}{\log(\sfrac{A(W, \eta)}{\overline{\epsilon}})}} \leq \sqrt{\frac{2}{\tau(W)}},
\end{equation}
we have
\begin{equation}
    A(W) x^{1/2} e^{-\tau(W)x} \leq \overline{\epsilon} \sqrt{\frac{2}{\tau(W)}}.
\end{equation}
Furthermore, since for all $W \in (0, 1/2]$, we have $\tau(W) \geq 2 \pi$, we can say that
\begin{equation}
    A(W,\eta) x^{1/2} e^{-\tau(W)x} \leq \overline{\epsilon}.
\end{equation}
This completes the check that the value of $x$ that we picked satisfies the condition~\eqref{eq:finite-W-proof-cond-x-e}.

Since $x = (2^{m}-1)/2$, any integer
\begin{multline}
    m \geq \Bigg\lceil 
        \log_2\!\Bigg(
            \frac{1}{\tau(W)} 
            \Big[ 
                \log\!\left(\sfrac{A(W, \eta)}{\overline{\epsilon}}\right) \\
                + \frac{1}{2}\log\!\log\!\left(\sfrac{A(W, \eta)}{\overline{\epsilon}}\right) 
            \Big]
        \Bigg)
    \Bigg\rceil
\end{multline}
guarantees $1 - \lambda_{\max} \le \overline{\epsilon}$. Now for all $W\in (0, 1/4]$, we bound $A(W, \eta)$ from above as
\begin{equation}
    A(W, \eta) \leq 8\sqrt{2}\pi .
\end{equation}
With this, we finally get a fully explicit sufficient bound for $m$ in terms of $\overline{\epsilon}$ and $W$:
\begin{align}
m \!\ge\! \Bigg\lceil 
\log_{2}\!\Bigg(
\frac{1}{\tau(W)}
\Big[
\log\!\Big(\sfrac{8\sqrt{2}\pi}{\overline{\epsilon}}\Big)
\!+\! \frac{1}{2} \log\!\log\!\Big(\sfrac{8\sqrt{2}\pi}{\overline{\epsilon}}\Big)
\Big]
\Bigg)
\Bigg\rceil.
\end{align}
Since $\tau(W) \geq 2\pi $ for all $W \in (0, 1/2]$, asymptotically as $\epsilon \rightarrow 0$, we finally have
\begin{equation}
    m \sim \left \lceil\log_2\log\frac{1}{\overline{\epsilon}}\right \rceil
\end{equation}
in the finite-$W$ or finite-$\delta$ (precision) regime. This concludes the proof.
\end{proof}

Above, we have given both asymptotic and non-asymptotic bounds on the DPSS average error probability $\overline{\epsilon}$. But the true error probabilities can be much better than suggested by these bounds. In Fig.~\ref{fig:DPSSeigenvalues} we show the average error probabilities for some values of $(\ell, m)$. We observe that already with $m=6$, the average error probabilities below $10^{-80}$ are achieved. With $m=4$, the error probability is below $10^{-17}$.

\begin{figure}[htbp]
\includegraphics[width=0.5
\textwidth]{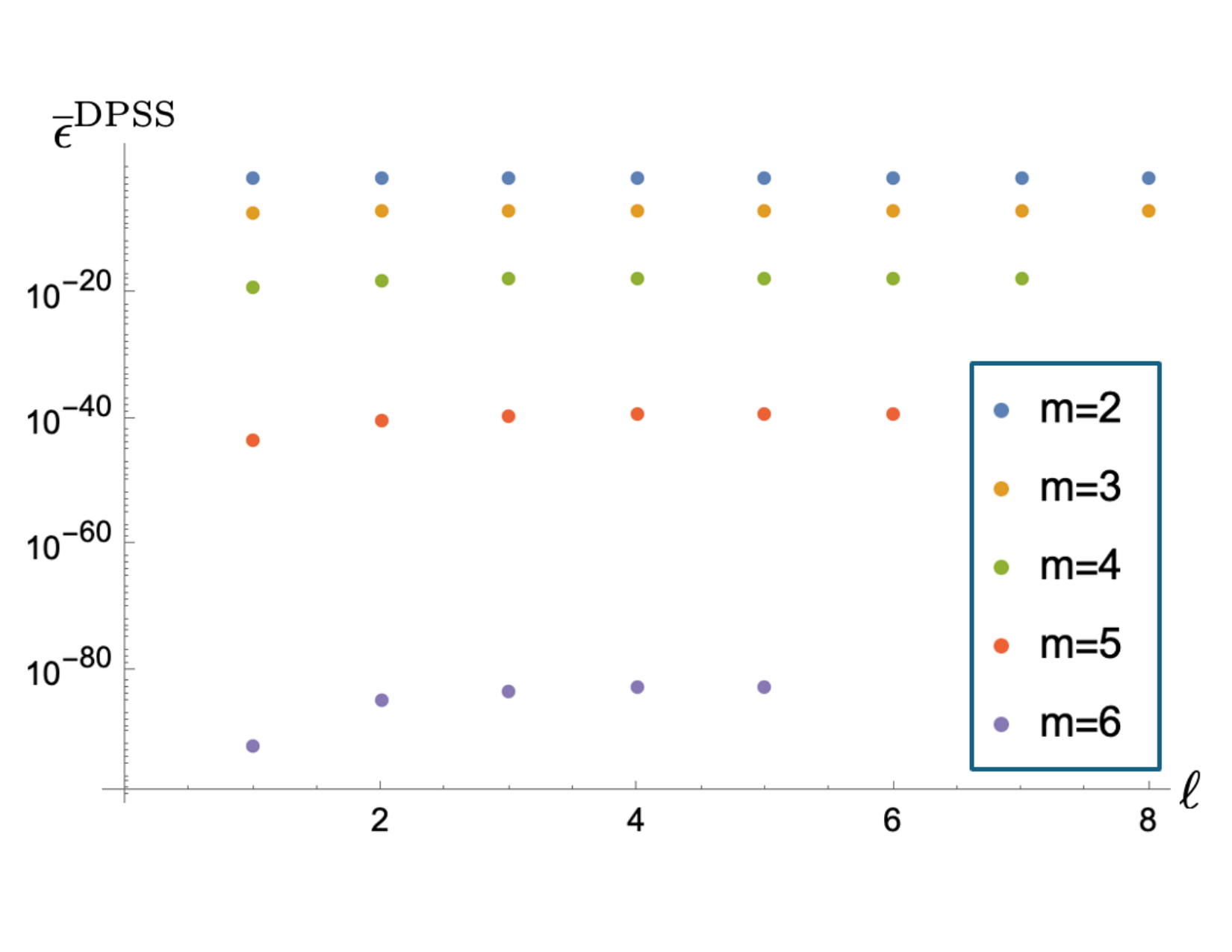}
\caption{
DPSS average error probability as a function of $\ell$ for values of $m\le 6$. We observe that the error probabilities quickly converge with increasing $\ell$. Calculations were performed with Mathematica by setting \texttt{WorkingPrecision}=250.
}
\label{fig:DPSSeigenvalues}
\end{figure}

\subsection{Analysis of worst-case error}

In the discussion above, we optimized with respect to the average over the unknown offset $\Delta$, which led us naturally to the DPSS taper: it is the taper that does best on average when the true phase can sit anywhere between two grid points. But an average guarantee does not tell us what happens at the worst-case (most unfavorable)~$\Delta$, which we denote it as~$\Delta^*$.
Ideally, we would like to find a taper that has the smallest error probability for its worst-case $\Delta^*$. 
For most tapers used in classical signal processing (including the tophat and DPSS tapers), $\Delta^*=\pm 1/2N$ corresponds to their worst-case offset, where the corresponding error probability reaches its maximum. 
However, in general, different tapers have different worst-case $\Delta^*$, which makes the corresponding optimization problem too hard to solve. In this section, we analyze the performance of the DPSS taper in its worst case, which, as mentioned above, is at $\Delta^*=\pm1/2N$. We find that although not optimized for the worst-case scenario, the DPSS taper performs well and is optimal asymptotically. 

To perform such a comparison, we first restate the kernel of the DPSS taper which coincides with the eigenvalue equation shown in \eqref{eq:avg-case-eigenvalue_full}:
\begin{multline}
    \frac{1}{N}\sum_{n=0}^{N-1} \frac{\sin\left(\pi(m - n)(2K+1)/N\right)}{\pi(m-n)}\phi[n]\\
    =\lambda\phi[m];\quad  \forall m \in\{0, \ldots, N-1\}.  \;
\end{multline}
Recall that the DPSS taper is the eigenvector of the DPSS kernel with the largest eigenvalue.
When considering the success probability of the DPSS taper in the worst case, where $\Delta = \pm 1/2N$, we let $|\phi\rangle$ denote the DPSS taper and adapt the calculations in App.~\ref{app:averagecase} to include an additional complex rotation $e^{2\pi i (n' - n)\Delta}$ that translates the phase estimates in the frequency domain by $\Delta = \pm 1/2N$ to obtain the following expression:
\begin{align}
    &\sum_{j = -K}^{K}\left|\hat{\phi}\left(\Delta + \frac{j}{N}\right)\right|^2 \notag\\
    &=\frac{1}{N}\sum_{n,m = 0}^{N-1} e^{2 \pi i (m-n)\Delta} \notag \\
    & \hspace{2em} \times \frac{\sin\left(\pi (m-n)\left(2K+1\right)/N\right)}{\sin\left(\pi (m-n)/N\right)}\phi^*[n]\phi[m] \label{eqn:worst-case-DPSS-kernel-1}\\
    &= \frac{1}{N}\sum_{n,m=0}^{N-1}  \cos\left(\pi (m-n)/N\right)\notag\\
    &\hspace{2em}\times\frac{\sin\left(\pi(m - n)\left(2K+1\right)/N\right)}{\sin\left(\pi (m-n)/N\right)}\phi^*[n]\phi[m],\label{eqn:worst-case-DPSS-kernel}
\end{align}
where the last equality comes from substituting $\Delta = \frac{1}{2N}$ and observing how the complex terms from the complex exponential annihilate each other when we run over all summands indexed by $m$ and $n$.
In other words, the success probability for the DPSS taper in the worst-case would then be the quantity shown in \eqref{eqn:worst-case-DPSS-kernel}.

From Thm.~\ref{th:Theorem1}, we make the observation that the maximum eigenvalue, or the success probability, of the average-case optimal DPSS taper tends to 1 as $N\rightarrow \infty$.
This is well-aligned with our intuition because we would expect our DPSS taper to have greater spectral concentration in the central lobe in the frequency domain.   
At the same time, we note that the worst-case success probability of the DPSS taper also converges to 1 when $N \rightarrow \infty$.
This is a result of the complex exponential term in \eqref{eqn:worst-case-DPSS-kernel-1} approaching 1 as $N \rightarrow \infty$, resulting in the worst-case DPSS success probability expression approaching the average-case success probability in the same limit.
Numerical evidence in Fig.~\ref{fig:DPSSs} suggests that the success probability of DPSS taper in the worst case is still $O(\epsilon)$.

\begin{figure*}[htbp]
\centering
\includegraphics[width=1.00
\textwidth]{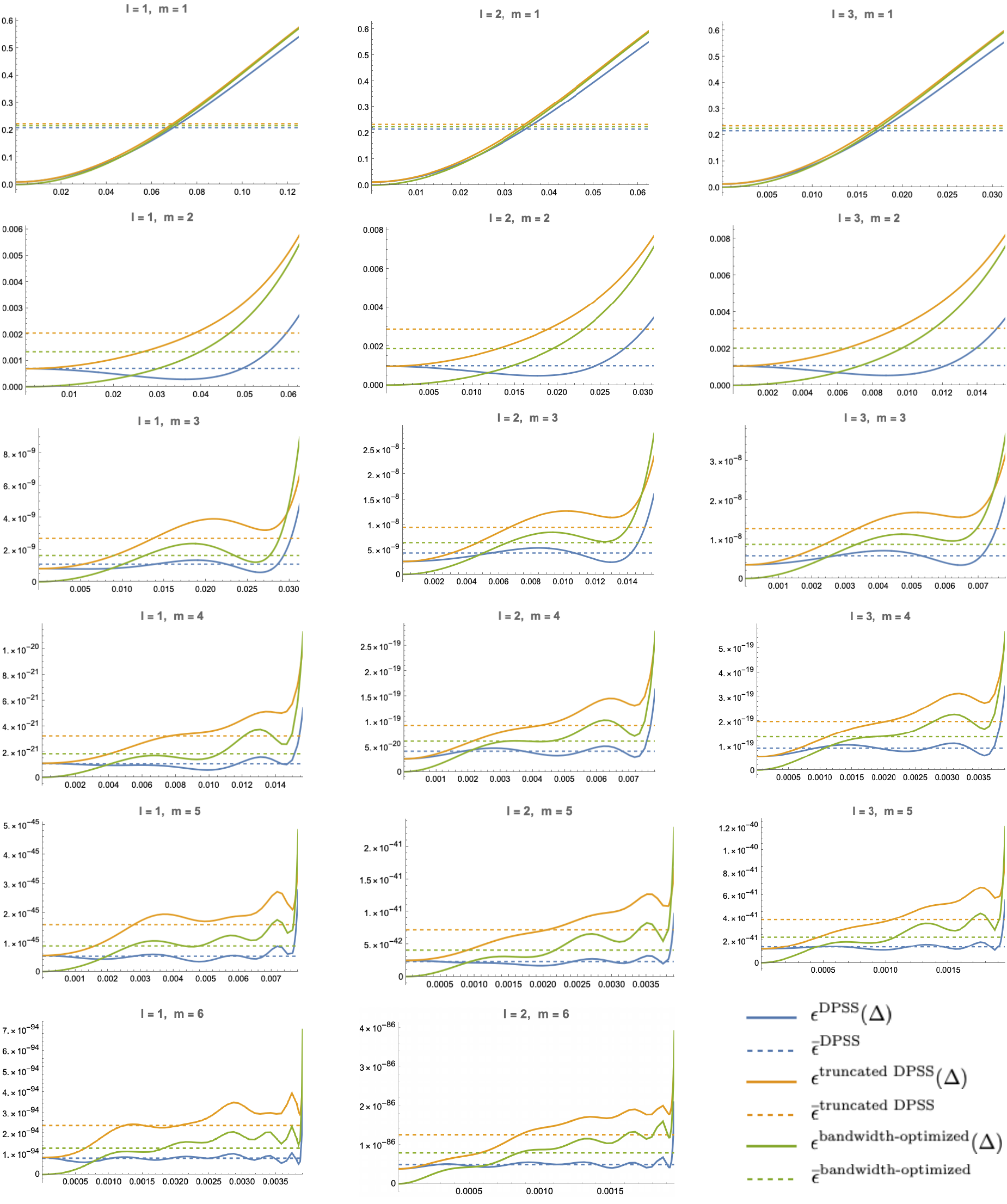}
\caption{
Error probability of the various tapers for $\ell=1,\dots,3$ and $m=1,\dots,6$. The solid lines denote error probability as a function of $\Delta$ from 0 to $1/(2N)$, whereas the dashed lines of the same color are the corresponding averages in this interval. DPSS tapers are described in Sec.~\ref{sec:average-case}, truncated DPSS tapers in Sec.~\ref{sec:truncatedDPSS}, and bandwidth-optimized tapers in Sec.~\ref{sec:optimizedtaper}. We find that the average error of the bandwidth-optimized taper is always less than twice the error of the DPSS taper. Calculations were performed with Mathematica by setting \texttt{WorkingPrecision}=250. 
}
\label{fig:DPSSs}
\end{figure*}

\subsection{Comparison of tapers}\label{sec:numerics}

In this section, we provide numerics for three of the tapers we have discussed so far: the DPSS taper, the sine taper, and the tophat taper.  In Figs.~\ref{fig:l3m4K7_comp},~\ref{fig:l3m3K3_comp},~\ref{fig:l3m2K1_comp},~and~\ref{fig:l3m1K0_comp}, we plot the success probability of each taper as a function of the offset $\Delta$, i.e., the distance between the true phase and the grid point closest to it, for various values of $m$. The DPSS taper is optimal for the average case.
The tophat taper is designed to output the true phase with unit probability when the true phase coincides with a grid point and the sine taper has the property that it outputs one of the two nearest estimates with unit probability when the true phase is exactly between two grid points. Both of these properties are confirmed by the plots. Although the DPSS kernel does not succeed with unit probability for any $\Delta$ it does perform well for the entire range, which is consistent with the fact that it is optimal for the average case. 

\begin{figure*}[h!]
\centering
\subfloat[$m = 4, K = 7$]{%
  \includegraphics[width=0.56\textwidth]{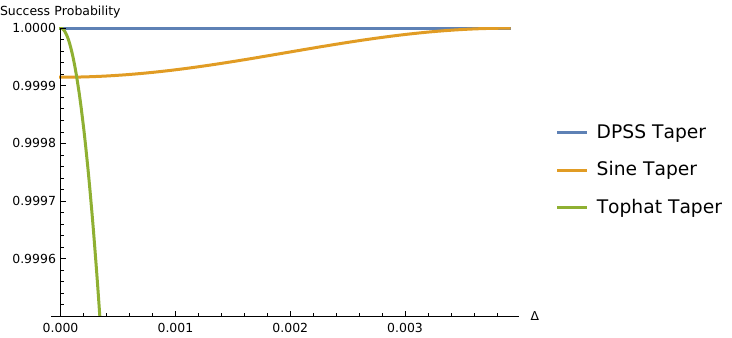}%
  \label{fig:l3m4K7_comp}%
}\quad 
\subfloat[$m =3, K = 3$]{%
  \includegraphics[width=0.42\textwidth]{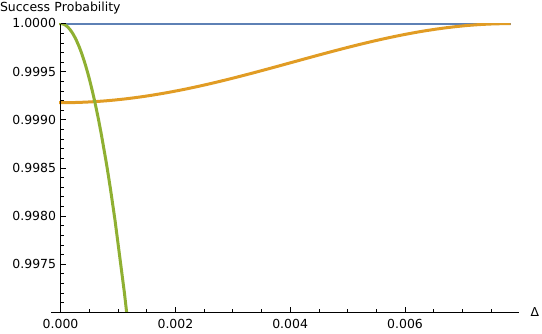}%
  \label{fig:l3m3K3_comp}%
}\\
\subfloat[$m = 2, K = 1$]{%
  \includegraphics[width=0.42\textwidth]{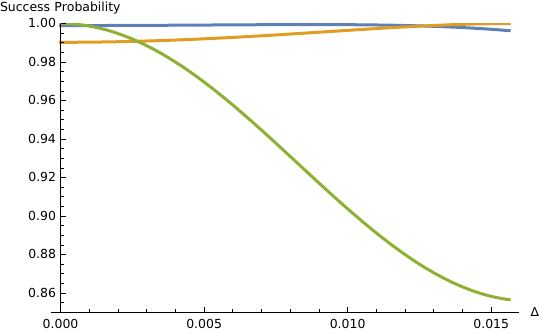}%
  \label{fig:l3m2K1_comp}%
}\quad
\subfloat[$m = 1, K = 0$]{%
  \hspace{2.5cm}\includegraphics[width=0.42\textwidth]{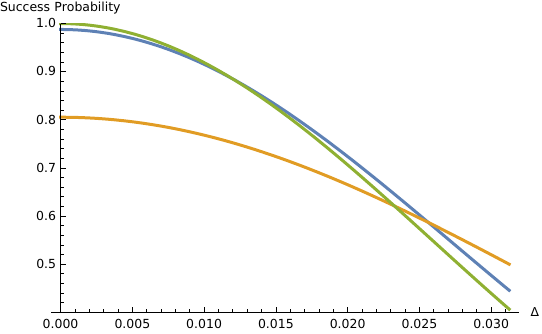}%
  \label{fig:l3m1K0_comp}%
}
\caption{Comparison between the DPSS taper, the sine taper, and the tophat taper.}
\end{figure*}

In Fig.~\ref{fig:DPSSs}, we plot the failure probability of the DPSS taper as a function of $\Delta$ for various values of $l$ and $m$ (solid blue lines). In contrast to Figs.~\ref{fig:l3m4K7_comp},~\ref{fig:l3m3K3_comp},~\ref{fig:l3m2K1_comp},~and~\ref{fig:l3m1K0_comp}, we do not plot the sine and tophat tapers.
We see that the error profile of the DPSS tapers has more structure compared to the tophat and sine tapers. One trend is that for increasing $m$ the error probability concentrates around its average value for most $\Delta$ except near $\Delta=1/(2N)$, where it shoots up. But even in that region, the error appears to be $O(\epsilon)$. In a sense, for large $m$, i.e., high success probability, the DPSS taper does very well almost uniformly (besides being optimal on average).

\section{Error Analysis for QPE with Uncomputation}\label{sec:uncomputing}

In this section, we bound the error in applications of QPE that require uncomputation of the estimated phase. The analysis below applies to all coherent QPE algorithms. In applications that require coherent QPE, like HHL and Quantum Metropolis Sampling, QPE is used in order to approximately implement a unitary of the form:
\begin{equation}
    cT_{CA} \coloneqq \sum_j T(\theta_j)_{C} \otimes |\psi_j\rangle\langle\psi_j|_A,\label{eq:T-unitary}
\end{equation}
where $A$ and $C$ are two registers (see the left-hand side of Figure~\ref{fig:uncomputing}) and $T(\theta)$ is a unitary operator parametrized by $\theta\in[0,2\pi)$. Furthermore, $\{|\psi_j\rangle\}_j$ are the eigenvectors of some unitary $U$ with the corresponding eigenvalues $\{e^{2\pi i \theta_j}\}_j$. We will assume that $T$ is $L_T$-Lipschitz continuous for some $L_T > 0$:
\begin{align}
    \label{eq:lipschitz}
    \| T(\theta)-T(\theta')\| \le L_T \min(|\theta-\theta'|,2\pi - |\theta-\theta'|).
\end{align}
\begin{figure}
    \centering
    \includegraphics[width=0.8\linewidth]{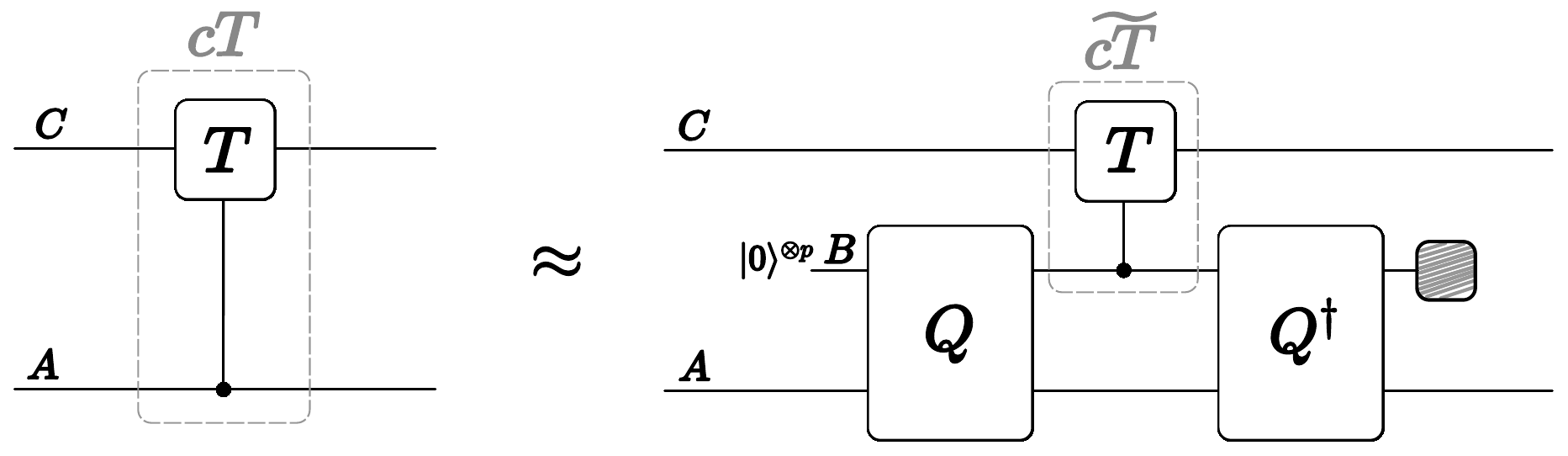}
    \caption{The unitary $cT$ (left) is approximated by the quantum channel (right). Note that $Q$ can be any QPE algorithm. For tQPE, it would include the state preparation unitary for the taper on the ancilla register. }
    \label{fig:uncomputing}
\end{figure} 
We approximately implement the unitary $cT$ using the following algorithm (see the right-hand side of Figure~\ref{fig:uncomputing}):
\begin{enumerate}
    \item Add an ancillary register $B$ and initialize it in the all-zeros state. That is, $|0\rangle^{\otimes p}$ for some $p \in \mathbb{N}$.
    \item Apply a unitary $Q$ (this corresponds to any QPE algorithm, including the tQPE algorithm and the median-based QPE algorithms) on the registers $A$ and $B$ where $A$ holds the system state and $B$ encodes information about the phase. The exact form of this encoding, and how it is later decoded, depends on the specific QPE procedure employed. The general action of $Q$ can be written as
    \begin{equation}
        Q_{BA} \coloneqq \sum_{j, x, y} f_{jxy}  |x\rangle\langle y|_B \otimes |\psi_j\rangle\langle\psi_j|_A\label{eq:qpe-unitary},
    \end{equation}
    where $f_{jxy}$ are the complex numbers determined by the particular QPE algorithm. These amplitudes satisfy, for some $\delta, \epsilon > 0$,
    \begin{equation}
    \sum_{x:|\theta_j-g(x)|> \delta} |f_{jx0}|^2 \leq \epsilon,
\end{equation}
where $g(x): \{0, 1\}^p  \rightarrow [0, 1)$ can be seen as a phase-decoding function associated with the chosen QPE scheme. This function maps computational-basis outcomes $x$ to a numerical phase estimate. The formulation above is general enough to capture all coherent QPE algorithms. For example:
\begin{itemize}
    \item Standard QPE or tQPE: The register $B$ is initialized in $|0\rangle^{\otimes p}$, and the estimate is simply the binary fraction encoded in $x$, i.e.,
        \begin{equation}
            g(x) = \frac{x}{2^p}.
        \end{equation}
        Here and in what follows, $x$ refers both to the binary string in the computational basis as well as the associated integer it represents in binary. The meaning should be clear from context.
    \item Median-based QPE: If $x$ represents a tuple of outcomes $(x^{(1)},\ldots,x^{(m)})$ from $m$ repeated QPE runs each with $l$ bits, then the estimated phase is given by
        \begin{equation}
            g(x) = 
            \operatorname{median}\!\left(
            {x^{(1)}\over 2^l}\,
            \ldots,\,
            {x^{(m)}\over 2^l}
            \right).
        \end{equation}
        Here, the median is taken coherently to suppress outliers among repeated noisy estimates, providing a more robust phase estimate.
\end{itemize}
    \item Apply the following controlled unitary on registers $B$ and $C$ with the control on register $B$:
    \begin{equation}
        \widetilde{cT}_{CB} \coloneqq \sum_x T(g(x))_C \otimes |x\rangle\langle x|_B.
    \end{equation}
    \item Apply $Q^{\dagger}$ on the registers $A$ and $B$.
    \item Trace out register $B$.
\end{enumerate}
Note that although $cT$ and $\widetilde{cT}$ act in a similar way on the $C$ register, the former controls that action on the $A$ register and the latter on the $B$ register. Since the control on the $B$ register is with respect to the computational basis, it is straightforward to implement.

Now, before proceeding with the error analysis of the above algorithm (Theorem~\ref{thm:uncomputing}), we define the quantum channel versions of the unitaries $cT, Q $, and $\widetilde{cT}$:
\begin{align}
    \mathcal{T}(\cdot) & \coloneqq cT (\cdot) cT^{\dagger}\\
    \mathcal{Q}(\cdot) & \coloneqq Q (\cdot) Q^{\dagger}\\
    \widetilde{\mathcal{T}}(\cdot) & \coloneqq \widetilde{cT} (\cdot) \widetilde{cT}^{\dagger}.
\end{align}
We also define an appending channel $\mathcal{A}$, which essentially appends the all-zeros state to the input state, to represent Step~1 of the above algorithm:
\begin{equation}\label{eq:append-channel}
    \mathcal{A}_D(\rho) \coloneqq \rho_D \otimes |0\rangle\langle0|_B.
\end{equation}
Note that in the above equation and in what follows, we use a shorthand $|0\rangle\langle0|_B \equiv |0\rangle\langle0|^{\otimes p}_B$ for simplicity.

With the above notions in place, we now present our QPE uncomputation error analysis in the following theorem.

\begin{thm}\label{thm:uncomputing}
    The following holds:
    \begin{equation}
        \frac{1}{2}\left \Vert \mathcal{T}_{CA} - \operatorname{Tr}_B \circ \mathcal{Q}^{\dagger}_{BA} \circ \widetilde{\mathcal{T}}_{CB} \circ \mathcal{Q}_{BA} \circ \mathcal{A}_{CA} \right\Vert_{\diamond} \leq 2 L_T \delta + 4\epsilon. \label{eq:main-term}
    \end{equation}
    Here, $\operatorname{Tr}_B$ denotes the partial-trace channel, where the register $B$ is traced out.
\end{thm}
The proof of the theorem can be found in App.~\ref{app:uncomputing}.

Eq.~\eqref{eq:main-term} quantifies the error resulting from the uncomputation of QPE combined with the error of approximating the phase in the first place. Let us compare this error to the error introduced solely due the approximation of the phase. For this purpose, we consider the unitary
\begin{align}
    \overline{cT} \coloneqq \sum_j T(\bar{\theta}_j)_{C} \otimes |\psi_j\rangle\langle\psi_j|_A,
\end{align}
where $\bar{\theta}_j$ is a single estimate with $|\bar{\theta}_j-\theta_j|\le \delta$. Let us denote the associated channel by $\overline{\mathcal{T}}$. To implement this unitary one would need access to a QPE algorithm that always outputs a single estimate, which is known not to exist (Here we are ignoring the failure probability, $\epsilon$, which we assume is negligible compared to $L_T \delta$). Nevertheless, in such a case uncomputation would be exact and the only error would be due to the approximation of the eigenphases. In this case
\begin{align}
\label{eq:without-uncomp}
    \frac{1}{2} \left\Vert \mathcal{T} - \overline{\mathcal{T}} \right\Vert_\diamond & \leq \left\Vert cT - \overline{cT} \right\Vert \nonumber\\
    & \hspace{-2cm} = \left\Vert \sum_j T(\theta_j)_{C} \otimes |\psi_j\rangle\langle\psi_j|_A - \sum_j T(\bar{\theta}_j)_{C} \otimes |\psi_j\rangle\langle\psi_j|_A \right\Vert \nonumber\\
    & \hspace{-2cm}= \left\Vert \sum_j \left(T(\theta_j)_{C} -  T(\bar{\theta}_j)_{C}\right) \otimes |\psi_j\rangle\langle\psi_j|_A \right\Vert\\
    & \hspace{-2cm}= \max_j\left\Vert  T(\theta_j)_{C} -  T(\bar{\theta}_j)_{C} \right\Vert\nonumber\\
    & \hspace{-2cm}\leq L_T \delta. \nonumber
\end{align}
The first inequality follows from Lemma~\ref{lem:dia-to-op}, the third equality follows from Lemma~\ref{lem:direct-sum-norm}, and the last inequality follows from the $L_T$-Lipschitz property of the operator $T$.
All in all, from~\eqref{eq:main-term} and~\eqref{eq:without-uncomp}, we can say that the additional error of QPE due to the uncomputing is the same order as the error due to the approximation of the phase.

\section{Randomized Algorithm}
\label{sec:randomization}

In the Sec.~\ref{sec:numerics}, we have seen that the success probability of the tQPE algorithm depends on the true unknown phase.
To eliminate this dependence and achieve minimum error, we utilize a technique presented in~\cite{Lu2023unbiased}, the details of which are reproduced below in the next paragraph for completeness. Although the authors of~\cite{Lu2023unbiased} used randomness to achieve unbiased phase estimation, we utilize their technique to reduce all instances $\Delta$, including the worst-case instance, to the average-case instance, guaranteeing the optimality of our approach. We also note that randomization has been used in other works related to QFT and QPE~\cite{linden2022average,wan2022randomized}.

In the randomized approach, we introduce a random global phase shift $u$ before running QPE by replacing the target unitary $U$ with $e^{2\pi i u}U$. QPE is then performed on this modified unitary to obtain an estimate $\tilde{\theta}$, after which the algorithm outputs $\tilde{\theta} - u$.  
This simple adjustment effectively randomizes the relative alignment between the true phase and the QPE sampling grid, turning any fixed instance into an average-case one while keeping the underlying phase unchanged.

Let $Q$ be a QPE algorithm such that when acting on a phase $\theta$, its error probability is $\epsilon(\theta, Q)$. 
In the randomized algorithm, we sample $u$ uniformly at random from the range $[-1/(2N),1/(2N))$ and pass it to the algorithm $Q$. This algorithm then returns as estimate $\tilde{\theta}$ satisfying $|\tilde{\theta} - (\theta+u)|\le \delta$ with probability at least $1-\epsilon(\theta + u, Q)$, by assumption. The randomized algorithm outputs $\tilde{\theta}_\text{rand} \coloneqq \tilde{\theta} - u$, which satisfies:
\begin{align}
    |\tilde{\theta}_\text{rand}- \theta | &= |\tilde{\theta}-(\theta+u) | \\
    &\le \delta
\end{align}
with success probability $1-\epsilon(\theta+u, Q)$. Therefore, the probability that the estimate $\tilde{\theta}_\text{rand}$ is $\delta$-close to $\theta$ is then given by
\begin{align}
    & \text{Prob}(|\tilde{\theta}_\text{rand}-\theta|\le\delta) \nonumber\\
    &= \mathbb{E}_{u \sim \mathcal{D}}\left[ (1-\epsilon(\theta+u, Q))\right] \\
    &= 1-\mathbb{E}_{u \sim \mathcal{D}}\left[ \epsilon(\theta+u, Q)\right] \\
    &= 1 - \overline{\epsilon}(Q),
\end{align}
where $\overline{\epsilon}(Q)$ is the average error probability of the QPE algorithm $Q$.

This procedure acts as a random self-reduction, effectively reducing any specific instance to an average instance.
As a result, the performance of any randomized QPE algorithm is independent of the true phase $\theta$, and the error probability can be upper-bounded by the average error probability of the corresponding pre-randomized QPE algorithm. Since we have already argued that the DPSS taper has the optimal average error probability, the randomized tQPE algorithm with the DPSS taper is provably optimal in the worst-case as well.

The random phase shift can be implemented in tQPE with minimal additional gate cost and no increase in the number of qubits. We only need to modify the middle part of the tQPE algorithm, given in Eq.~\eqref{eq:selectU}, as:  
\begin{equation}
    \label{eq:selectU_randomized}
    \sum_{n=0}^{N-1} e^{2\pi i n u }\ketbra{n}{n}\otimes U^n.
\end{equation}
In practice, this effect can be achieved by adding a sequence of single-qubit rotations acting on the ancilla qubits since
\begin{equation}
    \label{eq:selectU_randomized}
    \bigotimes_{s=0}^{p-1} \left(e^{2\pi i 2^{s} u Z} \right)_{a_s} = e^{- 2\pi i (N/2)u} \sum_{n=0}^{N-1} e^{ 2\pi i n u} \ketbra{n}{n}. 
\end{equation}
Here we adopt the convention that the ancilla qubits are labeled as $0,1,\dots,p-1$, starting from the bottom ancilla qubit in Fig.~\ref {fig:circuitdiagram}. Recall that the $s$-th ancilla qubit controls the application of $U^{2^s}$ in the deterministic algorithm. The subscript $a_s$ indicates the ancilla register the operator is acting on. The global phase outside the sum on the right-hand side is physically irrelevant.  

\section{Taper preparation and total error of the algorithm}
\label{sec:taperprep}

The important issue of how to efficiently prepare a good taper state is addressed in detail in App.~\ref{app:TaperPreparation} and an example worked out in App.~\ref{app:TaperPreparationExample}. Here we explain the ideas behind that method and use the results to establish the total error of the tQPE algorithm. 

\subsection{Truncated DPSS taper
\label{sec:truncatedDPSS}}

In Sec.~\ref{sec:tQPE}, we have seen that the tQPE algorithm implements the following transformation on the taper states
\begin{equation}
    \ket{\phi} \ket{\psi_\theta} \xrightarrow{\text{tQPE}} \ket{\hat{\phi}_\theta} \ket{\psi_\theta},
\end{equation}
where
\begin{equation}
    \ket{\hat{\phi}_\theta} \coloneqq \sum_{k=0}^{N-1} \hat{\phi}(\theta - k/N) \ket{k} \; .
\end{equation}
In the special case when $\theta=0$, the outcome state
\begin{align}
    \ket{\hat{\phi}} := \sum_{k=0}^{N-1} \hat{\phi}(-k/N) \ket{k}\, 
\end{align}
is the discrete-time Fourier transform of the taper. 
Since the DPSS taper has only $2K + 1$ significant frequency components and the rest are extremely suppressed, it can be safely truncated to these few dominant basis states while introducing a negligible error.  

To see this, let $P_\theta$ be the projector into the subspace spanned by the $2K+1$ computational basis states closest to $\theta$, which for $\theta=0$ corresponds to the central lobe of the DPSS taper. 
With this, we define a truncation of the DPSS taper as
\begin{align}
\label{eq:approxDPSS}
    \ket{\hat{\phi}^\ast} = \frac{P_0 \ket{\hat{\phi}}}{\left \|P_0 \ket{\hat{\phi}} \right\|_2} \; .
\end{align}
By the properties of the DPSS taper we stated above, it holds that $\vert \braket{\hat{\phi}|\hat{\phi}^*}\vert^2=1- \epsilon(0)$, which implies 
\begin{align}
    \ket{\hat \phi^\ast} = \sqrt{1-\epsilon(0)}\ket{\hat \phi} + \sqrt{\epsilon(0)} \ket{\hat \phi^\perp}\,,
\end{align}
where $\ket{\hat \phi^\perp}$ is orthogonal to $\ket{\hat \phi}$. Note that $\epsilon(\theta)$ had been defined in Eq.~\eqref{eq:epstheta}. 
Since QFT is unitary, the same holds for the time-domain tapers, i.e.,
\begin{equation}
    \ket{ \phi^\ast} = \sqrt{1-\epsilon(0)}\ket{ \phi} + \sqrt{\epsilon(0)} \ket{ \phi^\perp}.
\end{equation} 
The idea is then to first prepare $\ket{\hat{\phi}^\ast}$ using brute-force quantum-state-synthesis approaches such as the one presented in~\cite{shende2005synthesis} and then apply QFT to recover the approximate taper in the time domain. In App.~\ref{app:TaperPreparation}, we show that the gate complexity for this state-preparation algorithm is
\begin{equation}
    \mathcal{C}_\phi = O\!\left(\log^2(1/\bar{\epsilon}) +  \log^2(1/\delta)\right).
\end{equation}

When using the approximate taper $\ket{\phi^\ast}$, the total error probability of the tQPE algorithm gets contributions from two sources: (i) the error probability when the exact DPSS taper is used and (ii) the additional error incurred due to the deviation of $\ket{\phi^\ast}$ from the exact DPSS taper. The corresponding ancilla state at the end of the tQPE algorithm
\begin{equation}
    \ket{\phi^*}\ket{\psi_\theta} \xrightarrow{\text{tQPE}} \ket{\hat{\phi}^\ast_\theta} \ket{\psi_\theta}
\end{equation}
is 
\begin{equation}
      \ket{\hat{\phi}^\ast_\theta} =  \sqrt{1-\epsilon(0)}\ket{\hat \phi_\theta} + \sqrt{\epsilon(0)} \ket{\hat \phi^\perp_\theta}.
\end{equation}
The success probability when using the approximate DPSS taper~\eqref{eq:approxDPSS} is then lower-bounded as
\begin{align}
\nonumber
    \bra{\hat{\phi}^\ast_\theta} P_\theta\ket{\hat{\phi}^\ast_\theta} &= (1-\epsilon(0)) \bra{\hat{\phi}_\theta} P_\theta\ket{\hat{\phi}_\theta} + \epsilon(0) \bra{\hat{\phi}^\perp_\theta} P_\theta\ket{\hat{\phi}^\perp_\theta} \\
    &\quad + 2 \sqrt{\epsilon(0) (1-\epsilon(0))}\mathfrak{Re}\bra{\hat{\phi}^\perp_\theta} P_\theta\ket{\hat{\phi}_\theta} \\
    \label{eq:lb1}
    &\ge (1-\epsilon(0)) (1-\epsilon(\theta)) \nonumber \\
    &\quad- 2 \sqrt{\epsilon(0) (1-\epsilon(0))}|\bra{\hat{\phi}^\perp_\theta} P_\theta\ket{\hat{\phi}_\theta}| \; ,
\end{align}
where to get the inequality, we used the properties of the DPSS taper for the first term, set the second term to 0, and replaced the last term with the negative of its absolute value.

Next, we will show that $|\bra{\hat{\phi}^\perp_\theta} P_\theta\ket{\hat{\phi}_\theta}|$ is small. 
Let $P_\theta^\perp = I - P_\theta$.
Since $0=\braket{\hat{\phi}^\perp_\theta | \hat{\phi}_\theta}=\braket{\hat{\phi}^\perp_\theta |(P_\theta+P_\theta^\perp)| \hat{\phi}_\theta}$ it follows that 
\begin{align}
    |\bra{\hat{\phi}^\perp_\theta} P_\theta\ket{\hat{\phi}_\theta}| = |\bra{\hat{\phi}^\perp_\theta} P_\theta^\perp\ket{\hat{\phi}_\theta}| \; .
\end{align}
Using Cauchy–Schwarz inequality, the left-hand side is bounded from above by
\begin{equation}
    \|P_\theta \ket{\hat{\phi}_\theta}\| \|P_\theta \ket{\hat{\phi}_\theta^\perp}\|=\sqrt{1-\epsilon(\theta)} \sqrt{a}
\end{equation}
whereas the right-hand side is bounded from above by
\begin{equation}
    \|P_\theta^\perp \ket{\hat{\phi}_\theta}\| \|P_\theta^\perp \ket{\hat{\phi}_\theta^\perp}\|=\sqrt{\epsilon(\theta)} \sqrt{1-a},
\end{equation}
where 
\begin{equation}
a \coloneqq \|P_\theta \ket{\hat{\phi}_\theta^\perp} \|^2\in[0,1].
\end{equation}
Thus, 
\begin{align}
    |\bra{\hat{\phi}^\perp_\theta} P_\theta\ket{\hat{\phi}_\theta}| &\le \min\!\left(\sqrt{a(1-\epsilon(\theta))},\sqrt{\epsilon(\theta)(1-a)}\right) \\
    &\le \sqrt{\epsilon(\theta) (1-\epsilon(\theta))} \; ,
\end{align}
where for the last inequality, we set $a =  \epsilon(\theta)$, which maximizes the minimum for  $a\in[0,1]$. Substituting the above bound back in Eq.~\eqref{eq:lb1}, we obtain
\begin{align}
\nonumber
    & \bra{\hat{\phi}^\ast_\theta} P_\theta\ket{\hat{\phi}^\ast_\theta}\\
    & \ge (1-\epsilon(0)) (1-\epsilon(\theta)) \nonumber
    \\
    &\quad - 2 \sqrt{\epsilon(0) (1-\epsilon(0))}\sqrt{\epsilon(\theta) (1-\epsilon(\theta))} \\
    &\ge 1-\left(\epsilon(0)+\epsilon(\theta)+2\sqrt{\epsilon(0)\epsilon(\theta)}\right)\label{eq:bound-on-succ-prob-approx-taper}.
\end{align}

Finally, we consider the average success probability, which is
\begin{equation}
    \mathbb{E}_{\theta \sim \mathcal{D}} \left[ \bra{\hat{\phi}^\ast_\theta} P_\theta\ket{\hat{\phi}^\ast_\theta}\right].
\end{equation}
Taking expected value on both sides of~\eqref{eq:bound-on-succ-prob-approx-taper} and using Jensen's inequality $\mathbb{E}[\sqrt{x}]\le \sqrt{\mathbb{E}[x]}$, we get
\begin{align}
       & \mathbb{E}_{\theta \sim \mathcal{D}} \left[ \bra{\hat{\phi}^\ast_\theta} P_\theta\ket{\hat{\phi}^\ast_\theta} \right]  \nonumber \\
       & \ge 1 - \left(\epsilon(0)+\overline{\epsilon}^\text{DPSS}+2\sqrt{\epsilon(0)\overline{\epsilon}^\text{DPSS}}\right) \\
    &\ge 1-4\max(\epsilon(0),\overline{\epsilon}^\text{DPSS}), 
\end{align}
where, as defined before $\overline{\epsilon}^\text{DPSS}$ is the DPSS average error probability.  
In most cases $\epsilon(0)\approx \overline{\epsilon}^\text{DPSS}$. Consequently, if we want to lower-bound the success probability by $1-\epsilon$, choosing $\overline{\epsilon} = \overline{\epsilon}^\text{DPSS}=\epsilon/4$ in the theorems and lemmas in Sec.~\ref{sec:average-case} is sufficient.

\subsection{Optimized bandwidth-limited Taper}
\label{sec:optimizedtaper}

The approximate taper above was obtained by first computing the optimal taper and then chopping its tails off in the frequency domain. This was necessary for the state preparation to be efficient. We can achieve a factor of two improvement by directly optimizing over the band-limited tapers that can be prepared efficiently. This is a strict improvement over the truncated DPSS taper, as the latter is also band-limited in the frequency domain. We start with the optimization problem in the average-case error setting. The cost function is
\begin{align}
p_\text{success} &= \sum_{n,n'=0}^{N-1}\phi[n] C(n,n') \phi[n']^* \\
C(n,n') &= \frac{\sin(\pi(n-n')(2K+1)/N)}{\pi (n-n')}
\end{align}
Let us write the success probability in terms of the Fourier coefficients of the taper. 
\begin{equation}
    \hat{\phi}[k] \coloneqq \frac{1}{\sqrt{N}}\sum_{n=0}^{N-1}\phi[n]e^{-2\pi i nf}, 
\end{equation}
Note that here we are not referring to the discrete-time Fourier transform of Eq.~\eqref{eq:FTofTaper}.
We use angular brackets to indicate that the argument is an integer index. Then
\begin{align}
p_\text{success} &= \sum_{k,k'=0}^{N-1}\hat\phi[k] D(k,k') \hat\phi[k]^* \; ,\\
D[k,k'] &= \frac{1}{N} \sum_{n,n'=0}^{N-1} C[n,n'] e^{2\pi i (n k-n' k')/N} \; .
\end{align}
In the previous approach we chopped off the tails of the taper in the frequency space, hence setting $\hat\phi[k]=0$ for $K<|k|\le N$. Thus the success probability is 
\begin{align}
\label{eq:optimizedsuccessp}
p_\text{success}^\text{truncated} &= \sum_{k,k'=-K}^{K}\hat\phi[k] D[k,k'] \hat\phi[k]^*  e^{2\pi i (n k-n' k')/N}
\end{align}
We can directly maximize this expression. There are some benefits to this approach. First, this success probability is larger than that of the truncated DPSS since it is a maximum over a set that does include it. Second, the size of the eigenvalue problem is of dimension $2K+1\approx 2^m$ as opposed to $N=2^{\ell+m}$. The first downside is that the matrix elements of the new kernel $D[n,n']$ require a double sum over $N$ items leading to $N^2$ complexity if done by brute force. Second, the new kernel $D[k,k']$ might not be as well studied as DPSS and it might be less efficient to compute classically. 

In the numerics for $l+m \le 10$ we found that the average success probability is larger than $1-2\bar{\epsilon}$ which can also be seen in Fig.~\ref{fig:DPSSs}. Consequently, if we want to lower-bound the success probability by $1-\epsilon$ choosing $\bar{\epsilon}=\epsilon/2$ in the theorems and lemmas in Sec.~\ref{sec:average-case} is likely sufficient.

\subsection{Tapers extrapolated from small $l^\ast$}
\label{sec:extrapolatedtaper}

In Sec.~\ref{sec:truncatedDPSS} we implicitly assumed knowledge of the DPSS taper. In practice, a DPSS taper might be hard to compute classically for large values of $p$. Similarly, the calculation of optimal bandwidth-limited tapers in Sec.~\ref{sec:optimizedtaper} might be expensive. Note that these are one-time classical calculations for each $(l,m)$-pair. Nonetheless, in this section we present a method to classically efficiently generate approximate tapers that nevertheless perform very well. 

The idea is to follow the prescription of Sec.~\ref{sec:optimizedtaper} (one could also follow Sec.~\ref{sec:truncatedDPSS} with similar results) with some $\ell^\ast<\ell$ for which we can carry out the classical computation. The outcome of this calculation is a vector of dimension $2K+1$. In those sections we turned this vector to an $N^\ast=2^{\ell^\ast+m}$ dimensional one by appending zeros. Applying QFT then resulted in the optimal taper. Here, instead, we turn this vector to a $N=2^{\ell+m}$ dimensional one and then apply QFT to get the new taper. This taper is suboptimal, since the $2K+1$ dimensional vector was optimized for $(\ell^\ast,m)$. Nevertheless, it performs very well as we can see in Fig.~\ref{fig:taperextrapolation}, even if $\ell\gg \ell^\ast$. 

\begin{figure*}[t!]
\includegraphics[width=2.08\columnwidth]{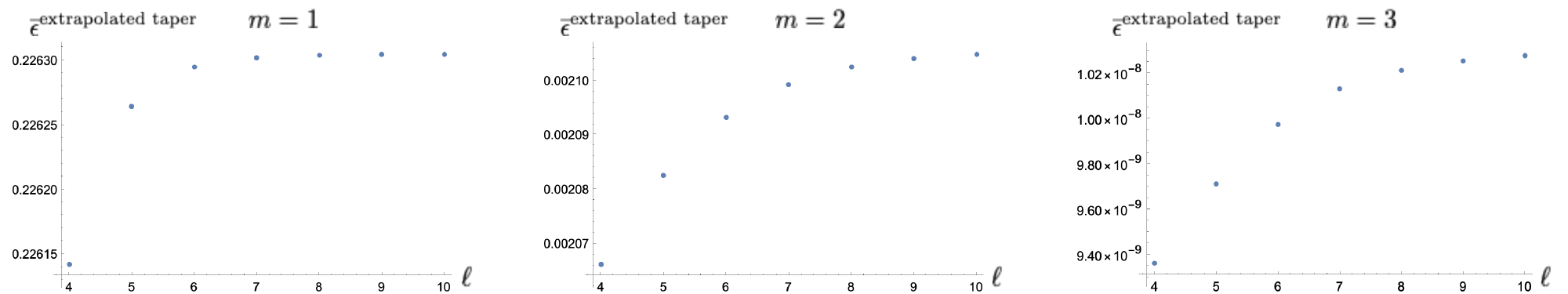}
\caption{Demonstration of the performance of the tapers from Sec.~\ref{sec:extrapolatedtaper}. Plotted are the average errors of tapers for $m=1,2,3$ and $l=4,\dots,10$ that have been obtained with classical computations for the problem size with fixed $\ell^\ast =4$. These tapers are optimal for only $\ell=4$ by construction, but nonetheless perform quite well for larger $\ell$.
}
	\label{fig:taperextrapolation}
\end{figure*}

\section{Discussion}
\label{sec:discussion}

In this manuscript we studied the performance of the QPE algorithm as the input state of the ancilla register (called a taper here) is varied. We found that different objective criteria result in different optimal tapers. By combining QPE with phase randomization, we were able to show that the success probability of outputting an estimate within the desired error bars is maximized by the DPSS taper. 
The DPSS taper is optimal in all parameter regimes, not only asymptotically. 
Given the error probabilities shown in Fig.~\ref{fig:DPSSeigenvalues} it is likely that most applications will require $m\le 4$ additional qubits beyond what is needed for any desired precision.
We also showed how approximations to this taper with similar performance guarantees can be prepared efficiently on a quantum computer. 

Quantum algorithms that require coherent phase estimation, such as the HHL algorithm~\cite{harrow2009quantum} and Quantum Metropolis Sampling~\cite{temme2011quantum}, also require that the estimate of the phase be uncomputed at some point. If this were not the case, incoherent phase estimation would be sufficient, as discarding the ancilla register after coherent QPE renders it equivalent to incoherent QPE. Because tQPE outputs multiple estimates for a particular phase (unlike median boosted QPE that outputs the closest two estimates with high probability), one might worry that the uncomputation step might fail. In Sec.~\ref{sec:uncomputing} we argued that this concern is unfounded: the error of any coherent QPE algorithm due to uncomputation is of the same magnitude as the error due to the approximation of the phase. 
For a precise statement of this result, see Theorem~\ref{thm:uncomputing}.
Note that this theorem is proven in App.~\ref{app:uncomputing} for tQPE without randomization. The proof can be extended to the randomized method described in Sec.~\ref{sec:randomization}; however, the resulting error parameter $\epsilon$ in the upper bound remains a worst-case error rather than the average error $\bar\epsilon$ (a detailed proof is omitted). This distinction is not expected to have practical impact, as the upper bound will typically be dominated by $\delta$ in most applications.

We note that the authors of~\cite{rzkadkowski2017discrete} studied continuous QPE and formulated a different optimization problem than ours. Interestingly, their optimal taper for continuous QPE also resulted in DPSS. So, it would be valuable to explore the similarities and differences between these two optimization problems, which we leave for future work.

We remark that the asymptotic scaling of the number of additional qubits for the DPSS taper compares favorably against the Kaiser taper which is known to also achieve a similar asymptotically scaling of $m$ as our DPSS taper~\cite{berry2022quantifying}. There are other additional factors inside the logarithm for the Kaiser taper which the DPSS taper does not have. 
Furthermore, for the case of the Kaiser taper, its $K$ value scales as $O\left(\log(1/\epsilon)\right) + O\left(\log\log(1/\epsilon)\right)$ with respect to $\epsilon$ in the asymptotic regime as shown in Appendix D of~\cite{berry2022quantifying}.
As far as we know, all qubit-count analysis for the number of qubits required by the Kaiser taper to boost the success probability for QPE has only been done using the asymptotic expression. On the other hand, our work provides a non-asymptotic minimum qubit count for the DPSS taper for boosting the success probability. For the asymptotic case,
as shown in the proof for Thm.~\ref{thm:asymptotic-K-scaling}, the DPSS taper scales as $O\left(\log(1/\epsilon)\right)$ in the asymptotic regime which implies that the DPSS taper has better frequency concentration around $\theta$ than the Kaiser taper in the regime where $N$ is large and $\delta$ is small.
This corresponds well with existing classical signal processing literature where the DPSS taper is known to have maximal frequency concentration in the central lobe out of all taper functions and the Kaiser taper has originally been designed as an approximation to the optimal DPSS taper.

In addition to the asymptotic analysis, we also provide the explicit expressions for $m$ in the non-asymptotic regime. This will be important for the practical implementation of the QPE algorithm in the early fault-tolerant regime. Finally, we would like to mention that we approach the QPE problem from a different angle, i.e., by framing optimization problems. Due to the optimality of the DPSS taper, all tapers must use $\Omega(\log\log(1/\epsilon))$ additional qubits to boost the average success probability to at least $1-\epsilon$. This means that there cannot be any further improvements when using tapers for QPE.
This result complements Thm. 1.3 of~\cite{mande2023tight}, which provides the same lower bound on the $\epsilon$-dependence for any algorithm solving QPE in the worst case. However, while Thm. 1.3 of~\cite{mande2023tight} is general and holds for any QPE algorithm, our lower bound only applies to tQPE with arbitrary tapers.

Due to the flexibility of our formulation, it allows us to find tapers satisfying different optimization criteria. For instance, we find that a sinusoidal taper minimizes the average case error under the additional constraint that for $m=0$ the error probability is zero whenever the true phase happens to be exactly in between two estimates.
This taper has been obtained before in the literature as a result of a different optimization problem~\cite{luis1996optimum, buvzek1999optimal}: it is the taper that minimizes a cost function that penalizes estimates that are further away from the true value as $4 \sin^2((\theta-\theta_\text{est})/2)$. We have arrived at this taper from a completely different perspective.

\textit{Note added:} During the completion of this manuscript, we became aware of a recent numerical study conducted by Greenaway~\emph{et al.}, concluding that QPE performed with the Kaiser taper significantly outperformed QSVT-QPE~\cite{Rall2021} (QPE performed with the QSVT framework)~\cite{greenaway2024case}. In particular, they observed that QPE with Kaiser tapers achieved success probabilities that are orders of magnitude higher than QSVT-QPE with better query complexity in the highly relevant regime where the desired success probability is very close to 1.  Readers can find a more explicit comparison of the query complexity for the QSVT-QPE approach and the tQPE approach with the Kaiser taper in the figures in Ref.~\cite{greenaway2024case}. This finding further emphasizes the importance of utilizing tapers for the QPE problem.

\section{Acknowledgements}

The authors thank Sam Slezak for useful discussions related to the error analysis in Sec.~\ref{sec:uncomputing} and anonymous reviewers for their suggestions that helped improve the presentation of this work. DP and SJST were participants in the 2023 Quantum Computing Summer School at LANL, sponsored by the LANL Information Science \& Technology Institute. DP spent the following fall semester as a GRA at LANL during which he was supported by the Laboratory Directed Research and Development program of Los Alamos National Laboratory (LANL) under project number 20230049DR.
This material is based upon work supported by the U.S. Department of Energy, Office of Science, National Quantum Information Science Research Centers, Quantum Science Center (QSC). YS was funded by the QSC to perform the analytical and numerical analysis and to write the manuscript along with the other authors.
ATS acknowledges initial support from the LANL ASC Beyond Moore's Law project and subsequent support from the Laboratory Directed Research and Development program of Los Alamos National Laboratory (LANL) under project number 20230049DR. 

\bibliographystyle{alpha}
\bibliography{Biblio}

\vfill\eject\newpage

\onecolumngrid

\appendix
\renewcommand\thefigure{\thesection\arabic{figure}}
\renewcommand{\thesection}{\Alph{section}}
\setcounter{figure}{0}
\setcounter{section}{0}

\begin{center}
\large{Supplementary Material for\\``Optimal Coherent Quantum Phase Estimation via Tapering''
}
\end{center}

\section{Approximate Taper Preparation}\label{app:TaperPreparation}

In this section, we focus on preparing the tapers described in Sec.~\ref{sec:taperprep}. We use a modification of the method for preparing broadband Gaussian states presented in~\cite{chowdhury2018improved} to prepare the taper state.
The key idea is that the DPSS taper in the frequency domain, i.e., $\ket{\hat \phi} = U_{\text{QFT}}^\dagger\ket{\phi}$, is narrow-band.
By narrow-band, we mean that it is highly concentrated on just a few grid points, implying that the state, $\ket{\hat \phi}$, has appreciable amplitudes on very few of its basis states. 
Therefore, we can prepare an approximate version of $\ket{\hat{\phi}}$ (say $\ket{\hat{\phi}^\ast}$) using standard methods presented in~\cite{chowdhury2018improved} with very little overhead.
After that, we can perform a quantum Fourier transform on $\ket{\hat{ \phi}^\ast}$ to obtain $\ket{\phi^\ast}$, which closely approximates the DPSS taper $|\phi\rangle$. 
\begin{lem}\label{Lemma1}
The gate complexity of preparing the approximate tapers of Secs.~\ref{sec:truncatedDPSS},\ref{sec:optimizedtaper}, and~\ref{sec:extrapolatedtaper} is
\begin{align}
    \mathcal{C}_\phi &= O\left(\log^2(1/\bar\epsilon) +  \log^2(1/\delta)\right)  
\end{align}
\end{lem}
\begin{proof}
We provide an explicit description of the taper state preparation method. The tapers of Secs.~\ref{sec:truncatedDPSS},\ref{sec:optimizedtaper}, and~\ref{sec:extrapolatedtaper} all have in common the fact that in the frequency domain they have only $N'=2K+1$ non-zero entries. Since the complexity of state preparation only depends on this parameter, the analysis below applies to all these tapers. 
Using the results of Thm.~\ref{Corollary1}, it is sufficient to choose $m = \lceil \log_2(x)\rceil +1$ with $x= \left\lceil 175\left(\log\left(10/\bar\epsilon\right) + 1\right)^2\right\rceil + 1$. We can upper-bound $N' = 2K+1 = 2^m-1$ as 
\begin{align}
    N' & = 2^{\lceil \log_2(x)\rceil +1} - 1 \\
    &\le 4x\; .
\end{align}
Hence $N' = O(\log^2(1/\bar\epsilon))$.
In the frequency domain the taper is centered around 0, and is bandlimited.  
We use the $N'=2^m-1$ amplitudes in the central peak to construct a quantum state on a register of $\log (N'+1)=m$ qubits. This state is centered at the computational basis state 0 and symmetric around it modulo $m$. The amplitude at basis state $2^{m-1}$ is set to zero. 
This state is then prepared using the method described in~\cite{shende2005synthesis}. 
Subsequently, we append $\ell$ qubits initialized to $\ket{0}^{\otimes \ell}$ to the ancilla register to form an $(\ell+m)$-qubit ancilla register, in such a way that the added qubits represent the most significant digits of the computational basis. Now the taper is not centered around 0 anymore; all the nonzero amplitudes are on the computational basis states in the interval $[0,2^m-1]$. To make the taper centered again it suffices to act with the following unitary
\begin{align}
    \prod_{n=m+1}^{p} \text{CNOT}_{m\rightarrow n}
\end{align}
This unitary leaves the computational basis states in the interval $[0,2^{m-1}-1]$ unchanged and adds $2^p-2^m$ to the basis states in the interval $[2^{m-1},2^{m}-1]$. This is exactly the transformation needed to center the taper around 0. Finally, we act with $U_\text{QFT}$ to obtain the taper in the time-domain which is centered around $N/2$ and $N/2+1$. This is the approximate taper.

Now, we analyze the complexity of the taper state preparation protocol.~\cite{shende2005synthesis} introduced a technique for preparing a state on a register of $m$ qubits with a gate complexity of $O(2^m)$. 
Note that their technique requires some classical computation, which we disregard, along with the classical computation of the $2^m-1$ amplitudes of the central lobe of $|\hat{\phi}\rangle$.
The quantum circuit required to center the amplitudes consists of $\ell$ CNOT gates. 
Lastly, we require an application of QFT which has been shown in~\cite{coppersmith1994approximate} to have gate complexity $O\left(\log^2 N\right)=O\left(p^2\right) = O((l+m)^2)$. This gives us an overall gate complexity of $O\left(2^m + \ell^2 \right)$ for the state preparation protocol.We can state the gate complexity of the state preparation protocol $\mathcal{C}_\phi$ in terms of $\delta$ and $\epsilon$ as:
\begin{align}
    \mathcal{C}_\phi &= O\left(\log^2(1/\bar\epsilon) +  \log^2(1/\delta)\right)\; . 
\end{align}
\end{proof}
We note that the state preparation protocol does not require any additional ancilla qubits other than the $p$-qubits for the taper state register.

\subsection{Taper Preparation Circuit Example}
\label{app:TaperPreparationExample}
In this section, we provide a concrete example of how our DPSS taper can be prepared using the truncation method described in App.~\ref{app:TaperPreparation}.
In this example, we assume that $\ell = 2$ and $m = 3$. Recall that the truncation method requires us to start with the classical computation of the DPSS taper in the time domain. In the figure below, we illustrate the time domain values of the DPSS taper.
\begin{center}
\begin{tikzpicture}[scale=0.35]
\draw[->] (0,0) -- (32,0);
\foreach \x in {0,4,8,12,16,20,24,28,31}
    \node[below] at (\x,0) {\tiny \x};
\foreach \x/\h in {8/0.3, 9/1.0, 10/1.5, 11/2.2, 12/2.8, 13/3.5, 14/4.2, 15/4.8, 16/5.2, 17/4.8, 18/4.2, 19/3.5, 20/2.8, 21/2.2, 22/1.5, 23/1.0, 24/0.3}
    \draw[line width=1pt] (\x,0) -- (\x,\h);
\end{tikzpicture}
\end{center}
Next, we transform the time domain values to the frequency domain:
\begin{center}
\begin{tikzpicture}[scale=0.35]
\draw[->] (0,0) -- (32,0);
\foreach \x in {0,4,8,12,16,20,24,28,31}
    \node[below] at (\x,0) {\tiny \x};
\draw[line width=2pt] (0,0) -- (0,6);
\draw[line width=1.5pt] (1,0) -- (1,4);
\draw[line width=1pt] (2,0) -- (2,2.5);
\draw[line width=0.8pt] (3,0) -- (3,1.5);
\draw[line width=0.5pt] (4,0) -- (4,0.8);
\draw[line width=0.2pt] (5,0) -- (5,0.3);
\draw[line width=1.5pt] (31,0) -- (31,4);
\draw[line width=1pt] (30,0) -- (30,2.5);
\draw[line width=0.8pt] (29,0) -- (29,1.5);
\draw[line width=0.5pt] (28,0) -- (28,0.8);
\draw[line width=0.2pt] (27,0) -- (27,0.3);
\end{tikzpicture}
\end{center}
Now, we proceed with the truncation step where we choose to only keep the largest $2k+1 = 2^{m-1} = 7$ values centered around 0. These are the frequencies that will be associated to the phase values that we hope to measure with high probability.

\usetikzlibrary{decorations.pathreplacing}

\begin{center}
\begin{tikzpicture}[scale=0.35]
\draw[->] (0,0) -- (32,0);
\foreach \x in {0,4,8,12,16,20,24,28,31}
    \node[below] at (\x,0) {\tiny \x};
\draw[line width=2pt] (0,0) -- (0,6);
\draw[line width=1.5pt] (1,0) -- (1,4);
\draw[line width=1pt] (2,0) -- (2,2.5);
\draw[line width=0.8pt] (3,0) -- (3,1.5);
\draw[line width=1.5pt] (31,0) -- (31,4);
\draw[line width=1pt] (30,0) -- (30,2.5);
\draw[line width=0.8pt] (29,0) -- (29,1.5);
\draw[decorate,decoration={brace,amplitude=10pt,raise=2pt}] (4,2.5) -- (28,2.5);
\node[above] at (16,4) {set to zero};
\draw[dashed] (4,2.5) -- (4,0);
\draw[dashed] (28,2.5) -- (28,0);
\end{tikzpicture}
\end{center}
Once we have truncated away the unwanted frequencies, we proceed to the part where the task would be performed on the quantum computer. We begin with $m$ qubits and prepare the following state by using standard approaches for state synthesis:
\begin{center}
\begin{tikzpicture}[scale=0.3]
\draw[->] (0,0) -- (8,0);
\foreach \x in {0,4,7}
    \node[below] at (\x,0) {\tiny \x};
\draw[line width=2pt] (0,0) -- (0,6);
\draw[line width=1.5pt] (1,0) -- (1,4);
\draw[line width=1pt] (2,0) -- (2,2.5);
\draw[line width=0.8pt] (3,0) -- (3,1.5);
\draw[line width=1.5pt] (7,0) -- (7,4);
\draw[line width=1pt] (6,0) -- (6,2.5);
\draw[line width=0.8pt] (5,0) -- (5,1.5);
\node[right] at (11,2.5) {};
\node[below] at (9,-1.5) {};
\end{tikzpicture}
\end{center}
Note that we are only using $3$ qubits to encode the amplitudes of the 7 frequencies that we chose to keep. We now add 2 qubits in the $\ket{0}^{\otimes 2}$ state while adhering with the convention that the added qubits correspond to the two most significant bits in the computational state basis. The new state is then given by
\begin{center}
\begin{tikzpicture}[scale=0.3]
\draw[->] (0,0) -- (32,0);
\foreach \x in {0,4,8,12,16,20,24,28,31}
    \node[below] at (\x,0) {\tiny \x};
\draw[line width=2pt] (0,0) -- (0,6);
\draw[line width=1.5pt] (1,0) -- (1,4);
\draw[line width=1pt] (2,0) -- (2,2.5);
\draw[line width=0.8pt] (3,0) -- (3,1.5);
\draw[line width=1.5pt] (7,0) -- (7,4);
\draw[line width=1pt] (6,0) -- (6,2.5);
\draw[line width=0.8pt] (5,0) -- (5,1.5);
\node[right] at (33,2.5) {};
\end{tikzpicture}
\end{center}
Next, we apply the following unitary that comprises of several CNOTs:
\begin{align}
\prod_{n=m+1}^p \text{CNOT}_{m,n} = \text{CNOT}_{3,5}, \text{CNOT}_{3,4}
\end{align}
\vspace{0.5cm}
The effect of this set of gates on computational basis states is as follows:
\begin{align}
\begin{split}
\ket{0} &= \ket{00000} \quad \rightarrow \ket{11100} = \ket{0} \\
\ket{1} &= \ket{00001} \quad \rightarrow \ket{11100} = \ket{1}\\
\ket{2} &= \ket{00010} \quad \rightarrow \ket{11100} = \ket{2}\\
\ket{3} &= \ket{00011} \quad \rightarrow \ket{11100} = \ket{3}\\
\ket{4} &= \ket{00100} \quad \rightarrow \ket{11100} = \ket{28} \\
\ket{5} &= \ket{00101} \quad \rightarrow \ket{11101} = \ket{29} \\
\ket{6} &= \ket{00110} \quad \rightarrow \ket{11110} = \ket{30} \\
\ket{7} &= \ket{00111} \quad \rightarrow \ket{11111} = \ket{31} 
\end{split}
\end{align}
Note that the first four states remained unchanged under the CNOT gates. This is essential because we want to keep those four amplitudes on the left.
In general, this unitary shifts the right half of the taper by an amount $2^p - 2^m$, thus leaving it all the way to the right (that is, $\ket{2^{p} - 1}$). In other words, we have reproduced the original truncated frequency profile on a quantum register with $p$ qubits.
\begin{center}
\begin{tikzpicture}[scale=0.35]
\draw[->] (0,0) -- (32,0);
\foreach \x in {0,4,8,12,16,20,24,28,31}
    \node[below] at (\x,0) {\tiny \x};
\draw[line width=2pt] (0,0) -- (0,6);
\draw[line width=1.5pt] (1,0) -- (1,4);
\draw[line width=1pt] (2,0) -- (2,2.5);
\draw[line width=0.8pt] (3,0) -- (3,1.5);
\draw[line width=1.5pt] (31,0) -- (31,4);
\draw[line width=1pt] (30,0) -- (30,2.5);
\draw[line width=0.8pt] (29,0) -- (29,1.5);
\end{tikzpicture}
\end{center}
Finally, we apply QFT to our $p$-qubit register to transform back into the time-domain to get the following:
\begin{center}
\begin{tikzpicture}[scale=0.35]
\draw[->] (0,0) -- (32,0);
\foreach \x in {0,4,8,12,16,20,24,28,31}
    \node[below] at (\x,0) {\tiny \x};
\foreach \x/\h in {8/0.3, 9/0.85, 10/1.5, 11/2.3, 12/2.8, 13/3.7, 14/4.2, 15/5.1, 16/5.2, 17/4.8, 18/4.3, 19/3.5, 20/2.6, 21/2.2, 22/1.65, 23/1.0, 24/0.4}
    \draw[line width=1pt] (\x,0) -- (\x,\h);
\end{tikzpicture}
\end{center}
Note that the taper we have obtained is extremely close to the DPSS taper as proven in App.~\ref{app:TaperPreparation}

\section{Ideal Case Optimal Tapers} \label{app:QPDPSSDerivation}
The optimal taper for tQPE is the one that maximizes the probability of outputting the value of a phase estimate of the form $k/N$ that is $\delta$-close to $\theta$. To simplify the analysis we have set $\delta=2^{-(l+1)}$ for some integer $\ell$. In this case the $2K+1$ discrete frequencies closest to $\theta$ are indeed $\delta$-close to the true phase for $K=2^{m-1}$, where $p=\ell+m$ is the total number of ancilla qubits. At most a single discrete frequency that is $\delta$-close is left out, and that frequency has the smallest probability of being output by the algorithm. Thus without introducing much error we formulate the optimization problem in terms of maximizing the probability of outputting the closest $2K+1$ discrete estimates, see ~\eqref{eq:org-opt-prob}.
Since this is a constrained problem, we used the Lagrangian formulation to solve it. For ease of reference, we restate the corresponding Lagrangian~\eqref{eq:mainLagrange} from the main text below:
\begin{align}
     \mathcal{L}\left(|\phi\rangle, \lambda\right)
    = \sum_{j = -K}^{K}\left|\hat{\phi}\left(\Delta + \frac{j}{N}\right)\right|^2
    - \lambda\left(\sum_{n=0}^{N-1}\left|\phi[n]\right|^2 - 1\right).
\end{align}
To find the optimal taper for the original optimization problem, given by \eqref{eq:org-opt-prob}, we need to find the stationary point of $\mathcal{L}$ that maximizes the objective function of \eqref{eq:org-opt-prob}.
But before doing so, we expand the expression on the right-hand side of the above equation by plugging the definition of $\hat{\phi}$, given by \eqref{eq:FTofTaper}:
\begin{align}
     \mathcal{L}\left(\phi, \lambda\right) &=  \frac{1}{N}\sum_{j = -K}^{K} \sum_{n, n' = 0}^{N-1}\phi[n]\phi^*[n']e^{2\pi i \left(\Delta + \frac{j}{N}\right)(n-n')} -\lambda\left(\sum_{n=0}^{N-1}\phi[n]\phi^*[n] - 1\right) \\
    &= \frac{1}{N}\sum_{j = -K}^{K} \sum_{n, n' = 0}^{N-1}e^{2\pi i \Delta(n-n')}\phi[n]\phi^*[n']e^{-2\pi i (j/N)(n'-n)} -\lambda\left(\sum_{n=0}^{N-1}\phi[n]\phi^*[n] - 1\right)\label{eq:qp-dpss-objective}.
\end{align}
Then, to find the stationary points of $\mathcal{L}$, we set all partial derivatives to zero. First, we differentiate $\mathcal{L}$ with respect to $\phi^*[m]$ for all $m\in \{0, \ldots, N-1\}$ then set the expression to zero:
\begin{align}
    &\frac{\partial \mathcal{L}}{\partial \phi^*[m]} = \frac{1}{N}\sum_{j = -K}^{K} \sum_{n=0}^{N-1}e^{2\pi i \Delta(m-n)}\phi[n]e^{-2\pi i (j/N)(m-n)} -\lambda\phi[m] = 0 \\
    &\implies \frac{1}{N}\sum_{j = -K}^{K} \sum_{n=0}^{N-1} e^{2\pi i \Delta(n-m)}\phi[n]e^{-2\pi i (j/N)(m-n)} = \lambda\phi[m]\\
    & \implies \frac{1}{N}\sum_{n=0}^{N-1} e^{2\pi i \Delta(n-m)}\left (\sum_{j = -K}^K e^{-2\pi i (j/N)(m-n)}\right) \phi[n] = \lambda\phi[m] \\
    &\implies \frac{1}{N}\sum_{n=0}^{N-1}  e^{2\pi i \Delta(n-m)}\left(\frac{\sin\left(\pi (m-n)(2K+1)/N\right)}{\sin\left(\pi (m-n)/N\right)}\right)  \phi[n] = \lambda\phi[m]
    \label{eqn:odd_eigenvalue}
\end{align}
As is evident, the last equation is an eigenvalue equation with eigenvalue $\lambda$ and eigenvector $|\phi\rangle$. We rewrite this equation more concisely as follows:
\begin{align}
\mathcal{B}_{\Delta} |\phi\rangle = \lambda |\phi\rangle,
\end{align}
where we define the matrix $\mathcal{B}_{\Delta}$ as
\begin{align}
    \mathcal{B}_{\Delta}[m, n] \coloneqq \frac{1}{N} e^{2\pi i \Delta(n-m)}\left(\frac{\sin\left(\pi (m-n)(2K+1)/N\right)}{\sin\left(\pi (m-n)/N\right)}\right). \label{eq:Deltakernel}
\end{align}

Next, we differentiate $\mathcal{L}$ with respect to the Lagrange multiplier, $\lambda$, and also set it to zero:
\begin{align}
    & \frac{\partial \mathcal{L}}{\partial \lambda} = \sum_{n=0}^{N-1}\phi[n]\phi^*[n] - 1 = 0\\
    &\implies \sum_{n=0}^{N-1}\phi[n]\phi^*[n] = 1,\label{eq:stat_2_cond}
\end{align}
reproducing the normalization constraint.

Now, by plugging all the stationary points $(|\phi\rangle, \lambda)$ that satisfy the conditions \eqref{eqn:odd_eigenvalue} and \eqref{eq:stat_2_cond} into the objective function of the original optimization problem \eqref{eq:org-opt-prob}, we find:
\begin{align}
    \sum_{j = -K}^{K}\left|\hat{\phi}\left(\Delta + \frac{j}{N}\right)\right|^2 = \lambda.
\end{align}
In other words, this means that the stationary point  $(|\phi\rangle, \lambda)$ that maximizes the objective function is the eigenvector of $B_{\Delta}$ with the maximum eigenvalue $\lambda$, and this value corresponds to the success probability of the algorithm.

\section{Relationship Between P-DPSS Tapers and Ideal Case Optimal Tapers}\label{app:relationship}

We compare the operator $B_\Delta$ defined in~\eqref{eq:Deltakernel} to the bandlimiting operator $\mathcal{B}_K\,:\,\mathbb{C}^N \to \mathbb{C}^N$ described in Zhu, et al.~\cite{zhu2017eigenvalue}.
Suppose $K \in \mathbb{N}$ and $2K+1 < N$.
The bandlimiting operator, $\mathcal{B}_K$, is defined such that it zeros out the discrete frequencies of a signal $\phi \in \mathbb{C}^{N}$ that lie outside of the range $\{-K \equiv N - K \ (\textrm{mod}\ N), \ldots, 0, \ldots, K\}$. Formally, this operator does the following:
\begin{equation}
    \left(\mathcal{B}_K \left(\phi\right)\right)[m] \coloneqq 
        \frac{1}{\sqrt{N}}\sum_{n\in S}\hat{\phi}[n]e^{\frac{2\pi i m n}{N}}  \; ,
\end{equation}
where $m \in \left\{0, \ldots, N - 1\right\}$ and $S = \{0, 1, \ldots, K\} \cup \{N-K, \ldots, N-1\}$.
In the time domain this can be written as
\begin{equation}
    \left(\mathcal{B}_K \left(\phi\right)\right)[m] = \frac{1}{N} \sum_{n = 0}^{N-1}\frac{\sin\left(\pi (m-n)(2K+1)/N\right)}{\sin\left(\pi (m-n)/N\right)} \phi[n].
\end{equation}
Note that this expression only differs from the matrix description stated in \eqref{eqn:odd_eigenvalue} by a factor of $e^{2\pi i \Delta (m-n)}$ in each summand. We now introduce the timelimiting operator $\mathcal{T}_N\,:\,\mathbb{C}^N \to \mathbb{C}$ that is defined similarly as the following:
\begin{equation}
    \left(\mathcal{T}_N\left(\phi\right)\right)[n]\coloneqq \begin{cases}\phi[n],&\;\;\;n \in\{0, 1, \ldots, N-1\} \\
    0,&\;\;\;\text{else.}\end{cases}
\end{equation}
In~\cite{zhu2017eigenvalue}, it was shown that the P-DPSSs are the eigensequences of the operator $\mathcal{T}_N\mathcal{B}_K\mathcal{T}_N$.
In other words, the P-DPSSs are discrete sequences that contain a particular eigensequence that not only fulfills the time-limiting constraints of having non-zero values only on the indices $0, 1, \ldots, N-1$, but also has maximal frequency concentration within our window of interest, $2K+1$. 
Notice that the time-limiting property and the property of having maximal frequency concentration around $2K+1$ lattice points are the same properties that we want the optimal tapers to possess. 
These similarities make it interesting for us to study the eigenvalue spectrum of the P-DPSSs because it may suggest an upper bound on what we can do with our ideal tapers.
The bounds for the eigenvalues of the P-DPSS's with respect to the operator $\mathcal{T}_N\mathcal{B}_K\mathcal{T}_N$, proven in~\cite{zhu2017eigenvalue}
, are stated in the following theorem, which we restate with our notation.
\begin{thm}\label{th:Theorem1a} ~\cite[Theorem~1]{zhu2017eigenvalue}
    Suppose $K, N \in \mathbb{N}$ and $W = \frac{2K + 1}{2N} < \frac{1}{2}$. Also, suppose that $\lambda^{(i)}$ is the $i$\textsuperscript{th} largest eigenvalue of $\mathcal{T}_N\mathcal{B}_K\mathcal{T}_N$. Then for any $\epsilon \in \left(0, \frac{1}{2}\right)$, we have 
    \begin{equation}
        \lambda^{(2\lfloor NW \rfloor  - \lceil R(N, \epsilon)\rceil)} \geq 1 - \epsilon,
        \label{eq:eigenvalue-conc-karnick}
    \end{equation}
    where 
    \begin{align}\label{eqn:radius}
        R(N, \epsilon) &= \left(\frac{4}{\pi^2}\log(8N) + 6\right)\log\left(\frac{16}{\epsilon}\right) + 2\max\left(\frac{-\log\left(\frac{\pi}{32}\left(\left(\frac{N}{N-1}\right)^2 -1\right)\epsilon\right)}{\log\left(\frac{N}{N-1}\right)}, 0\right).
    \end{align}
\end{thm}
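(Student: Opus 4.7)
The plan is to treat $A := \mathcal{T}_N \mathcal{B}_K \mathcal{T}_N$ as a product of two orthogonal projections, so that its spectrum lies in $[0,1]$, and to argue that the bulk of its mass sits near $1$ with only a logarithmic number of ``transition'' eigenvalues bridging the gap to $0$. The nominal count of eigenvalues near $1$ is fixed by the trace identity
\[
\operatorname{tr}(A) = \operatorname{tr}(P_B P_T) = \frac{|T|\,|B|}{N} = 2NW,
\]
which suggests that roughly $2NW \approx 2\lfloor NW\rfloor$ eigenvalues should be close to $1$. The theorem quantifies exactly how many may fail to reach $1-\epsilon$, and my target is to show that this deficit is bounded by $R(N,\epsilon) = O\!\left(\log(N)\log(1/\epsilon)\right)$.

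The technical engine I would use is a polynomial approximation of the indicator function $\mathbb{1}_{[1-\epsilon,1]}$ on $[0,1]$. Concretely, I would construct a degree-$d$ polynomial $p_d$ (e.g., a shifted Chebyshev polynomial truncated by the ``hard-threshold'' trick, or a Jacobi-polynomial construction as in Karnik \emph{et al.}) such that $|p_d(x) - \mathbb{1}_{[1-\epsilon,1]}(x)| \le \eta$ uniformly away from a small interval around the threshold, with $d = O(\log(1/\epsilon)\log(1/\eta))$. Applying this to $A$, one obtains $\operatorname{tr}(p_d(A)) = \sum_i p_d(\lambda^{(i)})$, and splitting the sum into $\lambda^{(i)} \ge 1-\epsilon$ and $\lambda^{(i)} < 1-\epsilon$ gives a two-sided control: the count $\#\{i \colon \lambda^{(i)} \ge 1-\epsilon\}$ is sandwiched between $\operatorname{tr}(p_d(A)) \pm (\text{approximation error})$. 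To turn the trace $\operatorname{tr}(p_d(A))$ into $2\lfloor NW \rfloor$ up to a logarithmic error, I would expand $p_d(A)$ as a polynomial in $A = P_T P_B P_T$ and compute each $\operatorname{tr}(A^k)$ using the circulant/Fourier structure of $\mathcal{B}_K$. Here the $\log(8N)$ factor in $R$ enters because the off-diagonal decay of the Dirichlet-type kernel defining $\mathcal{B}_K$ is logarithmic in $N$, so successive powers of $A$ accumulate boundary mismatches on the order of $\log N$ per step.

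The second term in $R(N,\epsilon)$, involving $\log(\pi/32 \cdot ((N/(N-1))^2 - 1)\epsilon)/\log(N/(N-1))$, looks like it arises from a separate geometric-series argument to control the trailing tail of eigenvalues between $\epsilon$ and $1-\epsilon$ when one of them is extremely small. I would derive this by bounding $\|A^k - A^{k+1}\|_F$ or $\|(I-A)A^k\|_F$ using the Hilbert--Schmidt norm, which takes the form of a geometric series with ratio $\approx (1-1/N)$ coming from the trace penalty of the single ``missing'' frequency on the boundary, explaining both the $\log(N/(N-1))$ in the denominator and the explicit constants $\pi/32$ that fall out of the normalization of the circulant kernel.

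The main obstacle, in my view, is not the qualitative picture (which is the classical Slepian/Landau ``$2NW + O(\log N \log(1/\epsilon))$'' phenomenon, by now well understood) but rather obtaining the \emph{sharp constants} $\tfrac{4}{\pi^2}$ and $6$ that appear explicitly in $R(N,\epsilon)$. These require tracking the exact coefficients in the polynomial approximation, the exact Fourier kernel constants, and the exact way boundary effects accumulate across polynomial powers. I would expect this bookkeeping to be the hardest and most error-prone part, and would want to cross-check against the known asymptotics $\lambda_k \to 1$ exponentially for $k < 2NW$ (Slepian's asymptotic) to confirm that my explicit bound degenerates into the right leading-order behavior.
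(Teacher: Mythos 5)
This statement is not proved in the paper at all: it is imported verbatim as \cite[Theorem~1]{zhu2017eigenvalue}, merely translated into the paper's notation ($W=\tfrac{2K+1}{2N}$, the operators $\mathcal{T}_N,\mathcal{B}_K$). So there is no internal proof to compare your attempt against; the only question is whether your sketch would plausibly reconstruct the cited result.

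Your outline correctly identifies the two pillars of the Slepian/Landau--Israel--Karnik style argument: the trace identity $\tr(\mathcal{T}_N\mathcal{B}_K\mathcal{T}_N)=2K+1=2NW$ fixing the nominal count of near-unit eigenvalues, and a polynomial (Chebyshev-type) approximation of an indicator on $[0,1]$ to bound the number of ``transition'' eigenvalues in $(\epsilon,1-\epsilon)$ by $O(\log N\,\log(1/\epsilon))$. But as written there are concrete gaps. First, the polynomial $p_d$ is never constructed and the claimed degree $d=O(\log(1/\epsilon)\log(1/\eta))$ is asserted, not derived; a naive Chebyshev approximation of $\mathbb{1}_{[1-\epsilon,1]}$ with a transition window of width $\epsilon$ costs degree polynomial in $1/\epsilon$, and the logarithmic dependence only emerges from the specific interplay with the bound $\tr(A)-\tr(A^2)=O(\log N)$. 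Second, that $\log N$ factor does not come from ``boundary mismatches accumulating across polynomial powers'': it comes from a single closed-form tail sum of the squared Dirichlet kernel, $\tr(A)-\tr(A^2)=\sum_{m\in T,\,n\notin T}\left|\mathcal{B}_K[m,n]\right|^2\approx\tfrac{2}{\pi^2}\log N$, which is exactly where the $\tfrac{4}{\pi^2}\log(8N)$ in $R(N,\epsilon)$ originates; your proposed mechanism would not reproduce it. Third, your sandwich argument only controls the cardinality of the transition set; the theorem is an \emph{indexed} statement, $\lambda^{(2\lfloor NW\rfloor-\lceil R\rceil)}\ge 1-\epsilon$, and passing from ``few eigenvalues lie in $(\epsilon,1-\epsilon)$'' to this requires also lower-bounding the count above $1-\epsilon$ against the contribution of the many small eigenvalues to $\tr(A)$, a step you do not close. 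Finally, your account of the second term in $R(N,\epsilon)$ is admittedly speculative; in the periodic setting it arises from controlling the aliasing/periodization correction relative to the non-periodic prolate operator, not from a Hilbert--Schmidt geometric series on $\|(I-A)A^k\|_F$. In short: right skeleton, but the quantitative core of the theorem --- the explicit $R(N,\epsilon)$ --- is not established by the proposal.
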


Although, Thm.~\ref{th:Theorem1a} is stated for the matrix that corresponds to $\mathcal{T}_N\mathcal{B}_K\mathcal{T}_N$, the authors of~\cite{zhu2017eigenvalue} showed that the eigenvalues of $\mathcal{T}_N\mathcal{B}_K\mathcal{T}_N$ and $\mathcal{B}_K$ are the same. 
Note that the matrix representation of the eigenvalue equation given by \eqref{eqn:odd_eigenvalue} differs from the matrix representation of $\mathcal{B}_K$ by some complex phase. 
However, we demonstrate in the following proposition that they share the same spectrum by showing that we can adapt the eigensequences of $\mathcal{B}_K$ to obtain the eigensequences that satisfy \eqref{eqn:odd_eigenvalue}. 
In other words, we can use the analytical bounds for the eigenvalues of the P-DPSSs to bound the eigenvalues for our ideal tapers.
Below, we refer to $\phi_i^{\Delta}$, the eigenvector of the operator described in \eqref{eqn:odd_eigenvalue} with eigenvalue $\lambda^{(i)}$, as the quantum periodic discrete prolate spheroidal sequence (QP-DPSS), where $-\frac{1}{2N} \leq \Delta \leq \frac{1}{2N}$.

\begin{propos}\label{propos:PDPSS-evalue}
Suppose $\{(\lambda^{(i)}, \psi_i)\}_i$ represents the set of eigenvalues and eigensequences of $\mathcal{B}_K$. Then, for some $-\frac{1}{2N} \leq \Delta \leq \frac{1}{2N}$,  $\{(\lambda^{(i)}, \phi_i^{\Delta})\}_i$ is the set of eigenvalues and eigensequences of $B_\Delta$ given in \eqref{eq:Deltakernel}, where
\begin{equation}\label{eqn:transformed_taper}
        \phi_i^{\Delta}[n] = e^{2\pi i \Delta n} \psi_i[n].
\end{equation}
    \end{propos}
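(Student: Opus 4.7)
The plan is to recognize that the kernel $B_\Delta$ defined in~\eqref{eq:Deltakernel} is a unitary conjugate of $\mathcal{B}_K$ by the diagonal unitary whose phases appear in~\eqref{eqn:transformed_taper}. This reduces the entire proposition to a one-line linear-algebra observation plus a direct verification.

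Concretely, I would define the diagonal unitary $D_\Delta \in \mathbb{C}^{N\times N}$ by $D_\Delta[n,n] = e^{2\pi i\Delta n}$ (with the sign of $\Delta$ chosen to match the kernel's phase convention in~\eqref{eq:Deltakernel}). Reading off matrix elements, one has $B_\Delta[m,n] = e^{-2\pi i\Delta m}\,\mathcal{B}_K[m,n]\,e^{2\pi i\Delta n}$, which is exactly the matrix element of $D_\Delta^\dagger \mathcal{B}_K D_\Delta$. Thus $B_\Delta$ and $\mathcal{B}_K$ are unitarily similar; they share the same spectrum, and if $\mathcal{B}_K\psi_i = \lambda^{(i)}\psi_i$, then $\phi_i^\Delta := D_\Delta\psi_i$, i.e., $\phi_i^\Delta[n] = e^{2\pi i\Delta n}\psi_i[n]$, satisfies $B_\Delta \phi_i^\Delta = \lambda^{(i)}\phi_i^\Delta$.

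For readers who prefer a direct verification, the plan is to substitute the ansatz into the left-hand side of the eigenvalue equation~\eqref{eqn:odd_eigenvalue}: the kernel phase $e^{2\pi i\Delta(n-m)}$ combines with the ansatz phase on $\psi_i[n]$ so that the $n$-dependence of the $\Delta$-factors cancels, leaving an overall factor of $e^{2\pi i\Delta m}$ out front (matching the right-hand side) and a residual sum equal to $\sum_n \mathcal{B}_K[m,n]\psi_i[n] = \lambda^{(i)}\psi_i[m]$ by hypothesis. Combining these yields $(B_\Delta\phi_i^\Delta)[m] = \lambda^{(i)} e^{2\pi i\Delta m}\psi_i[m] = \lambda^{(i)}\phi_i^\Delta[m]$, as desired.

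The main obstacle is simply keeping the sign of $\Delta$ in the complex exponential consistent: conjugation by $D_\Delta$ versus $D_\Delta^\dagger$ differ only in whether the ansatz carries $e^{2\pi i\Delta n}$ or $e^{-2\pi i\Delta n}$, so one must carefully track this against the sign convention fixed in~\eqref{eq:Deltakernel}. Beyond bookkeeping there is no analytic content. Finally, because $D_\Delta$ is unitary and $\mathcal{B}_K$ is self-adjoint (its kernel is Hermitian), the set $\{\phi_i^\Delta\}_i$ inherits orthonormality and completeness from $\{\psi_i\}_i$; this ensures the correspondence exhausts \emph{all} eigenpairs of $B_\Delta$, which is what the statement of the proposition requires.
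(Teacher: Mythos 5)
Your proof is correct and its direct-verification branch is essentially identical to the paper's own proof: substitute the ansatz $\phi_i^\Delta[n]=e^{2\pi i\Delta n}\psi_i[n]$, let the kernel phase cancel the $n$-dependent part of the ansatz phase, pull the overall factor $e^{2\pi i\Delta m}$ out of the sum, and invoke the eigenvalue equation for $\mathcal{B}_K$. The unitary-conjugation framing $B_\Delta=D_\Delta\,\mathcal{B}_K\,D_\Delta^\dagger$ is a worthwhile bonus the paper omits, since it immediately yields that the $\phi_i^\Delta$ exhaust \emph{all} eigenpairs and inherit orthonormality; your caution about the sign of $\Delta$ is also well placed, because the kernel in \eqref{eq:Deltakernel} carries $e^{2\pi i\Delta(n-m)}$ while the paper's proof of this proposition silently uses $e^{2\pi i\Delta(m-n)}$, and only the latter convention is consistent with the ansatz as written.
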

    \begin{proof}
We first restate the eigenvalue equation described in \eqref{eqn:odd_eigenvalue} with the proposed eigensequence, $\phi_i^{\Delta}[n]$.
\begin{align}
    & \frac{1}{N}\sum_n  e^{2\pi i \Delta(m-n)}\left(\frac{\sin\left(\pi (m-n)(2K+1)/N\right)}{\sin\left(\pi (m-n)/N\right)}\right) \phi_i^{\Delta}[n]  \notag \\ 
    & = \frac{1}{N}\sum_n  e^{2\pi i \Delta(m-n)}\left(\frac{\sin\left(\pi (m-n)(2K+1)/N\right)}{\sin\left(\pi (m-n)/N\right)}\right) e^{2\pi i \Delta n}\psi_i[n] \\
    & = \frac{e^{2\pi i \Delta m}}{N}\sum_n  \left(\frac{\sin\left(\pi (m-n)(2K+1)/N\right)}{\sin\left(\pi (m-n)/N\right)}\right) \psi_i[n] \\
    &= e^{2\pi i \Delta m} \lambda^{(i)} \psi_i[m]\\
    &= \lambda^{(i)} \phi_i^{\Delta}[m]
\end{align}
where the second to last equality follows from the fact that $\psi_i$ is an eigensequence of $\mathcal{B}_K$ with eigenvalue $\lambda^{(i)}$. This concludes the proof.
\end{proof}

Note that the eigensequences $\{\phi_i^{\Delta}\}_{i}$ are what we refer to as QP-DPSSs in the main text.
It is clear from the development above that transforming the tophat taper in the way shown in \eqref{eqn:transformed_taper} would give us a taper that would succeed with probability 1.
In fact, it is known that the P-DPSSs have up to $2K+1$ eigensequences with eigenvalue 1.
While such a transformation would be amazing, we remark that it requires explicit knowledge of $\Delta$ which is not something that we can reasonably assume to possess (i.e. it would assume knowledge of the phase that we are trying to determine).
Nevertheless, we provide some further discussion below to allow the reader to understand the deeper connections between P-DPSSs and optimal quantum tapers.
In particular, we note that it may be possible to construct optimal tapers with greater efficiency by taking linear combinations of optimal tapers due to the possibility of the existence of a much more efficient description of the tapers.
In fact, if one is willing to give up on the goal of achieving a success probability of 1, there are possibly other tapers with a much more efficient representation.
From the development above in Thm.~\ref{th:Theorem1a} and Prop.~\ref{propos:PDPSS-evalue}, we have the following result.

\begin{corol}\label{co:m_qubits}
Given $N = 2^{l+m}$, where $l$ is the number of qubits needed to encode the target phase and $m$ is the number of extra qubits necessary to guarantee that with probability at least $1 -\epsilon$, the desired precision will be met. Using the QP-DPSS taper with the largest eigenvalue, we have
    \begin{equation}
        m = O\left(\log\log 1/\delta + \log\log 1/\epsilon\right).
    \end{equation}
\end{corol}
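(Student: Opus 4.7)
The plan is to marry the spectral correspondence of Proposition~\ref{propos:PDPSS-evalue} with the eigenvalue concentration bound of Theorem~\ref{th:Theorem1a} to turn a statement about the $i$-th eigenvalue of $\mathcal{B}_K$ into a sufficient condition on $m$. First, I would observe that the QP-DPSS taper with the largest eigenvalue attains success probability equal to $\lambda^{(1)}(B_\Delta)$, and Proposition~\ref{propos:PDPSS-evalue} identifies this with $\lambda^{(1)}(\mathcal{B}_K)$ since the two operators are intertwined by an invertible diagonal modulation that preserves the spectrum. So the question reduces to choosing the smallest $m$ such that $\lambda^{(1)}(\mathcal{B}_K) \geq 1-\epsilon$.

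Next, I would apply Theorem~\ref{th:Theorem1a}, which provides $\lambda^{(i)} \geq 1-\epsilon$ at the index $i = 2\lfloor NW\rfloor - \lceil R(N,\epsilon)\rceil$. Since $\lambda^{(1)}$ dominates every $\lambda^{(i)}$ with $i\geq 1$, it suffices to ensure this index is at least $1$:
\begin{equation*}
2\lfloor NW\rfloor \;\geq\; \lceil R(N,\epsilon)\rceil + 1 \; .
\end{equation*}
Substituting $N = 2^{\ell+m}$, $K = 2^{m-1}-1$, and $W = (2K+1)/(2N)$ makes the left-hand side equal to $2^m - 1$, so the task becomes verifying $2^m - 1 \gtrsim R(N,\epsilon)$.

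I would then bound the first summand of $R(N,\epsilon)$ as $O(\log N \cdot \log(1/\epsilon)) = O\bigl((\log(1/\delta)+m)\log(1/\epsilon)\bigr)$, reducing the sufficient condition to the implicit inequality
\begin{equation*}
2^m \;\gtrsim\; \bigl(\log(1/\delta) + m\bigr)\,\log(1/\epsilon) \; .
\end{equation*}
Taking $\log_2$ of both sides and using that $m$ is subdominant with respect to $\log(1/\delta)$ on the right gives $m = O\bigl(\log\log(1/\delta) + \log\log(1/\epsilon)\bigr)$, as claimed.

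The main obstacle will be controlling the second summand of $R(N,\epsilon)$, whose denominator $\log\bigl(N/(N-1)\bigr)$ vanishes as $N\to\infty$; a careful estimate is needed to confirm that, once the numerator is expanded, this contribution remains dominated by the first summand in the regime of large $N$, small $\delta$, and small $\epsilon$. Once a clean bound $R(N,\epsilon) = O(\log N \log(1/\epsilon))$ is in hand, the remaining logarithmic manipulation to isolate $m$ is routine.
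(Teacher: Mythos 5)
Your proposal is correct and follows essentially the same route as the paper's proof: identify the success probability with the top eigenvalue of $\mathcal{B}_K$ via Prop.~\ref{propos:PDPSS-evalue}, invoke Thm.~\ref{th:Theorem1a} with the eigenvalue index pinned to the top, bound $R(N,\epsilon)=O(\log N\log(1/\epsilon))$, and solve the resulting implicit inequality $2^m \gtrsim (\log(1/\delta)+m)\log(1/\epsilon)$ for $m$ (the paper does this last step via the inequality $\log(x+y)\le \log x + y/x$). The one step you flag as an open obstacle --- showing the second summand of \eqref{eqn:radius} is dominated by the first --- is precisely the step the paper dispatches with a one-line assertion that the first term dominates for all $N\ge 2$, so your version is, if anything, more candid about where the remaining work lies.
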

\begin{proof}
    We are interested in the largest eigenvalue, i.e., $\lambda^{(0)}$.
    Therefore, from \eqref{eq:eigenvalue-conc-karnick}, we first set
    $2\lfloor NW\rfloor - \lceil R(N, \epsilon) \rceil = 0$ to obtain $\lceil R(N, \epsilon) \rceil = 2\lfloor NW\rfloor \leq 2NW$. 
    For all $N \geq 2$, the first term in \eqref{eqn:radius} dominates and therefore, \eqref{eqn:radius} reduces to
    \begin{equation}
        R(N, \epsilon) = O\left(\log\left(\frac{1}{\epsilon}\right)\log N \right).
    \end{equation}
    This naturally implies that
    \begin{equation}
        NW = O\left(\log\left(\frac{1}{\epsilon}\right)\log N \right).
    \end{equation}
    Recall that $m$, the additional qubits required for a given precision, can be expressed as $\log\left(NW\right)$ because $W = \frac{2^m}{2^{l+m}}$ is the fraction of the lattice for which the taper has significant support. 
    This gives 
    \begin{align}
        2^m &= O\left(\log\left(\frac{1}{\epsilon}\right)\log N \right)\\
        &= O\left((m+l)\log(1/\epsilon)\right) \\
    &= O\left(m\log1/\epsilon + \log 1/\delta \log1/\epsilon\right) \\
    \implies m &= O\left(\log\left(m\log1/\epsilon + \log 1/\delta \log1/\epsilon\right)\right) \\
    & =  O\left(\log\left(\log 1/\delta \log1/\epsilon\right) + \frac{m\log 1/\epsilon}{\log 1/\delta \log 1/\epsilon}\right) \\
    &= O\left(\log\left(\log 1/\delta \log1/\epsilon\right) + \frac{m }{\log 1/\delta }\right) \\
    \implies m - O\left(\frac{m}{\log 1/\delta}\right)  &= O\left(\log\left(\log 1/\delta \log1/\epsilon\right)\right) \\
    \implies m &= \frac{O\left(\log\left(\log 1/\delta \log1/\epsilon\right)\right)}{1 - O\left(\frac{1}{\log 1/\delta}\right)} \\
    &= O\left(\log\log 1/\delta + \log\log 1/\epsilon\right).
    \end{align}
    The second equality follows by substituting $N = 2^{l+m}$. The fifth equality follows from the fact that $\log (x+y) \leq \log (x) + y/x$ for all $x, y > 0$. This concludes the proof.
\end{proof}

\section{Average-Case Optimal Tapers}\label{app:averagecase}
For ease of reference, we restate the
Lagrangian corresponding to the average case from the main text below:
\begin{equation}  \overline{\mathcal{L}}\left(|\phi\rangle, \lambda\right)
    =  \mathbb{E}_{\Delta \sim \mathcal{D}}\left[\sum_{j = -K}^{K}\left|\hat{\phi}\left(\Delta + \frac{j}{N}\right)\right|^2\right]
    - \lambda\left(\sum_{n=0}^{N-1}\left|\phi[n]\right|^2 - 1\right).\label{eq:avg-case-mainLagrange-app}
\end{equation}

As mentioned before, we focus on the case where $\Delta$ is uniformly distributed over the interval $[-\sfrac{1}{2N}, \sfrac{1}{2N}]$, i.e., $\mathcal{D}$ is the uniform distribution over this interval. It is reasonable to make this assumption since it is unlikely that the phase values are influenced in a way such that they lie away or close to the lattice points in a systematic way. However, it is possible to generalize the subsequent analysis for other probability distributions, such as the Gaussian distribution.

Please note that the analysis below follows a similar approach to that outlined in App.~\ref{app:QPDPSSDerivation}. Now, consider the following:
\begin{align}  & \overline{\mathcal{L}}\left(|\phi\rangle, \lambda\right) \nonumber\\
     & =  \int_{-\frac{1}{2N}}^{\frac{1}{2N}}\dd\Delta N \sum_{j = -K}^{K}\left|\hat{\phi}\left(\Delta + \frac{j}{N}\right)\right|^2
    -\lambda\left(\sum_{n=0}^{N-1}\left|\phi[n]\right|^2 - 1\right)\\
    & = \frac{1}{N}\int_{-\frac{1}{2N}}^{\frac{1}{2N}}\dd\Delta N\sum_{j=-K}^{K}\sum_{n, n'=0}^{N-1}\phi[n]\phi^*[n']e^{2\pi i (\Delta + \frac{j}{N})(n - n')} -\lambda\left(\sum_{n=0}^{N-1} \phi[n]\phi^*[n] - 1\right)\\
    & = \sum_{n, n'= 0}^{N-1}\phi[n]\phi^*[n'] \int_{-\frac{1}{2N}}^{\frac{1}{2N}}\dd\Delta e^{2\pi i \Delta(n-n')} \sum_{j=-K}^K e^{-2\pi i (j/N)(n'-n)} -\lambda\left(\sum_{n=0}^{N-1} \phi[n]\phi^*[n] - 1\right)\\
    & =  \sum_{n, n'= 0}^{N-1}\phi[n]\phi^*[n'] \frac{\sin\left(2\pi \left(1/2N\right)(n' - n)\right)}{\pi(n'-n)} \frac{\sin\left(\pi (n'-n)(2K+1)/N\right)}{\sin\left(\pi(n'-n)/N\right)} -\lambda\left(\sum_{n=0}^{N-1} \phi[n]\phi^*[n] - 1\right)\\
    & =  \sum_{n, n'= 0}^{N-1}\phi[n]\phi^*[n'] \underbrace{ \frac{\sin\left(\pi (n'-n)(2K+1)/N\right)}{\pi(n'-n)}}_{\eqqcolon C(n, n')} -\lambda\left(\sum_{n=0}^{N-1} \phi[n]\phi^*[n] - 1\right)\label{eq:avg-lag}
\end{align}
Then, to find stationary points of $\overline{\mathcal{L}}$, we set all of its partial derivatives to zero. First, we differentiate $\overline{\mathcal{L}}$ with respect to $\phi^*[m]$ for all $m\in \{0, \ldots, N-1\}$ and set it to zero:
\begin{align}
    & \frac{\partial\overline{\mathcal{L}}}{\partial\phi^*[m]} = \sum_{n=0}^{N-1}\phi[n] C(n, m) - \lambda\phi[m] = 0\\
    & \implies \sum_{n=0}^{N-1}\phi[n] C(n, m) = \lambda\phi[m]\label{eq:odd_eigenvalue_avg}.
\end{align}
Note that the above equation is an eigenvalue equation, and its eigenvectors are DPSSs, previously studied in classical signal processing~\cite{slepian1978prolate}. Next, differentiating $\mathcal{L}$ with respect to the Lagrange multiplier $\lambda$ and setting it to zero leads to the normalization constraint for the taper:
\begin{align}
    & \frac{\partial \overline{\mathcal{L}}}{\partial \lambda} = -\sum_{n=0}^{N-1}\phi[n]\phi^*[n] + 1 = 0\\
    &\implies \sum_{n=0}^{N-1}\phi[n]\phi^*[n] = 1,\label{eq:stat_2_cond_avg}
\end{align}

Next, by plugging all the stationary points $(|\phi\rangle, \lambda)$ that satisfy the conditions \eqref{eq:odd_eigenvalue_avg} and \eqref{eq:stat_2_cond_avg} into the objective function of $\overline{\mathcal{L}}$, we get the following:
\begin{align}
    \mathbb{E}_{\Delta \sim \mathcal{D}}\left[\sum_{j = -K}^{K}\left|\hat{\phi}\left(\Delta + \frac{j}{N}\right)\right|^2\right] = \lambda.
\end{align}
In other words, this means that the stationary point  $(|\phi\rangle, \lambda)$ that maximizes the objective function is the eigenvector of $C$ with the maximum eigenvalue $\lambda$. This corresponds to the DPSS with the maximum eigenvalue, which additionally corresponds to the success probability of the algorithm.

\section{Classical Signal Analysis Derivation of DPSS as Optimal Tapers} \label{app:classicaldpss}

Following~\cite{slepian1978prolate,karnik2021improved}, the discrete time Fourier transform (DTFT), $\hat{x} \in L_2([-\frac{1}{2},\frac{1}{2}])$, of a discrete signal, $x \in l_2(\mathbb{Z})$, is given by
\begin{equation}
  \hat{x}(f) \coloneqq \sum_{n = -\infty}^\infty x[n] e^{-2\pi i fn}, \;\;\; f \in \left[-\frac{1}{2}, \frac{1}{2}\right] \; ,
\end{equation}
with the inverse transform given by
\begin{equation}
  x[n] = \int_{-\frac{1}{2}}^\frac{1}{2} \hat{x}(f) e^{2\pi i fn} df \; .
\end{equation}
From these definitions, we can see that $x, x' \in l_2(\mathbb{Z})$ satisfy the Parseval-Plancherel inequality, 
$\langle x, x' \rangle_{l_2(\mathbb{Z})} = \langle \hat{x}, \hat{x}' \rangle_{L_2([-1/2,1/2])}$. Here, we denote $\langle \cdot, \cdot \rangle_{\mathcal{A}}$ as the inner product defined on the vector space $\mathcal{A}$.
We say that $x \in l_2(\mathbb{Z})$ is time-limited to $n \in \{ 0, \dots, N-1 \}$ if $x[n] = 0$ for all $n \in \mathbb{Z}\backslash\{ 0, \dots, N-1 \}$. Furthermore, we say that $x \in l_2(\mathbb{Z})$ is band-limited to $|f| \leq W$ if $\hat{x}(f) = 0$ for $|f| > W$, where $W \in (0, 1/2)$.

The aim here is to find discrete functions $x[n]$ that are time-limited to $n \in \{ 0, \dots, N-1 \}$ and are maximally band-limited to the frequency band $|f| \leq W$. We can formulate this as an optimization problem in the following way:
\begin{align}
\max_{x \in l_2(\mathbb{Z})}\quad& \int_{-W}^W |\hat{x}(f)|^2 df \label{eq:opt1}\\
    \text{subject to} \quad &  \left\Vert x\right\Vert^2_{l_{2}(\mathbb{Z})} = 1,\\
    \quad & x[n] = 0 \text{ for all } n \in \mathbb{Z}\backslash\{ 0, \dots, N-1 \}.
\end{align}
We can simplify things conceptually by defining the following two operators: the time-limiting operator $\mathcal{T}_N$ as
\begin{equation}
  (\mathcal{T}_Nx)[n] \coloneqq \left\{ \begin{array}{c} x[n] \;\; \mathrm{if} \;\; n \in \{ 0, \dots, N-1 \} \\
  0 \;\; \mathrm{if} \;\; n \in \mathbb{Z}\backslash \{0, \dots, N-1\} \end{array} \right .
\end{equation}
and the band-limiting operator $\mathcal{B}_W$ as
\begin{equation}
  (\mathcal{B}_Wx)[n] \coloneqq \sum_{l=-\infty}^\infty \frac{\sin[2W(l-n)]}{\pi(l-n)} x[l] \;\; \mathrm{for} \;\; n \in \mathbb{Z} \; .
\end{equation}
Note that the DTFT of the band-limiting operator is  $\widehat{\mathcal{B}_W x}(f) = \hat{x}(f)$ for $|f| \leq W$ and $\widehat{\mathcal{B}_Wx}(f) = 0$ for $|f| > W$.

With these definitions, we see that the integral in \eqref{eq:opt1} can be represented as
\begin{equation}
  \int_{-W}^W |\hat{x}(f)|^2 df = \langle \hat{x}, \widehat{\mathcal{B}_W x} \rangle_{L_{2}\left([-\frac{1}{2},\frac{1}{2}]\right)} \; \label{eq:obj-opt1}.
\end{equation}
From the Parseval-Plancherel theorem, we have
\begin{equation}
  \langle \hat{x}, \widehat{\mathcal{B}_W x} \rangle_{L_{2}\left([-\frac{1}{2},\frac{1}{2}]\right)} = \langle x, \mathcal{B}_W x \rangle_{l_2(\mathbb{Z})} \; .
\end{equation}
And then
\begin{eqnarray}
  \langle x, \mathcal{B}_W x \rangle_{l_2(\mathbb{Z})} & = & \langle \mathcal{T}_{N} x, \mathcal{B}_W \mathcal{T}_{N} x \rangle_{l_2(\mathbb{Z})} \\
  & = & \langle x, \mathcal{T}_{N} \mathcal{B}_W \mathcal{T}_{N} x \rangle_{l_2(\mathbb{Z})} \; ,\label{eq:selfad}
\end{eqnarray}
with the last expression, \eqref{eq:selfad}, holding because $\mathcal{T}_{N}$ is self-adjoint.

From \eqref{eq:opt1}, \eqref{eq:obj-opt1}, and  \eqref{eq:selfad}, we can see that the eigenvector, $\phi_0$, corresponding to the maximum eigenvalue, $\lambda_0$, of the matrix given by
\begin{equation}
\left[ \mathcal{T}_N \mathcal{B}_W \mathcal{T}_N \right]_{l,n} = \frac{\sin(2\pi W (l-n))}{\pi (l - n)} ; \;\; l,n \in \{0, \dots, N-1\}
\end{equation}
solves the maximization problem. This classical signal processing result gives the same optimal tapers that we find above for the quantum average-case optimal taper.

\section{Relationship Between Tapers and the Fourier Convolution Theorem}\label{app:Convolution}

Here, we aim to show a conceptually useful way to understand tapering, and hence tQPE, originating in the classical signal processing community. We begin by deriving the Fourier convolution theorem in a form useful for our discussion below. Starting with two functions of time, $\psi$ and $x$, we have that their pointwise product (denoted by '$\cdot$')
\begin{eqnarray}
\psi(t) \cdot x(t) & = & \left(\int_{-\infty}^\infty\hat{\psi}(g) e^{-2\pi i gt} dg\right) \cdot\left(\int_{-\infty}^\infty\hat{x}(f) e^{-2\pi i ft} df\right) \\
& = & \int_{-\infty}^\infty \int_{-\infty}^\infty \hat{\psi}(g) \hat{x}(f) e^{-2\pi i (g+f)t} dg df\\
& = & 
\int_{-\infty}^\infty \underbrace{\left( \int_{-\infty}^\infty \hat{\psi}(g) \hat{x}(h-g) dg \right)}_{(\hat{\psi}\; * \;\hat{x})} e^{-2\pi i ht}dh \; ,
\end{eqnarray}
where $*$ denotes a convolution.

Switching notation and denoting the continuous Fourier transform as $\mathcal{F}$, we see from above
\begin{eqnarray}
  \psi(t) \cdot x(t) & = & \mathcal{F}^{-1}\{\mathcal{F}\{\psi\}*\mathcal{F}\{x\}\} \\
  & \rightarrow & 
  \mathcal{F}\{\psi(t)\cdot x(t)\} = \mathcal{F}\{\psi\}*\mathcal{F}\{x \} \; .
\end{eqnarray}
Note that here, to be consistent with the signs of the complex exponentials, we use the opposite convention for our transform than is usual.

For tQPE, the amplitudes of the state encoded on the ancilla register before the inverse QFT are $\phi[n] e^{2\pi i \theta n}$. Here, $\theta \in [0,2\pi]$, but the taper, $\phi$ is a time-limited sequence with discrete support at equally spaced times, $n \Delta t$. We can represent this sequence as a continuous distribution, $\phi(t) \cdot\Sh(t)$, in the limit as $n \rightarrow \infty$, and $\Delta t \rightarrow 0$,
where $\Sh(t) = \sum_{n=0}^{N-1} \delta(t - n \Delta t)$ is a Dirac comb.
In this limit, $ \phi[n] e^{2\pi i \theta n} \rightarrow \left(\left( \phi(t) \cdot \Sh(t)\right) \cdot e^{2\pi i \theta t}\right) = \left(\left( \phi(t) \sum_{n=0}^{N-1} \delta(t - n \Delta t)\right) \cdot e^{2\pi i \theta t}\right)= \left(\left( \sum_{n=0}^{N-1} \phi(t) \delta(t - n \Delta t)\right) \cdot e^{2\pi i \theta t}\right)$. To apply the Fourier convolution theorem, we take $x(t) = e^{2\pi i \theta t}$ giving $\mathcal{F}\{x\}(f) = \delta(\theta - f)$, and the discrete and finite in time distribution, $\psi(t) = \sum_{n=0}^{N-1} \phi(t) \delta(t - n \Delta t)$. This then gives us
\begin{eqnarray}
\mathcal{F}\{\psi\}(f) & = & \int_{-\infty}^\infty \sum_{n=0}^{N-1} \phi(t) \delta(t - n \Delta t) e^{-2\pi i tf} dt \\
& = & \sum_{n=0}^{N-1} \int_{-\infty}^\infty \phi(t)  e^{-2\pi i tf} \delta(t - n \Delta t) dt \\
& = & 
\sum_{n=0}^{N-1} \phi[n] e^{-2\pi i n \Delta t f} \; .
\end{eqnarray}

Rescaling $f \Delta t \rightarrow f$ and similarly for $\theta$, this gives us, 
\begin{align}
\mathcal{F}\{\psi(t) \cdot x(t)\}(f) &= \left\{\mathcal{F}\{\psi\}*\mathcal{F}\{x\} \right\}(f) \\
&=  \left\{\left( \sum_{n=0}^{N-1} \phi[n] e^{-2\pi i n f} \right) * \delta(\theta-f)\right\}(f) \\
&= \sum_{n=0}^{N-1} \phi[n] e^{-2\pi i n (\theta - f)} \\
&= \mathcal{F}\{\phi\}(\theta - f)
\end{align}

Switching back to our previous notation, for tQPE, we have
\begin{equation}
  \mathcal{F}\{\psi(t) \cdot x(t)\} = \hat\phi(\theta - f) \; ,
\end{equation}
the coefficient in (\ref{eq:taper-tQPE-final-state}). tQPE samples this with the QFT at discrete frequencies, $k$, as $\mathrm{QFT}^\dagger[\psi(t) \cdot x(t)] = \hat\phi_k(\theta - k/N)$.

\section{Proof of Theorem \ref{thm:uncomputing}}\label{app:uncomputing}

In this appendix we prove Theorem~\ref{thm:uncomputing} of Sec.~\ref{sec:uncomputing}. For convenience we reproduce the theorem statement here. For the definitions refer to Sec.~\ref{sec:uncomputing}.
\setcounter{thm}{3}
\begin{reptheorem}{4}
    The following holds:
    \begin{equation}
        \frac{1}{2}\left \Vert \mathcal{T}_{CA} - \operatorname{Tr}_B \circ \mathcal{Q}^{\dagger}_{BA} \circ \widetilde{\mathcal{T}}_{CB} \circ \mathcal{Q}_{BA} \circ \mathcal{A}_{CA} \right\Vert_{\diamond} \leq 2 L_T \delta + 4\epsilon.
    \end{equation}
    Here, $\operatorname{Tr}_B$ denotes the partial-trace channel, where the register $B$ is traced out.
\end{reptheorem}
\begin{proof}
    Consider that
    \begin{multline}
        \frac{1}{2}\left \Vert \mathcal{T}_{CA} - \operatorname{Tr}_B \circ \mathcal{Q}^{\dagger}_{BA} \circ \widetilde{\mathcal{T}}_{CB} \circ \mathcal{Q}_{BA}\circ \mathcal{A}_{CA} \right\Vert_{\diamond} \\
        = \sup_{\rho_{RCA}} \frac{1}{2}\left \Vert \left(\mathcal{I}_{R}\otimes \mathcal{T}_{CA}\right) (\rho_{RCA}) - \left(\mathcal{I}_{R}\otimes \operatorname{Tr}_B \circ \mathcal{Q}^{\dagger}_{BA} \circ \widetilde{\mathcal{T}}_{CB} \circ \mathcal{Q}_{BA}\circ \mathcal{A}_{CA}\right)(\rho_{RCA}) \right\Vert_{1}\label{eq:dia-to-one-norm},
    \end{multline}
where $R$ is a reference system of arbitrary dimension and $\mathcal{I}_R$ is the identity channel acting on $R$.
Expanding the first term of the right-hand side of the above equation, we get
\begin{align}
    \left(\mathcal{I}_R \otimes \mathcal{T}_{CA}\right)(\rho_{RCA}) & = (I_R \otimes T_{CA}) \rho_{RCA} (I_R \otimes T^{\dagger}_{AC})\\
    & = \left(I_R \otimes \sum_{j_1} T(\theta_{j_1})_{C} \otimes |\psi_{j_1}\rangle\langle\psi_{j_1}|_A\right) \rho_{RCA} \left(I_R \otimes \sum_{j_2} T^{\dagger}(\theta_{j_2})_{C} \otimes |\psi_{j_2}\rangle\langle\psi_{j_2}|_A\right)\\
    & = \left( \sum_{j_1}  I_R \otimes T(\theta_{j_1})_{C} \otimes |\psi_{j_1}\rangle\langle\psi_{j_1}|_A\right) \rho_{RCA} \left( \sum_{j_2}I_R \otimes T^{\dagger}(\theta_{j_2})_{C} \otimes |\psi_{j_2}\rangle\langle\psi_{j_2}|_A\right)\label{eq:first-term-exp-1}.
\end{align}
This implies that the Kraus operator, which we denote by $V_{RCA}$, of the above channel is
\begin{align}
    V_{RCA} = \sum_{j_1}  I_R \otimes T(\theta_{j_1})_{C} \otimes |\psi_{j_1}\rangle\langle\psi_{j_1}|_A.
\end{align}
Therefore, this channel can be rewritten as
\begin{equation}
    \left(\mathcal{I}_R \otimes \mathcal{T}_{CA}\right)(\rho_{RCA}) = V_{RCA} \rho_{RCA} V_{RCA}^{\dagger}\label{eq:tar-channel-K-form}.
\end{equation}

Similarly, we get the Kraus representation of the second channel in~\eqref{eq:dia-to-one-norm}. For this, consider the following:
\begin{align}
    & \mathcal{I}_R \otimes \operatorname{Tr}_B \circ \mathcal{Q}^{\dagger}_{BA} \circ \widetilde{\mathcal{T}}_{CB} \circ \mathcal{Q}_{BA} \circ  \mathcal{A}_{CA} (\rho_{RCA}) \notag\\
    & = I_R \otimes \operatorname{Tr}_B \circ \mathcal{Q}^{\dagger}_{BA} \circ \widetilde{\mathcal{T}}_{CB} \circ \mathcal{Q}_{BA} \circ  (\rho_{RCA} \otimes |0\rangle\langle0|_B)\\
    & = I_R \otimes \operatorname{Tr}_B \circ \mathcal{Q}^{\dagger}_{BA} \circ \widetilde{\mathcal{T}}_{CB}\left( Q_{BA}\left(\rho_{RCA} \otimes |0\rangle\langle0|_B\right)Q^{\dagger}_{AB}\right)\\
    & = I_R \otimes \operatorname{Tr}_B \circ \mathcal{Q}^{\dagger}_{BA} \circ \widetilde{\mathcal{T}}_{CB}\Bigg( \sum_{j_1, x_1, y_1} f_{j_1x_1y_1}  |x_1\rangle\langle y_1|_B \otimes |\psi_{j_1}\rangle\langle\psi_{j_1}|_A \left(\rho_{RCA} \otimes |0\rangle\langle0|_B \right)\sum_{j_2, x_2, y_2} f^*_{j_2x_2y_2}  |y_2\rangle\langle x_2|_B \otimes |\psi_{j_2}\rangle\langle\psi_{j_2}|_A\Bigg)\\
    & = I_R \otimes \operatorname{Tr}_B \circ \mathcal{Q}^{\dagger}_{BA} \circ \widetilde{\mathcal{T}}_{CB}\Bigg( \sum_{j_1, j_2 x_1, x_2} f_{j_1x_10}f^*_{j_2x_20} \langle\psi_{j_1}|\rho_{RCA}|\psi_{j_2}\rangle \otimes   |x_1\rangle\langle x_2|_B \otimes |\psi_{j_1}\rangle\langle\psi_{j_2}|_A\Bigg)\\
    & = I_R \otimes \operatorname{Tr}_B \circ \mathcal{Q}^{\dagger}_{BA} \Bigg( \sum_{j_1, j_2 x_1, x_2} f_{j_1x_10}f^*_{j_2x_20} T\bigl(g(x_1)\bigr)_{C}\langle\psi_{j_1}|\rho_{RCA}|\psi_{j_2}\rangle T^{\dagger}\bigl(g(x_2)\bigr)_{C} \otimes   |x_1\rangle\langle x_2|_B \otimes |\psi_{j_1}\rangle\langle\psi_{j_2}|_A\Bigg)\\
    & = I_R \otimes \operatorname{Tr}_B \Bigg( \sum_{j_1, j_2 x_1, x_2, y_3, y_4} f_{j_1x_10}f^*_{j_2x_20} f^*_{j_1x_1 y_3}f_{j_2x_2 y_4} T\bigl(g(x_1)\bigr)_{C}\langle\psi_{j_1}|\rho_{RCA}|\psi_{j_2}\rangle \nonumber \\
    & \qquad \qquad T^{\dagger}\bigl(g(x_2)\bigr)_{C} \otimes   |y_3\rangle\langle y_4|_B \otimes |\psi_{j_1}\rangle\langle\psi_{j_2}|_A\Bigg)\\
    & = I_R \otimes \Bigg( \sum_{j_1, j_2 x_1, x_2, y_3} f_{j_1x_10}f^*_{j_2x_20} f^*_{j_1x_1 y_3}f_{j_2x_2 y_3} T\bigl(g(x_1)\bigr)_{C}\langle\psi_{j_1}|\rho_{RCA}|\psi_{j_2}\rangle T^{\dagger}\bigl(g(x_2)\bigr)_{C} \otimes |\psi_{j_1}\rangle\langle\psi_{j_2}|_A\Bigg)\\
     & =  \sum_{j_1, j_2}\Bigg( \sum_{x_1, x_2, y_3} f_{j_1x_10}f^*_{j_2x_20} f^*_{j_1x_1 y_3}f_{j_2x_2 y_3} \left(I_R \otimes T\bigl(g(x_1)\bigr)_{C}\right)\langle\psi_{j_1}|\rho_{RCA}|\psi_{j_2}\rangle \left(I_R \otimes T^{\dagger}\bigl(g(x_2)\bigr)_{C}\right)  \otimes |\psi_{j_1}\rangle\langle\psi_{j_2}|_A\Bigg)\label{eq:second-term-exp}\\
     & = \sum_{y_3} \Bigg(\underbrace{\sum_{j_1, x_1} f_{j_1x_10} f^*_{j_1x_1 y_3} \left(I_R \otimes T\bigl(g(x_1)\bigr)_{C} \otimes |\psi_{j_1}\rangle \langle\psi_{j_1}|_A\right)}_{\eqqcolon K_{y_3}}\Bigg)\nonumber \\ 
     & \qquad \qquad \rho_{RCA} \Bigg(\underbrace{\sum_{j_2, x_2} f^*_{j_2x_20} f_{j_2x_2 y_3}\left(I_R \otimes T^{\dagger}\bigl(g(x_2)\bigr)_{C} \otimes |\psi_{j_2}\rangle \langle\psi_{j_2}|_A\right)}_{K^{\dagger}_{y_3}} \Bigg)\\
     & = \sum_{y_3} K_{y_3} \rho_{RCA} K^{\dagger}_{y_3}.\label{eq:algo-channel-K-form}
\end{align}

Plugging~\eqref{eq:tar-channel-K-form} and~\eqref{eq:algo-channel-K-form} into~\eqref{eq:dia-to-one-norm} and removing the system labels for simplicity, we get
\begin{align}
    \frac{1}{2}\left \Vert \mathcal{T}_{AC} - \operatorname{Tr}_B \circ \mathcal{Q}^{\dagger}_{BA} \circ \widetilde{\mathcal{T}}_{CB} \circ \mathcal{Q}_{BA}\circ \mathcal{A}_{CA} \right\Vert_{\diamond} = \sup_{\rho} \frac{1}{2} \left \Vert V \rho V^{\dagger} - \sum_{y_3} K_{y_3} \rho K^{\dagger}_{y_3} \right \Vert_1.
\end{align}
In the perfect QPE case, uncomputation is exact, resulting in register $B$ being perfectly reset to the all-zeros state after applying $Q^{\dagger}$ (see Figure~\ref{fig:uncomputing}). Consequently, in this scenario, there exists only a single Kraus operator (corresponding to $y_3=0$), and this Kraus operator is the unitary operator $V$. For the imperfect QPE case, we expect this operator to remain approximately unitary, as we will demonstrate next. 

Continuing,
\begin{align}
    \frac{1}{2}\left \Vert \mathcal{T}_{AC} - \operatorname{Tr}_B \circ \mathcal{Q}^{\dagger}_{BA} \circ \widetilde{\mathcal{T}}_{CB} \circ \mathcal{Q}_{BA}\circ \mathcal{A}_{CA} \right\Vert_{\diamond} & = \sup_{\rho} \frac{1}{2}\left \Vert V \rho V^{\dagger} - K_0 \rho K_0^{\dagger} - \sum_{y_3 \neq 0} K_{y_3} \rho K^{\dagger}_{y_3} \right \Vert_1\\
    & \leq \sup_{\rho} \frac{1}{2}\left(\left \Vert V \rho V^{\dagger} - K_0 \rho K_0^{\dagger}  \right \Vert_1 +  \left \Vert \sum_{y_3 \neq 0} K_{y_3} \rho K^{\dagger}_{y_3} \right \Vert_1\right)\\
    & = \sup_{\rho} \frac{1}{2}\left \Vert V \rho V^{\dagger} - K_0 \rho K_0^{\dagger}  \right \Vert_1 + \sup_{\rho}  \frac{1}{2} \left \Vert \sum_{y_3 \neq 0} K_{y_3} \rho K^{\dagger}_{y_3} \right \Vert_1\label{eq:V-K0-Krest},
\end{align}
where the first inequality follows from the triangle inequality.

Let us begin by bounding the first term of the above equation:
\begin{align}
    \sup_{\rho} \left \Vert V \rho V^{\dagger} - K_0 \rho K_0^{\dagger}  \right \Vert_1 & = \sup_{\rho}\left\Vert V \rho V^{\dagger} - K_0\rho V^{\dagger} + K_0 \rho V^{\dagger} - K_0 \rho K_0^{\dagger} \right\Vert_1\\
    & \leq \sup_{\rho}\left\Vert V \rho V^{\dagger} - K_0 \rho V^{\dagger}\right\Vert_1 + \sup_{\rho}\left\Vert K_0 \rho V^{\dagger} - K_0 \rho K_0^{\dagger} \right\Vert_1\\
    & = \sup_{\rho}\left\Vert (V-K_0)\rho V^{\dagger}\right\Vert_1 + \sup_{\rho}\left\Vert K_0 \rho (V^{\dagger} -K_0^{\dagger})\right\Vert_1\\
    & \leq \sup_{\rho}\left\Vert V-K_0\right\Vert \left\Vert\rho\right\Vert_1 \left \Vert V^{\dagger}\right\Vert + \sup_{\rho}\left\Vert K_0\right\Vert \left\Vert\rho\right\Vert_1 \left \Vert V^{\dagger} -K_0^{\dagger}\right\Vert\\
    & = \left\Vert V-K_0\right\Vert + \left\Vert K_0\right\Vert \left \Vert V-K_0\right\Vert\\
    & = (1+\left\Vert K_0\right\Vert)\left\Vert V-K_0\right\Vert\label{eq:U-K-bound},
\end{align}
where the first inequality follows from the triangle inequality and the second inequality follows from the Hölder's inequality. Next, we bound the quantities $\left\Vert K_0\right\Vert$ and $\left\Vert V-K_0\right\Vert$ from above one by one:
\begin{align}
    \left\Vert K_0\right\Vert & = \left \Vert \sum_{j_1, x_1} f_{j_1x_10} f^*_{j_1x_10} \left(I_R \otimes T(g(x_1))_{C} \otimes |\psi_{j_1}\rangle \langle\psi_{j_1}|_A\right) \right\Vert\\
    & = \max_{j_1} \left\Vert \sum_{x_1} |f_{j_1x_10}|^2  \left(I_R \otimes T(g(x_1))_{C}\right) \right\Vert\\
    & \leq \max_{j_1} \sum_{x_1} |f_{j_1x_10}|^2  \left\Vert \left(I_R \otimes T(g(x_1))_{C}\right) \right\Vert\\
    & \leq 1\label{eq:bounding-K0}.
\end{align}
The second equality follows from Lemma~\ref{lem:direct-sum-norm}, the first inequality follows from the triangle inequality, and the last inequality follows from the fact that for every $x_1$, we have $|f_{j_1x_10}|^2 \leq 1$. We now bound $\left\Vert V-K_0\right\Vert$ from above:
\begin{align}
    & \left\Vert V-K_0\right\Vert\notag\\
    & = \left\Vert \sum_{j_1}I_R \otimes T(\theta_{j_1})_{C} \otimes |\psi_{j_1}\rangle\langle\psi_{j_1}|_A-\sum_{j_1, x_1} f_{j_1x_10} f^*_{j_1x_10} \left(I_R \otimes T(g(x_1))_{C} \otimes |\psi_{j_1}\rangle \langle\psi_{j_1}|_A\right)\right\Vert\\
    & = \left\Vert \sum_{j_1}T(\theta_{j_1})_{C} \otimes |\psi_{j_1}\rangle\langle\psi_{j_1}|_A-\sum_{j_1, x_1} |f_{j_1x_10}|^2 \left(T(g(x_1))_{C} \otimes |\psi_{j_1}\rangle \langle\psi_{j_1}|_A\right)\right\Vert\\
    & = \max_{j_1}\left\Vert T(\theta_{j_1})_{C}-\sum_{x_1} |f_{j_1x_10}|^2 T(g(x_1))_{C}\right\Vert \label{eq:step-error}\\
    &= \max_{j_1}\left\Vert \sum_{x_1} |f_{j_1x_10}|^2 T(\theta_{j_1})_{C}-\sum_{x_1} |f_{j_1x_10}|^2 T(g(x_1))_{C}\right\Vert\\
    & \leq \max_{j_1}\sum_{x_1} |f_{j_1x_10}|^2 \left \Vert T(\theta_{j_1})_{C}-T(g(x_1))_{C}\right\Vert\\
    & = \max_{j_1}\left(\sum_{x_1:|\theta_{j_1}-g(x_1)|\leq \delta} |f_{j_1x_10}|^2 \left \Vert T(\theta_{j_1})_{C}-T(g(x_1))_{C}\right\Vert + \sum_{x_1:|\theta_{j_1}-g(x_1)|> \delta} |f_{j_1x_10}|^2 \left \Vert T(\theta_{j_1})_{C}-T(g(x_1))_{C}\right\Vert \right)\\
    & \leq  \max_{j_1}\left(\sum_{x_1:|\theta_{j_1}-g(x_1)|\leq \delta} |f_{j_1x_10}|^2 (L_T \delta) + \sum_{x_1:|\theta_{j_1}-g(x_1)|> \delta} |f_{j_1x_10}|^2 (2) \right)\\
    & \leq  \max_{j_1}\left( L_T \delta + 2\epsilon \right)\\
    & = L_T \delta + 2\epsilon,\label{eq:bounding-mvt} 
\end{align}
where the first inequality follows from the multiplicativity of the spectral norm under tensor products, the second equality follows from Lemma~\ref{lem:direct-sum-norm}, the second inequality follows from the triangle inequality, the third inequality follows from~\eqref{eq:lipschitz}, and the last inequality follows from the following fact about the error probability of the QPE algorithm used:
\begin{equation}
    \sum_{x_1:|\theta_{j_1}-g(x_1)|> \delta} |f_{j_1x_10}|^2 \leq \epsilon.
\end{equation}

To this end, using the bounds given by~\eqref{eq:bounding-K0} and~\eqref{eq:bounding-mvt} in~\eqref{eq:U-K-bound}, we get
\begin{align}
    \sup_{\rho} \left \Vert V \rho V^{\dagger} - K_0 \rho K_0^{\dagger}  \right \Vert_1 \leq 2 L_T \delta + 4\epsilon\label{eq:first-term-bounding}.
\end{align}

Now that we have bounded the first term of~\eqref{eq:V-K0-Krest}, we focus on bounding its second term. For this, consider the following:
\begin{align}
\left \Vert \sum_{y_3 \neq 0} K_{y_3}^{\dagger}K_{y_3} \right \Vert 
    & = \left \Vert I - K_0^{\dagger} K_0  \right \Vert\\
    & = \left \Vert V^{\dagger}V - K_0^{\dagger} K_0  \right \Vert\\
    & = \left \Vert V^{\dagger}V -  K_0^{\dagger} V + K_0^{\dagger} V -  K_0^{\dagger} K_0   \right \Vert\\
    & \leq \left \Vert V^{\dagger}V -  K_0^{\dagger} V \right \Vert + \left \Vert K_0^{\dagger} V -  K_0^{\dagger} K_0 \right \Vert\\
    & \leq \left \Vert V^{\dagger} - K_0^{\dagger}   \right \Vert \left \Vert V \right \Vert + \left \Vert K_0^{\dagger} \right \Vert \left \Vert V - K_0  \right \Vert \\
    & \leq 2 \left \Vert V- K_0  \right \Vert\\
    & \leq 2 L_T \delta + 4\epsilon\label{eq:bounding-Krest},
\end{align}
where the first inequality follows from the triangle inequality, the second inequality follows from the submultiplicativity property of the spectral norm, the third inequality follows from~\eqref{eq:bounding-K0}, and the last inequality follows from~\eqref{eq:bounding-mvt}. Next, we have the following equality:
\begin{align}
    \sup_{\rho} \left\Vert\sum_{y_3 \neq 0 } K_{y_3} \rho K^{\dagger}_{y_3}\right\Vert_1 = \left\Vert\sum_{y_3 \neq 0 }  K^{\dagger}_{y_3}K_{y_3}\right\Vert\label{eq:y_3_decompose}.
\end{align}
This follows due to the following fact, where each $A_i$ is some linear operator:
\begin{align}
    \left\Vert\sum_{i}  A^{\dagger}_{i}A_{i}\right\Vert & = \sup_{\rho} \operatorname{Tr}\left[\sum_{i}  A^{\dagger}_{i}A_{i} \rho\right]\label{eq:op-trace-first}\\
    & = \sup_{\rho} \sum_{i} \operatorname{Tr}\left[  A^{\dagger}_{i}A_{i} \rho\right]\\
    & = \sup_{\rho} \sum_{i} \operatorname{Tr}\left[  A_{i} \rho A^{\dagger}_{i}\right]\\
    & = \sup_{\rho}  \operatorname{Tr}\left[  \sum_{i}A_{i} \rho A^{\dagger}_{i}\right]\\
    & = \sup_{\rho}  \left\Vert \sum_{i}A_{i} \rho A^{\dagger}_{i}\right\Vert_1.\label{eq:op-trace-last}
\end{align}
Using the identity given by~\eqref{eq:y_3_decompose}, and the bound given in~\eqref{eq:bounding-Krest}, we get
\begin{align}
    \sup_{\rho} \left \Vert \sum_{y_3 \neq 0} K_{y_3} \rho K_{y_3}^{\dagger} \right\Vert_1 \leq 2 L_T \delta + 4\epsilon\label{eq:y3-not-0-terms}. 
\end{align}

Finally, combining the bounds of~\eqref{eq:first-term-bounding} and~\eqref{eq:y3-not-0-terms} to bound~\eqref{eq:V-K0-Krest} from above, we get
\begin{equation}
    \frac{1}{2}\left \Vert \mathcal{T}_{AC} - \operatorname{Tr}_B \circ \mathcal{Q}^{\dagger}_{BA} \circ \widetilde{\mathcal{T}}_{CB} \circ \mathcal{Q}_{BA}\circ \mathcal{A}_{CA} \right\Vert_{\diamond} \leq 2 L_T \delta + 4\epsilon.
\end{equation}
This concludes the proof.
\end{proof}

\section{Useful Lemmas}
\begin{lem}[Norm of direct-sum of operators]\label{lem:direct-sum-norm}
    Let $A = A_1 \bigoplus \cdots \bigoplus A_n$, where each $A_i$ is a linear operator and $n$ is some natural number. Then the following is true:
    \begin{equation}
        \left \Vert A \right \Vert = \max_i \left \Vert A_i \right \Vert.
    \end{equation}
\end{lem}
\begin{proof}
    For each $i \in \{1, \ldots, n\}$, consider the following singular value decomposition: $A_i = U_i D_i V_i^{\dagger}$, where $U_i$ and $V_i$ are unitaries and $D_i$ is a diagonal matrix consisting of singular values. Then the singular value decomposition of $A = UDV^{\dagger}$, where $U = U_1 \bigoplus \cdots \bigoplus U_n$, $D = D_1 \bigoplus \cdots \bigoplus D_n$, and $V = V_1 \bigoplus \cdots \bigoplus V_n$. Using this and the fact that the spectral norm of a matrix is equal to its maximum singular value, we get $\left \Vert A \right \Vert = \max_i \left \Vert A_i \right \Vert$. 
\end{proof}

\begin{lem}[Bound on diamond distance between unitary channels; Alternate simple proof of Lemma~12.6 of~\cite{10.1145/276698.276708}]\label{lem:dia-to-op}
    Let $\mathcal{U}(\cdot) \coloneqq U(\cdot)U^{\dagger}$ and $\mathcal{V}(\cdot) \coloneqq V(\cdot)V^{\dagger}$ be two unitary channels with unitaries $U$ and $V$, respectively. Then the following holds:
    \begin{equation}
        \frac{1}{2}\left \Vert \mathcal{U}-\mathcal{V}\right\Vert_{\diamond} \leq \left \Vert U-V \right\Vert.
    \end{equation}
\end{lem}
\begin{proof}
\begin{align}
    \frac{1}{2}\left \Vert \mathcal{U}-\mathcal{V}\right\Vert_{\diamond}
    & = \sup_{\rho} \frac{1}{2}\left \Vert (\mathcal{I} \otimes \mathcal{U})(\rho)- (\mathcal{I} \otimes \mathcal{V})(\rho)\right\Vert_1\\
    & = \sup_{\rho} \frac{1}{2}\left \Vert (I \otimes U)\rho(I \otimes U^{\dagger})- (I \otimes V)\rho (I \otimes V^{\dagger})\right\Vert_1\\
    & = \sup_{\rho} \frac{1}{2}\Big \Vert (I \otimes U)\rho(I \otimes U^{\dagger}) - (I \otimes U)\rho(I \otimes V^{\dagger}) + (I \otimes U)\rho(I \otimes V^{\dagger})- (I \otimes V)\rho (I \otimes V^{\dagger})\Big\Vert_1\\
    & \leq  \sup_{\rho} \frac{1}{2}\left \Vert (I \otimes U)\rho(I \otimes U^{\dagger}) - (I \otimes U)\rho(I \otimes V^{\dagger})\right\Vert_1 + \frac{1}{2}\left\Vert(I \otimes U)\rho(I \otimes V^{\dagger})- (I \otimes V)\rho (I \otimes V^{\dagger})\right\Vert_1\\
    & = \sup_{\rho} \frac{1}{2}\left \Vert (I \otimes U)\rho(I \otimes U^{\dagger} - I \otimes V^{\dagger})\right\Vert_1 + \frac{1}{2}\left\Vert(I \otimes U - I \otimes V)\rho(I \otimes V^{\dagger})\right\Vert_1\\
    & \leq  \sup_{\rho} \frac{1}{2}\left \Vert (I \otimes U) \right \Vert \left \Vert\rho \right \Vert_1 \left \Vert(I \otimes U^{\dagger} - I \otimes V^{\dagger})\right\Vert + \frac{1}{2}\left\Vert(I \otimes U - I \otimes V)\right\Vert\left\Vert\rho\right\Vert_1\left\Vert(I \otimes V^{\dagger})\right\Vert\\
    & = \frac{1}{2}  \left \Vert(I \otimes U^{\dagger} - I \otimes V^{\dagger})\right\Vert  + \frac{1}{2}\left\Vert(I \otimes U - I \otimes V)\right\Vert\\
    & = \frac{1}{2}  \left \Vert(I \otimes U - I \otimes V)\right\Vert  + \frac{1}{2}\left\Vert(I \otimes U - I \otimes V)\right\Vert\\
    & = \frac{1}{2}  \left \Vert U - V\right\Vert  + \frac{1}{2}\left\Vert U - V\right\Vert\\
    & = \left \Vert U - V\right\Vert.
\end{align}
The first inequality follows from the triangle inequality, the second inequality follows from the generalized Hölder's inequality, and the third inequality holds due to the multiplicativity property of the trace norm under tensor products.
\end{proof}

\end{document}